\newcommand\ketbra[2][]{%
  \def\ketbra@firstarg{#1}%
  \def\ketbra@secondarg{#2}%
  \ifx\ketbra@firstarg\empty%
    \left\lvert\ketbra@secondarg\middle\rangle \! \middle\langle \ketbra@secondarg\right\rvert%
  \else%
    \left\lvert\ketbra@firstarg\middle\rangle \! \middle\langle\ketbra@secondarg\right\rvert%
  \fi%
}
\def\metadef#1#2{%
  \def\metadef@iter##1{\ifx##1;\else \expandafter\newcommand\csname#1\endcsname{#2}\expandafter\metadef@iter\fi}%
  \expandafter\metadef@iter%
}
\newtheorem{theorem}              {Theorem}
\numberwithin{theorem}{section}
\newtheorem{claim}       [theorem] {Claim}         
\newtheorem{lemma}       [theorem] {Lemma}
\newtheorem{corollary}   [theorem] {Corollary}
\newtheorem{proposition} [theorem] {Proposition}
\newtheorem{definition}  [theorem] {Definition}
\newtheorem*{theorem*}{Theorem}
\newcommand{\eps}{\varepsilon}
\newcommand{\poly}{\mathsf{poly}}
\newcommand{\negl}{\mathsf{negl}}
\newcommand{\E}{\mathop{\mathbb{E}}}
\newcommand{\abs}[1]{\left|#1\right|}
\newcommand{\tld}[1]{\widetilde{#1}}
\renewcommand{\set}[1]{\left\{#1\right\}}
\def\sseq{\subseteq}
\newcommand{\cala}{{\mathcal A}}
\newcommand{\calb}{{\mathcal B}}
\newcommand{\cald}{{\mathcal D}}
\newcommand{\calk}{{\mathcal K}}
\newcommand{\call}{{\mathcal L}}
\newcommand{\setbrac}[1]{\left\{#1\right\}}
\newcommand{\blt}{\setbrac{0,1}}
\newcommand{\bbn}{{\mathbb N}}
\newcommand{\bbs}{{\mathbb S}}
\newcommand\ExpOwsg[4][]{%
  \def\ExpOwsg@firstarg{#1}%
  \ifx\ExpOwsg@firstarg\empty%
    \mathsf{Exp}_{#2,#3}(#4)
  \else%
    \mathsf{Exp}_{#2,#3,#1}(#4)
  \fi%
}
\newcommand{\CNOT}{\mathsf{CNOT}}
\def\Exp{\mathop{{}\mathbb{E}}}
\renewcommand{\P}{\mathsf{P}}
\newcommand{\NP}{\mathsf{NP}}
\newcommand{\BQP}{\mathsf{BQP}}
\newcommand{\PP}{\mathsf{PP}}
\newcommand{\PostBQP}{\mathsf{PostBQP}}
\newcommand{\PromPostBQP}{\mathsf{PromisePostBQP}}
\newcommand{\PromPP}{\mathsf{PromisePP}}
\newcommand{\Ver}{\mathsf{Ver}}
\newcommand{\Samp}{\mathsf{Samp}}
\newcommand{\A}{\mathcal{A}}
\newcommand{\K}{\mathcal{K}}
\newcommand{\Haar}{\text{\normalfont Haar}}
\newcommand{\HaarN}[1]{\text{\normalfont Haar}(#1)}
\newcommand{\Dkeycond}[1]{(\Samp \, |\, #1 )^{\mathsf{key}}}
\newcommand{\Sampx}[1]{\mathcal{S}^{\leq #1}}
\newcommand{\probabilityfont}[1]{\mathbf{#1}}
\newcommand{\rndY}{\probabilityfont{Y}}
\newcommand{\rndb}{\probabilityfont{b}}
\newcommand{\rndB}{\probabilityfont{B}}
\newcommand{\flws}{\sim}
\newcommand{\dTV}{\mathsf{d}_{\mathsf{TV}}}
\def\anon{0}
\title{On the Computational Hardness of Quantum One-Wayness
}
\author{
Bruno P. Cavalar}
\email{bruno.cavalar@cs.ox.ac.uk}
\affiliation{
    Department of Computer Science, University of Oxford,
    Oxford,
    UK
}
\author{Eli Goldin}
\affiliation{
    Department of Computer Science, New York University, 
    New York,
    USA
}
\email{eli.goldin@nyu.edu}
\author{Matthew Gray}
\affiliation{
    Department of Computer Science, University of Oxford,
    Oxford,
    UK
}
\email{matthew.gray@magd.ox.ac.uk}
\author{Peter Hall}
\affiliation{
    Department of Computer Science, New York University, 
    New York,
    USA
}
\email{pf2184@nyu.edu}
\author{Yanyi Liu}
\affiliation{
    Department of Computer Science, Cornell Tech,
    USA
}
\email{yl2866@cornell.edu}
\author{Angelos Pelecanos}
\affiliation{
    Department of Computer Science,
    UC Berkeley,
    USA
}
\email{apelecan@berkeley.edu}
\begin{document}

\maketitle

\begin{abstract}

There is a large body of work studying what forms of computational hardness are needed to realize classical cryptography. In particular, one-way functions and pseudorandom generators can be built from each other, and thus require equivalent computational assumptions in order to be realized. Furthermore, the existence of either of these primitives implies that $\P \neq \NP$, which gives a lower bound on the necessary hardness.

One can also define versions of each of these primitives with quantum output: respectively one-way state generators and pseudorandom state generators. Unlike in the classical setting, it is not known whether either primitive can be built from the other. Although it has been shown that pseudorandom state generators for certain parameter regimes can be used to build one-way state generators, the implication has not been previously known in full generality. Furthermore, to the best of our knowledge the existence of one-way state generators has no known implications in traditional complexity theory.

We show that pseudorandom states compressing $n$ bits to 
$\log n + 1$
qubits 
can be used to build
one-way state generators and that pseudorandom states compressing $n$ bits to $\omega(\log n)$ qubits are \emph{themselves} one-way state generators.
This is a nearly optimal result, since pseudorandom states with fewer than $c \log n$-qubit output can be shown to exist unconditionally. We also show that any one-way state generator can be broken by a quantum algorithm with classical access to a $\PP$ oracle.

An interesting implication of our results is that a 
$t(n)$-copy
one-way state generator exists unconditionally, for every
$t(n) = o(n/\log n)$. This contrasts nicely with
the previously known fact that
$O(n)$-copy one-way state generators require computational hardness. 
We also outline a new route towards a black-box separation between one-way state generators and quantum bit commitments.

\end{abstract}

\setcounter{tocdepth}{2}
\tableofcontents
  
\section{Introduction}
The vast majority of useful classical cryptographic primitives share the following property: they can be used to build one-way functions in a black-box
manner. 
A one-way function is a function on bit-strings which can be
efficiently evaluated but is hard to invert. In this sense, one-way
functions can be thought of as a ``minimal'' cryptographic primitive.
However, any one-way function can be broken by an efficient algorithm with
access to an $\NP$ oracle. This means that if $\P = \NP$, then one-way
functions do not exist. As it is unknown whether $\P = \NP$ or not, the existence
of one-way functions, and thus all of classical cryptography, must rely on computational assumptions.

This issue led to the natural desire to ``map out'' the world of classical cryptography. Over many years, cryptographers have done a fairly good job of figuring out which cryptographic primitives can be built from each other. This cartography helps give a sense of the relative strength of assuming the existence of different cryptographic primitives. 

As an example, it is known how to construct one-way functions from any key
exchange protocol, i.e. a protocol where two parties can agree on a secret
using only communication over a public channel ~\cite{BCG89}. However,
there is strong evidence that building a key exchange protocol from a
one-way function is difficult~\cite{IR89}. The primitives which can be
built from one-way function form a crypto-complexity class known as
``MiniCrypt"~\cite{Impagliazzo95}. Two cryptographic primitives in this
class of particular note are pseudorandom generators and commitment
schemes. A pseudorandom generator is a deterministic function which maps a
small amount of randomness to a longer string indistinguishable from
random. A commitment scheme is a process by which a party can encode some
string into a ``commitment'', such that later the party can prove this
``commitment'' was an encoding of the original string.

In recent years, cryptographers have started to consider what happens if we allow cryptographic primitives to have quantum output. Here, the landscape of relations between primitives looks very different. Of particular note, it was shown that quantum key distribution, a quantum variant of key exchange, exists unconditionally~\cite{BB84,Wiesner83}. On the other hand, it is known that the quantum versions of one-way functions, pseudorandom generators, and commitments cannot be secure against information-theoretic attackers, and thus require some computational hardness in order to exist~\cite{LC97,ji2018pseudorandom,dakshita23commitments}. These variants are known as one-way state generators, pseudorandom state generators, and quantum bit commitments respectively. 

However, it is still unclear what this hardness looks like from a complexity perspective. In particular, it is known there exists an oracle relative to which $\BQP \supseteq \NP$, but all three of these primitives exist\footnote{For one-way state generators and quantum bit commitments, the result follows from~\cite{Kretschmer21Quantum} and the subsequent works of~\cite{morimae2022quantum, dakshita23commitments}.}~\cite{Kretschmer21Quantum}. Furthermore, we are still mapping out the relations between quantum primitives. It was only recently discovered that quantum bit commitments can be built from one-way state generators~\cite{dakshita23commitments}, and it is still an open question as to whether pseudorandom state generators can be built from quantum bit commitments.

The main goal of this work is to broaden our understanding of the hardness of quantum primitives, with a particular focus on one-way state generators. In particular, we show two main results:
\begin{enumerate}
    \item One-way state generators can be built from pseudorandom states
        for nearly all parameter regimes requiring computational hardness.
    \item If one-way state generators exist, then $\BQP \neq \PP$.
\end{enumerate}

These main results bring along a number of interesting implications. The following are of particular note:
\begin{enumerate}
    \item A fixed-copy version of one-way state generators exists unconditionally.
    \item If we can show that quantum bit commitments exist relative to a $\PP$ oracle, then there is a black-box separation showing it is unlikely that we will be able to build one-way state generators from quantum bit commitments.
\end{enumerate}

\subsection{Results}

We now recall a few key concepts from quantum cryptography
and give more details about the results we show.

\paragraph{Pseudorandom State Generators (PRS).} A pseudorandom state generator, originally defined in~\cite{ji2018pseudorandom}, is a quantum variant of a pseudorandom generator. Given a classical key $k$, a PRS maps $k$ to a quantum pure state $\ket{\phi_k}$. The security guarantee is that the output of a PRS on a random input should look like a random state. That is, it is hard for any quantum adversary to distinguish any polynomial number of copies of a random $\ket{\phi_k}$ from polynomial copies of a Haar random state. 

The relationship between the length of the input key $n$ and the number of output qubits $m$ determines whether a PRS can exist information-theoretically or requires computational assumptions. In particular, ~\cite{ananth2022pseudorandom} shows that PRSs with output state length $m \geq \log n$ qubits can be broken by an inefficient adversary, and thus must be a computational object. On the other hand, it is known that PRSs with state length $m \leq c\log n$ exist unconditionally for some $c \in (0,1)$~\cite{ananth2022pseudorandom,brakerski2020scalable}.

\paragraph{One Way State Generators (OWSG).} 
A one-way state generator, originally defined in~\cite{morimae2022one}, is
a quantum variant of a one-way function. Just like for PRS, a OWSG maps
a classical key $k$ to quantum state $\ket{\phi_k}$. The security guarantee
of a OWSG is that, given any polynomial number of copies of $\ket{\phi_k}$,
it is hard for a quantum algorithm to find keys $k'$ such that $\ket{\phi_k}, \ket{\phi_{k'}}$ have noticeable overlap. OWSGs can also be
defined to have mixed state outputs~\cite{morimae2022quantum}, although we
will not consider this variant in this work.

\paragraph{Building OWSGs from PRSs.}
It is known that any expanding PRS is also a OWSG~\cite{morimae2022one}. 
Here, an expanding PRS is one which has keys of length $n$ and output
states of length $m > (1+c)n$ for 
some $c > 0$.
We extend this proof to show
that any PRS with output length at least 
$m \geq \log n + 1$ implies OWSGs.
Since OWSGs require computational hardness \cite{dakshita23commitments,
LC97}, and there exists $d < 1$ such that PRSs with output length $\leq d
\log n$ exist unconditionally \cite{brakerski2022computational}, this
reduction is close to optimal.

\begin{theorem}[Informal version of Theorem~\ref{thm:prs-imply-weak-owsg}]
    \label{t:prs-owsg-optimal}
    For every $c \geq 1$, 
    if there exists a PRS
    mapping
    $n$-bit
    strings 
    to $(\log n+c)$-qubit states,
    then OWSGs exist.
\end{theorem}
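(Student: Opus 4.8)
The plan is to build the one-way state generator as a tensor product of many independent PRS instances and to reduce its one-wayness to the pseudorandomness of the PRS via a register-by-register hybrid argument, the crux being an information-theoretic bound showing that a Haar-random state on $\log n + c$ qubits cannot be ``inverted'' into a PRS key from polynomially many copies. Concretely: fix a growing parameter $\ell = \ell(\lambda)$ and a PRS key length $n = n(\lambda)$; the OWSG key is $K = (k_1,\dots,k_\ell)$ with each $k_i \randfrom \zo^n$, and $\StateGen(K) = \ket{\Phi_K} := \bigotimes_{i=1}^{\ell}\ket{\phi_{k_i}}$, which has $\ell(\log n + c)$ qubits. Verification of a candidate $K' = (k_1',\dots,k_\ell')$ is the standard OWSG check — regenerate $\ket{\Phi_{K'}}$ and apply the two-outcome projective measurement onto it to a fresh copy of $\ket{\Phi_K}$ — which accepts with probability exactly $\prod_i \lvert\braket{\phi_{k_i}|\phi_{k_i'}}\rvert^2$ and is efficient because the PRS is.

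For the security reduction, suppose a QPT adversary $\mathcal A$, given $t = \poly(\lambda)$ copies of $\ket{\Phi_K}$, outputs $K'$ passing verification with probability $\ge \delta = 1/\poly(\lambda)$. Define hybrids $H_0,\dots,H_\ell$ where in $H_j$ the first $j$ registers hold independent states drawn from an efficiently samplable $\negl$-approximate $T$-design (with $T$ exceeding every copy count used in the reduction) and the remaining registers hold honest PRS states, with ``success'' measured by the same tensor-projective check; then $H_0$ is the real game, so its success probability is $\ge \delta$, and $H_j$ versus $H_{j+1}$ differ only in whether register $j{+}1$ holds a PRS state or a design state. A distinguisher exploiting the $j$-th gap runs $\mathcal A$ with its challenge in register $j{+}1$ and freshly prepared design / PRS states elsewhere, estimates $\prod_i \lvert\braket{\cdot\,|\phi_{k_i'}}\rvert^2$ by $\poly$-many SWAP tests (regenerating $\ket{\phi_{k_i'}}$), and outputs a bit biased by the estimate; its advantage is essentially the gap in success probability. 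Since design states are $\negl$-indistinguishable from Haar over $T$ copies, composing with PRS security forces every gap to be $\negl$, hence $\Pr[\text{succeed in } H_\ell] \ge \delta - \ell\cdot\negl \ge \delta/2$.

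It remains to contradict this by showing $\Pr[\text{succeed in } H_\ell]$ is negligible. Up to the design error, $\Pr[\text{succeed in } H_\ell] \le \E_{\psi_1,\dots,\psi_\ell\sim\Haar}\,\E_{K'\sim\mathcal A}\big[\prod_i\lvert\braket{\psi_i|\phi_{k_i'}}\rvert^2\big] + \negl$, and I would bound this Haar expectation by $\big(\tfrac{t+1}{t+d}\big)^{\ell}$, where $d = 2^{\log n + c} = 2^c n$ is the per-register dimension. Writing $\mathcal A$'s measurement as a POVM $\{M_{K'}\}$, linearity and the independence of the $\psi_i$ reduce the quantity to $\sum_{K'}\mathrm{Tr}\big[M_{K'}\bigotimes_i \E_{\psi_i}[\psi_i^{\otimes t}\lvert\braket{\psi_i|\phi_{k_i'}}\rvert^2]\big]$, and then the single-register operator inequality $\E_\psi[\psi^{\otimes t}\lvert\braket{\psi|\phi}\rvert^2] \preceq \tfrac{t+1}{t+d}\,\E_\psi[\psi^{\otimes t}]$ — which follows from the identity for partial traces of the symmetric-subspace projector, i.e.\ the optimal pure-state estimation fidelity — together with $\sum_{K'} M_{K'} = I$ gives the product bound.

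The main obstacle is exactly this last step: establishing the operator inequality and verifying it tensorizes cleanly even though $\mathcal A$'s output on register $i$ may depend on all registers, and — most delicately — choosing $\ell$ and $n$ (as functions of the security parameter, and accounting for how $t$ and $\delta^{-1}$ may grow) so that $\big(\tfrac{t+1}{t+d}\big)^{\ell}$ is negligible in the regime that the argument needs. This is precisely where the hypothesis $m \ge \log n + c$ enters: it makes $d \ge 2^c n$, so each factor $\tfrac{t+1}{t+d}$ is bounded away from $1$, and it explains why $m = \Theta(\log n)$ is the threshold (for $m$ below $c\log n$ the net of PRS states becomes fine enough that the state is efficiently learnable, consistent with the unconditional constructions cited). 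The remaining ingredients — replacing Haar by approximate $t$-designs to keep the hybrid reduction efficient, SWAP-test overlap estimation with negligible error, and, if the tensor construction only directly yields a weak OWSG, invoking the standard weak-to-strong amplification — are routine by comparison.
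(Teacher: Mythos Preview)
Your approach is correct (the operator inequality $\E_\psi[\,\lvert\braket{\psi|\phi}\rvert^2\,\psi^{\otimes t}] \preceq \tfrac{t+1}{t+d}\,\E_\psi[\psi^{\otimes t}]$ does hold and tensorizes exactly as you describe, and the hybrid reduction goes through with $t+1$ copies of the challenge), but it is a genuinely different and more elaborate route than the paper's. The paper does \emph{not} tensor: it shows that the PRS $G$ itself is already a weak OWSG. The reduction is the obvious one-shot one --- run the OWSG inverter on $t$ copies and test its output against the $(t{+}1)$-st copy --- and the information-theoretic core is a Haar concentration bound (for any fixed $\ket{\phi}$, $\Pr_{\psi\sim\Haar(m)}[\lvert\braket{\phi|\psi}\rvert^2\ge 1/s]\le (s/(s+1))^{2^m-1}$) followed by a union bound over the $2^n$ PRS keys. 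Taking $s=2$ and $m\ge\log n+1$ makes the Haar-side success at most $2^n(2/3)^{2^c n-1}+\tfrac12\le\tfrac34$, \emph{independently of $t$}; then weak-to-strong amplification finishes. By contrast your bound $\big(\tfrac{t+1}{t+d}\big)^\ell$ degrades with $t$ (for $t\gg d$ each factor is $1-O(d/t)$), so no choice of $\ell$ makes it negligible uniformly in $t$; you only get a weak OWSG with a $t$-dependent gap, which is still enough for amplification but is strictly weaker than what the paper extracts before amplifying. The payoff of the paper's route is simplicity (no hybrids, no tensoring, no $t$-designs in the reduction) and a uniform-in-$t$ bound; as a bonus, the same argument with $m=\omega(\log n)$ directly gives a \emph{strong} OWSG without any amplification. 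Your state-estimation viewpoint is a clean alternative --- with $\ell=1$ it already reproves the $m=\omega(\log n)$ case since then $\tfrac{t+1}{t+d}=\negl$ --- but at the $m=\log n+O(1)$ threshold the doubly-exponential-in-$m$ decay of the concentration lemma is exactly what lets the paper beat the $2^n$ union bound in one shot, which your polynomial-in-$d$ state-estimation bound cannot match without the extra tensoring layer.
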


Through a closely related argument,
we also find that PRSs that map $n$ bits to $\omega(\log n)$ qubits are
OWSGs.

\begin{theorem}[Informal version of Theorem~\ref{thm:prs-imply-strong-owsg}]
\label{thm:prs-owsg-intro}
     Any PRS that maps $n$-bit strings to
     $\omega(\log n)$-qubit 
     states is also a OWSG.
\end{theorem}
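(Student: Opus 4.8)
I would argue the contrapositive: if the candidate $(\KeyGen,\StateGen)$ fails to be a OWSG, then it also fails to be a PRS. Write $m=m(n)=\omega(\log n)$ for the output length, and suppose OWSG security is broken, i.e.\ there is a QPT algorithm $\mathcal A$ and polynomials $t,p$ such that, for infinitely many $n$, with probability at least $1/p(n)$ over $k\leftarrow\KeyGen(1^n)$ the string $k'\leftarrow\mathcal A(1^n,\ket{\phi_k}^{\otimes t(n)})$ satisfies $\abs{\langle\phi_k|\phi_{k'}\rangle}^2\ge 1/p(n)$. (This one-sided form follows from the usual OWSG security definition by applying Markov's inequality to the verification probability $\E[\abs{\langle\phi_k|\phi_{k'}\rangle}^2]$, at the cost of enlarging $p$.) From $\mathcal A$ I build a distinguisher $\mathcal D$ that, on input $\ket\theta^{\otimes(t+s)}$ for a polynomial $s=s(n)$ fixed below: (i) runs $\mathcal A(1^n,\ket\theta^{\otimes t})$ on the first $t$ copies to get $k'$; (ii) uses $\StateGen(k')$ to prepare $s$ fresh copies of $\ket{\phi_{k'}}$; (iii) runs a SWAP test between each of the remaining $s$ copies of $\ket\theta$ and a copy of $\ket{\phi_{k'}}$; (iv) outputs $1$ iff more than a $\tfrac12+\tfrac1{4p(n)}$ fraction of the SWAP tests accept. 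Note $\mathcal D$ is QPT and uses $t+s=\poly(n)$ copies, since $\mathcal A$, $\StateGen$, and the SWAP test are all efficient.

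\emph{Completeness (true PRS output).} If $\ket\theta=\ket{\phi_k}$ with $k\leftarrow\KeyGen(1^n)$ then, with probability $\ge 1/p(n)$, the key $k'$ has $\abs{\langle\phi_k|\phi_{k'}\rangle}^2\ge 1/p(n)$, in which case each SWAP test accepts with probability $\tfrac{1+\abs{\langle\phi_k|\phi_{k'}\rangle}^2}{2}\ge\tfrac12+\tfrac1{2p(n)}$. Choosing $s=\Theta\!\big(p(n)^2\log p(n)\big)$ and applying Hoeffding's inequality, conditioned on this event the empirical accept-fraction exceeds the threshold except with probability $\le\tfrac1{10p(n)}$, so $\mathcal D$ outputs $1$ with probability at least $\tfrac1{p(n)}\big(1-\tfrac1{10p(n)}\big)\ge\tfrac1{2p(n)}$.

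\emph{Soundness (Haar-random input).} This is the crux and the only place $m=\omega(\log n)$ enters. For a Haar-random $\ket\psi$ on $m$ qubits and any fixed unit vector $\ket v$, $\abs{\langle\psi|v\rangle}^2$ is $\mathrm{Beta}(1,2^m-1)$-distributed, hence $\Pr_\psi[\abs{\langle\psi|v\rangle}^2\ge\epsilon]=(1-\epsilon)^{2^m-1}\le e^{-\epsilon(2^m-1)}$. Let $\ell(n)=\poly(n)$ be the key length (so $\mathcal A$'s output lies in a set of size $\le 2^{\ell(n)}$) and take $\epsilon=\tfrac1{8p(n)}$. Union-bounding over all $2^{\ell(n)}$ candidate keys, the probability over $\ket\psi$ that \emph{some} key $k'$ has $\abs{\langle\psi|\phi_{k'}\rangle}^2\ge\epsilon$ is at most $2^{\ell(n)}e^{-(2^m-1)/(8p(n))}$; since $m=\omega(\log n)$ makes $2^m$ grow faster than every polynomial, $(2^m-1)/(8p(n))$ dominates $\ell(n)\ln 2$ and this bound tends to $0$ (super-polynomially fast). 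Conditioned on that overwhelmingly likely event over $\ket\psi$, \emph{whatever} $k'$ is produced has $\abs{\langle\psi|\phi_{k'}\rangle}^2<\tfrac1{8p(n)}$, so each SWAP test accepts with probability $<\tfrac12+\tfrac1{16p(n)}$ and, by the same Hoeffding bound with the same $s$, the empirical accept-fraction stays below the threshold except with probability $\le\tfrac1{10p(n)}$; hence $\mathcal D$ outputs $1$ with probability at most $o(1)+\tfrac1{10p(n)}\le\tfrac1{8p(n)}$ for large $n$. Combining the two cases, $\mathcal D$ achieves distinguishing advantage at least $\tfrac1{2p(n)}-\tfrac1{8p(n)}=\tfrac3{8p(n)}$ on infinitely many $n$, which is non-negligible, contradicting the PRS security of $(\KeyGen,\StateGen)$.

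\emph{Where the difficulty lies.} The SWAP-test overlap estimator and its Hoeffding analysis are routine bookkeeping; the real content is the soundness step — showing that a Haar-random state on $\omega(\log n)$ qubits lives in a space of super-polynomial dimension and therefore, after a Beta-tail bound and a union bound over \emph{all} candidate keys, has no key of noticeable overlap. This is precisely why the threshold is $\omega(\log n)$: when $m=O(\log n)$ the dimension $2^m$ is only polynomial, the union bound over keys no longer suffices against adversaries whose overlap threshold $1/p(n)$ shrinks too quickly, and one recovers only a \emph{weak} OWSG, matching the situation in Theorem~\ref{thm:prs-imply-weak-owsg}.
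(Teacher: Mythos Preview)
Your proof is correct and rests on the same key insight as the paper: a Haar-random state on $m=\omega(\log n)$ qubits has, with overwhelming probability, negligible overlap with \emph{every} $\ket{\phi_{k'}}$ simultaneously, via the Beta tail bound $\Pr[\abs{\langle\psi|v\rangle}^2\ge\epsilon]=(1-\epsilon)^{2^m-1}$ and a union bound over keys. Your tail bound is in fact the exact Beta$(1,2^m-1)$ CDF, slightly sharper than the paper's Lemma~\ref{lem:goodlem}, which bounds the overlap by a scaled $F$-ratio and gets $(s/(s+1))^{2^m-1}$ instead.

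Where you differ is in the distinguisher and its analysis. The paper's reduction is leaner: it uses only \emph{one} extra copy and performs the projective measurement $\{\ketbra{\phi_{k'}},I-\ketbra{\phi_{k'}}\}$ on it, so that the probability of outputting~$1$ is \emph{exactly} $\E[\abs{\langle\phi_k|\phi_{k'}\rangle}^2]$ in the pseudorandom case and $\E_\psi[\max_{k}\abs{\langle\psi|\phi_k\rangle}^2]$ in the Haar case. This ties the PRS advantage directly to the OWSG expectation with no intermediate Markov step, no SWAP tests, and no Hoeffding concentration; the whole argument becomes a single parameter calculation (their Lemma~\ref{thm:prs-imply-owsg-gen}). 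Your route---convert the expectation guarantee to a pointwise overlap event via Markov, then estimate that overlap with $s=\Theta(p^2\log p)$ SWAP tests and threshold---is perfectly sound but costs $t+s$ copies instead of $t+1$ and adds bookkeeping that the paper avoids. Conceptually the two are the same proof; operationally, the paper's is the cleaner packaging.
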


\paragraph{Fixed-copy PRSs and OWSGs.}

Both PRSs and OWSGs are defined to be secure against adversaries that are given any polynomial number of copies of the output state. However, we could instead consider an alternative definition 
where we fix the number of copies given to the adversary. We will refer to these primitives by the names $t$-copy PRS and $t$-copy OWSG. Related primitives have already been considered in a number of works, including~\cite{gottesman2001quantum,dakshita23commitments,LMW23}.
A summary of prior work and our results can be found in \Cref{tab:prior-result-summary}.

It is known that for any fixed function $t$, expanding $t(\lambda)$-copy
PRSs require computational hardness because they can be used to construct
quantum bit 
commitments~\cite{LMW23,yan2022general,morimae2022one,brakerski2022computational},
where $\lambda$ is the security parameter.
Therefore, any expanding $t(\lambda)$-copy PRS can be broken by an
inefficient attacker. On the other hand, if we do not have an expansion
requirement, it can be shown that something called an efficient approximate
$t$-design (defined formally in Section~\ref{ssec:t-design}) is also a
$t$-copy PRS ~\cite{Kretschmer21Quantum}. Since efficient approximate
$t$-designs exist
unconditionally~\cite{dankert2009exact,haferkamp2023efficient,odonnell2023explicit},
so do $t$-copy (non-expanding) PRSs.

Thus, one may ask the question: for what parameters do $t$-copy OWSGs require
computational assumptions? In a recent work, Khurana and Tomer \cite{dakshita23commitments} show that Weisner
encodings / BB84 states~\cite{Wiesner83, BB84} are $1$-copy OWSGs. Additionally, written twenty years before OWSGs were defined,~\cite{gottesman2001quantum}
shows that $t$-qubit stabilizer states are $t/2$-copy OWSGs. 
The OWSG construction of~\cite{gottesman2001quantum} only has a weaker security guarantee, 
but this can be resolved by amplification.

Note that this means that for any fixed polynomial $t$, $t(\lambda)$-copy
OWSGs exist unconditionally,
where $\lambda$ is the security parameter.
However,~\cite{dakshita23commitments} shows that quantum bit commitments
can be built from $\Theta(n)$-copy OWSGs, where $n$ is the input key length.
This is not a contradiction, since the number of copies of security here
depends on the input length instead of the security parameter. Thus, we may
consider the following refinement of our question:

\begin{quote}
\emph{For what functions
$t(\cdot)$ do $t(n)$-copy OWSGs require computational assumptions?}
\end{quote}

Our proof of
\Cref{thm:prs-owsg-intro}
will also imply the following result.
\begin{corollary}[Informal version of Corollary~\ref{cor:t-design}]\label{cor:t-design-intro}
    Every efficient approximate $t$-design 
    mapping $n$ bits to 
    $\omega(\log n)$
    qubits
    is also a $(t-1)$-copy OWSG.
\end{corollary}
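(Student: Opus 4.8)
The plan is to reduce this statement to Theorem~\ref{thm:prs-owsg-intro} (the informal version of Theorem~\ref{thm:prs-imply-strong-owsg}) by observing that an efficient approximate $t$-design, when one copy is "held back," is indistinguishable from a $(t-1)$-copy PRS, and then invoking the fact that a PRS mapping $n$ bits to $\omega(\log n)$ qubits is already a OWSG. Concretely, let $\{\ket{\phi_k}\}$ be the family of states produced by the efficient approximate $t$-design on $n$-bit keys with $m = \omega(\log n)$ output qubits. The security notion we are targeting — $(t-1)$-copy OWSG security — asks that no efficient adversary, given $t-1$ copies of $\ket{\phi_k}$ for a uniformly random key $k$, can output a key $k'$ with $|\braket{\phi_k|\phi_{k'}}|^2$ noticeably large. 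Suppose toward a contradiction that some efficient $\A$ breaks this; I would use $\A$ together with the $t$-design property to either (a) contradict the OWSG security guaranteed by Theorem~\ref{thm:prs-owsg-intro}, or (b) directly estimate the overlap using the extra held-back copy.

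The first key step is to spell out the approximate $t$-design guarantee in the form we need: for a random key $k$, the state $\rho^{(t-1)} \coloneqq \E_k \ketbra{\phi_k}^{\otimes (t-1)}$ is within negligible trace distance of the corresponding moment $\E_{\ket\psi \sim \Haar} \ketbra{\psi}^{\otimes(t-1)}$ of the Haar measure — this is immediate from the $t$-design condition on $t-1 \le t$ copies, and it is exactly the statement that the design is a $(t-1)$-copy PRS in the strong (statistical) sense. The second step is to note that this $(t-1)$-copy PRS inherits the output-length bound $m = \omega(\log n)$, so Theorem~\ref{thm:prs-owsg-intro}'s proof applies to it verbatim: in that proof one only ever feeds the adversary copies of the output state and then must \emph{verify} a candidate key $k'$; verification is done via a SWAP test (or projection onto $\ketbra{\phi_{k'}}$) which consumes exactly one additional copy of $\ket{\phi_k}$. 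So the third step is a copy-counting bookkeeping argument: $t-1$ copies suffice to run $\A$, and the one extra copy implicit in the OWSG security game is what the reduction needs for verification, giving a clean total of $t$ copies used against the $t$-design — but crucially the design guarantee we invoke is the $(t-1)$-copy indistinguishability, since the verification step in the proof of Theorem~\ref{thm:prs-imply-strong-owsg} does not rely on pseudorandomness, only on the honestly-generated state.

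The main obstacle I anticipate is the off-by-one accounting: one must check that the proof of Theorem~\ref{thm:prs-imply-strong-owsg} really does use the PRS indistinguishability property only on the copies handed to the adversary (of which there are $t-1$) and treats the verification copy separately as an honestly-prepared register, rather than, say, invoking pseudorandomness on all $t$ copies simultaneously. If the proof of Theorem~\ref{thm:prs-imply-strong-owsg} is structured that way — adversary gets $p(\lambda)$ copies, reduction additionally prepares the real state to SWAP-test the output key — then a $t$-design that is $(t-1)$-copy-indistinguishable slots in with $p(\lambda) = t-1$, and the verification copy is free because the challenger in the OWSG game always holds the honest state. Once that structural point is confirmed, the rest is routine: the overlap threshold and success probability bounds carry over unchanged, the efficiency of the $t$-design gives the efficiency of the OWSG, and the $m = \omega(\log n)$ condition is precisely what rules out the trivial inefficient attacks, so the reduction is tight. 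I would also remark that this recovers, and slightly sharpens with respect to copy count, the Gottesman–Chuang observation that $t$-qubit stabilizer states are $\Theta(t)$-copy OWSGs, since stabilizer states form an (exact) $3$-design and, on $n$-bit descriptions, have $\omega(\log n)$ output qubits.
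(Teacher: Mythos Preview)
Your high-level strategy --- $t$-design $\Rightarrow$ PRS $\Rightarrow$ OWSG via Theorem~\ref{thm:prs-imply-strong-owsg} --- is exactly the paper's approach, but your copy accounting contains a genuine error at the verification step.

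You hope that the proof of Theorem~\ref{thm:prs-imply-strong-owsg} ``uses the PRS indistinguishability property only on the copies handed to the adversary'' and treats the verification copy as a separately ``honestly-prepared register.'' This is not how the reduction works, and it cannot be made to work that way. In Lemma~\ref{thm:prs-imply-owsg-gen} (Algorithm~\ref{alg:cap}), the PRS distinguisher $\A'$ receives all $t+1$ copies of the \emph{challenge} state $\ket\psi$: it feeds $t$ of them to the OWSG adversary and measures the remaining \emph{challenge copy} against $\ketbra{\phi_{k'}}$. The entire analysis of the Haar-random branch --- where Lemma~\ref{lem:goodlem} is applied to bound $\abs{\braket{\psi|\phi_{k'}}}^2$ --- depends on that last copy being Haar random. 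If the verification copy were an honestly prepared $\ket{\phi_k}$, there would be no distinguisher at all: $\A'$ does not know $k$ and cannot prepare $\ket{\phi_k}$ itself. You are conflating the OWSG challenger (who knows $k$ and computes the overlap exactly) with the PRS distinguisher in the reduction (who does not).

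The fix is the simpler route you already mention and then back away from: an $\varepsilon$-approximate $t$-design is a statistical $t$-copy PRS, not merely a $(t-1)$-copy one. Invoke Theorem~\ref{thm:prs-imply-strong-owsg} with the full $t$-copy pseudorandomness (i.e., set the ``$t+1$'' there equal to $t$) and obtain $(t-1)$-copy one-wayness directly. This is precisely what the paper does in Corollary~\ref{cor:t-design}; no argument about the verification copy being ``free'' is needed or possible.

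As a minor aside, your closing remark about stabilizer states does not follow: $n$-qubit stabilizer states form an exact $3$-design, so this corollary only yields $2$-copy OWSG security for them. The $\Theta(t)$-copy bound of~\cite{gottesman2001quantum} comes from a separate argument specific to stabilizer states, not from the design property.
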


\begin{table}[]
    \centering
    \begin{tabular}{c|c|c|c}
        Primitive & Copies & Security & Comments \\
        \hline
        Expanding PRS & $t \geq 1$ copy & Computational & \cite{morimae2022one, yan2022general, brakerski2022computational} \\
        PRS & $\poly(\lambda)$-copy & Statistical & Approximate $t$-designs \\
        OWSG & $\poly(\lambda)$-copy & Statistical & \Cref{thm:prs-imply-strong-owsg} with approximate $t$-designs \\[0.5em]
        \hdashline\noalign{\vskip 0.5em} 
        PRS & $O(n/\log n)$-copy & Statistical & Approximate $t$-designs \\
        OWSG & $O(\sqrt{n})$-copy & Statistical & Stabilizer states \cite{gottesman2001quantum} \\
        OWSG & $\Omega(n)$-copy & Computational & Implies quantum bit commitments \cite{dakshita23commitments} \\
        OWSG & $o(n/\log n)$-copy & Statistical & \Cref{cor:unconditional-owsg} 
    \end{tabular}
    \caption{A summary of what is known about the computational and
    information-theoretic nature of quantum cryptographic primitives, based
on the number of copies of the output given to the adversary. We say that
security is computational if the existence of the primitive requires
computational hardness and we say that the security is statistical if the
primitive can be shown to exist unconditionally against statistical
adversaries.
The rows above the dashed line correspond to constructions where the number of copies is in terms of the security parameter $\lambda$.
The rows below the dashed line correspond to constructions where the number of copies is in terms of $n$, the number of input bits. 
}
    \label{tab:prior-result-summary}
\end{table}

If we consider state-of-the-art constructions of approximate $t$-designs~\cite{odonnell2023explicit}, we in addition prove the following.

\begin{corollary}[Equivalent to Corollary~\ref{cor:unconditional-owsg}]\label{cor:unconditional-owsg-intro}
    For every $t(n) = o(n/\log n)$, there exists a $t(n)$-copy OWSG.
\end{corollary}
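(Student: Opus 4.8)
Since \Cref{cor:t-design-intro} already distils the cryptographic content of \Cref{thm:prs-imply-strong-owsg}, the plan is to prove this corollary purely by instantiating that corollary with an off-the-shelf family of explicit designs, choosing the output length carefully. Concretely: to obtain a $t(n)$-copy OWSG it suffices, by \Cref{cor:t-design-intro}, to produce for every input length $n$ an efficient approximate $\bigl(t(n)+1\bigr)$-design that maps $n$ bits to $m(n)$ qubits for some $m(n) = \omega(\log n)$; \Cref{cor:t-design-intro} then hands back a $t(n)$-copy OWSG (padding the key out to $n$ bits with ignored bits, should the design use fewer, changes nothing).

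The input I would take from \cite{odonnell2023explicit} is the quantitative one: for every qubit count $m$ and every design order $s$ there is a quantum-polynomial-time-preparable approximate $s$-design on $m$ qubits (with, say, error $2^{-m}$, far more than \Cref{cor:t-design-intro} asks for) whose key has length $O(s\cdot m)$ — near-linear in the product of the design order and the number of qubits. This is exactly the bound that, taken with $m = \Theta(\log n)$, produces the $O(n/\log n)$-copy design-based PRS in \Cref{tab:prior-result-summary}; the only new wrinkle is that \Cref{cor:t-design-intro} forces $m$ to be strictly super-logarithmic.

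With $t(n) = o(n/\log n)$ fixed, so that $\bigl(t(n)+1\bigr)\log n = o(n)$, I would set
\[
    m(n) \;=\; \Bigl\lceil\, \sqrt{\tfrac{n}{(t(n)+1)\log n}} \,\Bigr\rceil \cdot \log n .
\]
The prefactor diverges, so $m(n) = \omega(\log n)$; and $\bigl(t(n)+1\bigr)\, m(n) \le \sqrt{(t(n)+1)\,n\log n} + (t(n)+1)\log n = o(n)$, so the key length $O\bigl((t(n)+1)\,m(n)\bigr)$ of the order-$\bigl(t(n)+1\bigr)$ design on $m(n)$ qubits is below $n$ for all large $n$; take it to be exactly $n$ by padding. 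Feeding this family into \Cref{cor:t-design-intro} gives a $t(n)$-copy OWSG and finishes the proof.

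The only step that needs genuine care — and the only place the asymptotics $o(n/\log n)$ (rather than $O(n/\log n)$) matter — is this last count: we must keep $m(n)$ strictly $\omega(\log n)$ while keeping an order-$\bigl(t(n)+1\bigr)$ design's seed length at most $n$, and both are simultaneously possible precisely because the seed length of \cite{odonnell2023explicit} grows near-linearly in $s\cdot m$ and because $t(n)\log n = o(n)$ leaves a diverging ``slack'' factor $\sqrt{n/(t(n)\log n)}$ to spend on the extra qubits. Were $t(n)$ only $O(n/\log n)$, that slack would be a constant, $m(n)$ could not be pushed past $O(\log n)$, and one would land in the PRS regime rather than a genuine OWSG — which is exactly why the unconditional OWSG bound stops just short of the unconditional PRS bound.
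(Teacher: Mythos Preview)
Your proposal is correct and follows essentially the same approach as the paper: instantiate \Cref{cor:t-design-intro} with the explicit designs of \cite{odonnell2023explicit}, choosing the output length $m(n)$ to be $\omega(\log n)$ while keeping the seed length $O\bigl(m(n)\cdot t(n)\bigr)$ at most $n$. The only cosmetic difference is the parametrization --- the paper writes $m = \alpha(n)\log n$ for an arbitrary $\alpha = \omega(1)$ and derives $t = \Theta(n/(\alpha\log n))$, whereas you fix $t(n)$ first and pick the explicit $m(n) = \bigl\lceil\sqrt{n/((t(n)+1)\log n)}\bigr\rceil\log n$; both exploit exactly the same slack and lead to the same conclusion.
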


Aside from demonstrating an interesting new property of approximate
$t$-designs, these results give an interesting dichotomy: OWSGs require
computational hardness for $\Omega(n)$-copies, and exist unconditionally for
$o(n/\log n)$ copies.

\paragraph{Quantum cryptography, computational complexity, and
separations.} It is known
from~\cite{Kretschmer21Quantum,ananth2022pseudorandom} that the existence
of a PRS which outputs a 
$(\log n + O(1))$-qubit state implies that $\BQP \neq \PP$,
where $\BQP$ refers to the class of problems efficiently solvable by
quantum computers, and $\PP$ refers to the class of problems such that a
probabilistic Turing machine gets the correct answer with probability
strictly greater than $\frac{1}{2}$.  Thus, like with one-way functions,
the existence of PRSs has implications in complexity theory.

However, no similar results are known about OWSGs or quantum bit
commitments.
In fact, it is conjectured
by~\cite{LMW23} that quantum bit commitments may exist relative to a random
oracle and \emph{any} classical oracle, even ones that depend on the random
oracle. Khurana and Tomer~\cite{dakshita23commitments} observe that there exists a classical
oracle that breaks OWSGs, which implies that such a conjecture cannot
extend to the existence of OWSGs. If the conjecture of \cite{LMW23} is
proven, this would provide a black box separation between OWSGs and quantum
bit commitments.
However, it is not immediately clear that the oracle~\cite{dakshita23commitments} 
mentions lies inside any interesting complexity class.

We show that the existence of OWSGs indeed does have interesting complexity
implications. In particular, we show the following.
\begin{theorem}[Informal version of
    Corollary~\ref{cor:pp-break-owsg}]\label{thm:pp-break-owsg-intro} If
OWSGs exist, then $\BQP \neq \PP$.  \end{theorem}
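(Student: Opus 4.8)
The plan is to show that a $\PP$ oracle (accessed classically) suffices to break any OWSG, which immediately gives $\BQP \neq \PP$ as a corollary: if $\BQP = \PP$, then a $\mathsf{BQP}$ machine could break its own OWSG, contradicting security. So the real content is building a $\PP$-oracle inverter.

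Recall the OWSG security game: the adversary receives $t = t(\lambda)$ copies of $\ket{\phi_k}$ for a random key $k$ sampled by $\KeyGen$, and must output $k'$ such that $\ket{\phi_{k'}}$ has noticeable overlap with $\ket{\phi_k}$ — concretely, the winning probability is measured by something like $\E_{k \gets \KeyGen}\, \E_{k' \gets \A(\ket{\phi_k}^{\otimes t})} \big[\, |\!\braket{\phi_k | \phi_{k'}}\!|^2 \,\big]$ (or the outcome of a SWAP-test-style verification using a fresh copy). First I would observe that the optimal inverter is information-theoretic: given $\ket{\phi_k}^{\otimes t}$, the best strategy is to compute, for each candidate key $k'$, the posterior-weighted expected overlap, and output the $k'$ maximizing it. Explicitly, one wants to find $k'$ maximizing a quantity of the form $\sum_{k} \Pr[k]\, |\!\braket{\phi_k|\phi_{k'}}\!|^{2}\, |\!\braket{\phi_k|\phi_{k'_{\mathrm{guess}}}}\!|^{2(t)}$ — i.e. the best guess conditioned on the measurement outcomes. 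The key point is that each amplitude $\braket{\phi_k | \phi_{k'}}$ is a sum, over the (exponentially many) computational-basis paths of the state-generation circuits $\StateGen(k)$ and $\StateGen(k')$, of efficiently computable terms, so these overlaps — and hence the whole objective — are $\#\P$-style exponential sums of polynomial-time-computable quantities.

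The main technical step is therefore to argue that "find the $k'$ that maximizes this exponential-sum objective, given a description of the $t$ copies" can be carried out in $\P^{\PP}$. I would proceed as follows. The adversary does not get a classical description of $\ket{\phi_k}$ — it gets quantum copies. So the inverter must itself be a quantum algorithm with a classical $\PP$ oracle. The strategy: perform the optimal measurement/POVM on the $t$ copies to extract enough classical information, then post-process with the $\PP$ oracle. Concretely, I would use the fact (à la Aaronson's $\PostBQP = \PP$, and the $\BQP^{\PP}$-type arguments in \cite{Kretschmer21Quantum,ananth2022pseudorandom}) that amplitudes and acceptance probabilities of polynomial-size quantum circuits, and comparisons thereof, lie in $\PromPP$; so a $\P^{\PP}$ machine can, bit by bit, binary-search for the lexicographically-first key $k'$ whose associated "expected verification success" (a fixed poly-size quantum circuit's acceptance probability, with $k'$ hardwired, averaged appropriately) exceeds any given threshold. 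The one subtlety is that the verification probability depends on the actual state $\ket{\phi_k}$ in the adversary's hands, not on a string the $\PP$ oracle can see; I would handle this by having the quantum part of the inverter run the verification circuit coherently and using the $\PP$ oracle only to guide which $k'$ to plug in — formally, by noting that the whole procedure "apply $U$, measure, feed outcome to $\PP$ oracle, apply $U'_{k'}$..." is a $\BQP^{\PP}$ computation, and then invoking $\BQP^{\PP} = \PP^{\PP} = \PP$ (the last equality by Toda-type collapse / the fact that $\PP$ is closed under $\P^{\PP}$-reductions, Beigel–Reingold–Spielman) to conclude the inverter sits in $\PP$ with classical oracle access.

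I expect the main obstacle to be the second step: cleanly showing that the \emph{optimal} inversion success probability is not just achievable by an all-powerful adversary but by one whose only non-$\BQP$ resource is a classical $\PP$ oracle — in particular, ruling out that the optimal measurement is "too entangled" across the $t$ copies to simulate this way. The resolution I anticipate: one does not need the exactly optimal adversary, only one that wins with some noticeable (inverse-polynomial) probability, which contradicts OWSG security just as well; and such an adversary can be built by a fixed quantum circuit (e.g. measure each copy in a basis determined by a $\PP$-oracle-chosen candidate $k'$, or run a SWAP test) whose success probability is exactly an exponential sum the $\PP$ oracle evaluates. I would also need to double-check the distinction, emphasized in the paper, between $t$ measured in the security parameter $\lambda$ versus in the key length $n$ — but since the inverter is inefficient-but-for-the-oracle, it can afford $t$ as large as it likes, so this causes no trouble here.
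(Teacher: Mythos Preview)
Your proposal correctly identifies the core obstacle --- the $\PP$ oracle is classical and cannot directly inspect the quantum copies $\ket{\phi_k}^{\otimes t}$ --- but does not actually resolve it. The paper's resolution is to \emph{not} attack the OWSG directly: instead it invokes the reduction of Khurana--Tomer~\cite{dakshita23commitments} from OWSGs to \emph{one-way puzzles}, a primitive whose sampler outputs a purely classical pair $(k,s)$. Once the target is a one-way puzzle, the $\PP$ oracle (via $\PostBQP = \PP$) can estimate, for any prefix $k_{[1,i]}$ and puzzle $s$, the conditional probability $\Pr[k_{i+1}=1 \mid k_{[1,i]},\, s]$ to inverse-polynomial additive error; the inverter then samples $k'$ bit by bit from (approximately) the true posterior $\Dkeycond{s}$. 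This bit-by-bit conditional sampling is the ``search-to-decision for $\PP$'' step, and the one-way puzzle detour is precisely what supplies the missing quantum-to-classical conversion. Your own proposed fixes --- ``measure each copy in a basis determined by a $\PP$-oracle-chosen candidate $k'$'' or ``run the verification circuit coherently and use the oracle only to guide which $k'$ to plug in'' --- are either circular (you need $k'$ to choose the basis, but the measurement is what is supposed to determine $k'$) or never specify what classical string is actually handed to the oracle.

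Separately, the complexity collapse you invoke is wrong and unnecessary. Beigel--Reingold--Spielman shows $\PP$ is closed under intersection (hence union and complement), not under Turing reductions to $\PP$; the equality $\PP^{\PP} = \PP$ is not known and would be a major result. Fortunately no such collapse is needed: it suffices that the inverter lives in $\BQP^{\PP}$, since under the hypothesis $\BQP = \PP$ one gets $\BQP^{\PP} = \BQP^{\BQP} = \BQP$, already contradicting OWSG security.
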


It then follows that a black box separation between quantum bit commitments
and OWSGs can be achieved by proving a weaker version of the conjecture
of~\cite{LMW23}, namely that there exists an oracle $\mathcal{O}$ relative
to which quantum bit commitments exist and $\PP^\mathcal{O} \sseq
\BQP^\mathcal{O}$.

\subsection{Technical Overview}

We now give an overview of the main technical ingredients of our work.

\subsubsection{Building one-way state generators from pseudorandom state generators}

\paragraph{Expanding PRS.} We will begin this section by detailing the argument from~\cite{morimae2022one} that an expanding PRS is also a OWSG. Recall that an expanding PRS maps $n$-bit strings to $m$-qubit quantum states, where $m \geq (1+c)n$ for some $c > 0$. The natural reduction which uses the OWSG adversary to also break the PRS works. Let $\A$ be a OWSG adversary which outputs $k'$ such that $\ket{\phi_{k'}}$ close to $\ket{\phi_k}$ using $t$ copies of $\ket{\phi_k}$. Given $t+1$ copies of a state $\ket{\psi}$, we can test whether it is an output of the PRS or Haar-random as follows: run $\A$ on the first $t$ copies to get a state $\ket{\phi_{k'}}$, and compare $\ket{\phi_{k'}}$ with the last copy of $\ket{\psi}$. 

If $\ket{\psi} = \ket{\phi_k}$ for some $k$, then $\ket{\phi_{k'}}$ will be close to $\ket{\phi_k}$. If $\ket{\psi}$ is Haar-random, then since it is a random state on $(1+c)n$ qubits, with high probability it is far from $\ket{\phi_k}$ for all $2^n$ values of $k$. This is because $\frac{2^n}{2^{(1+c)n}}$ is negligible in $n$.

\paragraph{Improvement to shrinking PRS.} To improve this result to PRS with
$O(\log n)$ bit output, we simply improve the analysis of the exact same
reduction. We make the following simple observation about Haar-random
states: for any fixed $m$-qubit state $\ket{\phi}$, the probability that a
Haar-random state $\ket{\psi}$ is ``close'' to $\ket{\phi}$ is
$2^{-\Omega(2^m)}$. Thus, in the reduction above, if $\ket{\psi}$ is a
Haar-random state on $\omega(\log n)$ qubits, 
then,
by a union bound, with high probability it is far from $\ket{\phi_k}$ for
all $2^n$ values of $k$. This argument is enough to derive
Theorem~\ref{thm:prs-owsg-intro}.
Using
amplification~\cite{morimae2022quantum}, we also find that OWSGs can be
built
even from a PRS which outputs $\log n + O(1)$ qubits,
obtaining
\Cref{t:prs-owsg-optimal}.

\paragraph{Building OWSGs from a fixed-copy PRS.}
We can instantiate our reduction with a $t$-copy PRS (i.e. a PRS that is secure against $t$ copies). Our reduction then shows that any $t$-copy PRS is also a $(t-1)$-copy OWSG. The fact that approximate $t$-designs are $t$-copy PRSs~\cite{Kretschmer21Quantum} then gives Corollary~\ref{cor:t-design-intro}.

Recently, O'Donnell, Servedio, and Paredes \cite{odonnell2023explicit} showed that there exists an efficient
$2^{-\lambda}$-approximate $t$-design on $m$-qubit quantum states with 
seed length $n = O(mt + \lambda)$. 
Setting $m = \omega(\log n)$ and $t
= o\left(\frac{n}{\log n}\right)$ shows that approximate
$o\left(\frac{n}{\log n}\right)$-designs 
with $\omega(\log n)$-output bits exist,
and thus 
$o\left(\frac{n}{\log n}\right)$-copy OWSGs also
exist. That is,
Corollary~\ref{cor:unconditional-owsg-intro} holds. 

\subsubsection{Breaking one-way state generators with a $\PP$ oracle}

\paragraph{The power of $\PP$ in the quantum setting.} $\PP$ is typically defined by referring to Turing machines or randomized algorithms. These definitions are not very useful when dealing with quantum computing. However, it turns out that $\PP$ has an equivalent formulation with much more obvious quantum applications, known as $\PostBQP$ ~\cite{Aaronson2004QuantumCP}. $\PostBQP$ refers to the class of problems efficiently solvable by uniform quantum circuits with the additional ability to \textit{postselect}. Postselection is another word for performing conditional sampling, and in the quantum setting refers to the ability to choose the result of a measurement and acquire the corresponding residual state. Thus, to break any OWSG with a $\PP$ oracle, it suffices to define an algorithm which breaks the OWSG given oracle access to some language in $\PostBQP$.

\paragraph{One-way puzzles.} Instead of breaking OWSGs with a $\PostBQP$ oracle directly, we rely on a recent result which constructs an interesting classical output primitive, a one-way puzzle, from any OWSG~\cite{dakshita23commitments}. Formally, a one-way puzzle is a pair of algorithms $(\Samp, \Ver)$ where $\Samp$ samples a key-puzzle pair $(k,s)$ such that $\Ver(k,s)$ outputs $1$ with overwhelming probability. $\Samp$ is required to be an efficient quantum algorithm, and $\Ver$ is allowed to be any arbitrary function. The security requirement is that given $s$, it is hard for an adversary to find a $k'$ such that $\Ver(k',s)=1$. Morally, a one-way puzzle is a one-way function where the input and output are sampled together.

It is clear that any one-way puzzle $(\Samp,\Ver)$ can be broken by an
adversary with the ability to postselect. Given a puzzle $s$, the adversary
can simply run $\Samp$ up until it would measure the output state, and then
postselect on the output puzzle being $s$. Measuring the output key will
then give a $k'$ which with high probability will satisfy $\Ver(s,k')=1$.
This attack can be viewed as a search version of the attack of~\cite{Kretschmer21Quantum} 
against any $\lambda$-output PRS.

\paragraph{Attack with decision oracle.} But note that this attack makes use of postselecting directly. It is unclear how to translate this into an attack with a decision oracle for $\PP$. To solve this, we show that it is possible in general to perform conditional sampling given access to a $\PP$ oracle:
\begin{lemma}[Informal version of Lemma~\ref{lem:sample-bit-by-bit}]\label{lem:sample-bit-by-bit-intro}
    Let $\Samp$ be a (uniform) quantum polynomial time algorithm 
    such that
    $\Samp(1^n)$ outputs a
    pair of classical strings $(k, s)$.
    There exists a poly-time
    quantum algorithm $\mathcal{A}$ and a $\PP$ language $\mathcal{L}$ such
    that $\mathcal{A}^\mathcal{L}$ takes as input $s'$ and outputs $k'$, 
    and whose distribution 
    has total variation distance at most $1/n$ from the distribution $\Dkeycond{s'}$ defined by
    $$\Pr[\Dkeycond{s'} \to k'] = \Pr_{\Samp(1^n) \to (k,s)}[k = k' | s = s'].$$
    (In other words,
    we denote by $\Dkeycond{s'}$
    the distribution of keys generated by $\Samp(1^n)$
    conditioned on the puzzle being equal to $s'$.)
\end{lemma}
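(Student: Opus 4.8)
The plan is to sample the string $k'$ bit by bit, using the $\PP$ oracle at each step to estimate the conditional probability that the next bit is $1$. Concretely, suppose $k' = k'_1 k'_2 \cdots k'_\ell$ and we have already committed to a prefix $k'_1, \dots, k'_{i-1}$. I want to output $k'_i = 1$ with probability
$$
p_i \;=\; \Pr_{\Samp(1^n)\to(k,s)}\!\left[\, k_i = 1 \;\middle|\; s = s' \text{ and } k_1 = k'_1, \dots, k_{i-1} = k'_{i-1}\,\right],
$$
which by Bayes' rule equals the ratio of two ``acceptance-type'' probabilities: the probability that $\Samp$ outputs a pair whose puzzle is $s'$ and whose key agrees with the prefix $k'_1 \cdots k'_{i-1} 1$, divided by the same quantity without the constraint on bit $i$. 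The key fact I will invoke is that probabilities of the form $\Pr[\calC \text{ accepts}]$ for a polynomial-size quantum circuit $\calC$ (here, run $\Samp$, then measure the relevant output registers and check agreement with a fixed string) can be compared to a threshold via a $\PP = \PostBQP$ oracle; more precisely, there is a single $\PP$ language $\call$ (taking as input a description of the circuit, the fixed string, and a dyadic rational threshold) that lets a $\BQP$ machine perform binary search to learn each $p_i$ up to additive error $\delta$ in $\poly$ time.

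The steps, in order: (1) For each index $i$ from $1$ to $\ell$, define the two circuits/events above and use the $\PP$ oracle to obtain an estimate $\tilde p_i$ with $|\tilde p_i - p_i| \le \delta$ for a suitably small $\delta = 1/\poly(n)$ to be fixed at the end; set $k'_i = 1$ with probability $\tilde p_i$ (flipping a fresh coin), else $k'_i = 0$. (2) Output $k' = k'_1 \cdots k'_\ell$. (3) Analyze the error: if every $\tilde p_i$ were exactly $p_i$, then by the chain rule the output distribution is \emph{exactly} $\Dkeycond{s'}$. With per-bit additive error $\delta$, a standard hybrid argument over the $\ell$ coordinates shows the total variation distance from $\Dkeycond{s'}$ is at most $O(\ell \delta)$; choosing $\delta = 1/(n\ell)$ (which only costs $O(\log(n\ell))$ oracle queries per bit via binary search) gives TV distance at most $1/n$, as required. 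Since $\ell = \poly(n)$, the whole algorithm $\cala^\call$ runs in quantum polynomial time with classical queries to the one fixed $\PP$ language $\call$.

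One subtlety to handle carefully is conditioning on a measure-zero or exponentially-small event: the denominator in the Bayes ratio, $\Pr[s = s' \text{ and prefix matches}]$, could be tiny or zero for the given $s'$, in which case the conditional distribution is ill-defined. The clean fix is to note the lemma only promises closeness to $\Dkeycond{s'}$ when that distribution is meaningful (the intended use is $s'$ drawn from $\Samp$ itself, for which the relevant probability is non-negligible with overwhelming probability over $s'$); for the remaining $s'$ we let $\cala$ output anything. Alternatively one can additively smooth every probability, which shifts the final TV bound by a controllable amount. I expect this bookkeeping — tracking how the additive oracle errors and the possibly-small denominators propagate through the chain-rule product, and pinning down exactly for which $s'$ the guarantee holds — to be the main obstacle; the bit-by-bit sampling idea and the reduction of threshold queries to $\PP$ are both routine once set up.
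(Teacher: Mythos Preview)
Your high-level strategy---sample $k'$ bit by bit, use a $\PP$ oracle to estimate each conditional probability $p_i$, then bound the total variation distance by a hybrid over the $\ell$ coordinates---is exactly the paper's approach, and the hybrid calculation goes through as you describe (the paper uses per-bit error $1/n^2$ so that summing over $n$ bits gives $1/n$).

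The real gap is in how you propose to obtain $\tilde p_i$. You express $p_i$ as a ratio of two acceptance probabilities of circuits of the form ``run $\Samp$ and check agreement with a fixed string,'' and then invoke the standard fact that such acceptance probabilities can be thresholded in $\PP$. But the denominator $\Pr[s=s' \text{ and prefix matches}]$ is, in general, exponentially small: if $s$ is a $\poly(n)$-bit string then even for a \emph{typical} $s'$ drawn from $\Samp$ one should expect $\Pr[s=s']$ to be of order $2^{-\poly(n)}$ (take $\Samp$ outputting a uniformly random string). Your proposed fix---that this probability is ``non-negligible with overwhelming probability over $s'$''---is therefore false in general, and with only $1/\poly$ additive precision on numerator and denominator the ratio carries no information whatsoever.

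The paper avoids this entirely by using $\PostBQP$ directly rather than thresholding raw acceptance probabilities. Its $\PP$ language is built from a $\PostBQP$ algorithm that runs $\Samp$, \emph{postselects} on the event ``output puzzle equals $s'$ and the first $i-1$ key bits equal the committed prefix,'' and only then measures bit $i$. After postselection the measured bit is $1$ with probability exactly $p_i\in[0,1]$, a bounded quantity that can be estimated to additive error $1/n^2$ by $\Theta(n^4)$ repetitions and a single threshold comparison---no division, no exponentially small denominators. This use of postselection to access the conditional probability \emph{directly} is the missing ingredient in your sketch.
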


This lemma is in some sense a ``search-to-decision'' style argument for
$\PP$. The argument goes along the same lines as the search-to-decision
reduction for $\mathsf{SAT}$. Note that with a $\PostBQP$ oracle, we can test
whether or not it is possible for an algorithm $\Samp$ to produce any given
output $x$ by simply postselecting on $x$ being produced by $\Samp$. 

A naive approach to sampling $k'$ from $\Dkeycond{s'}$ is to sample $k'$ bit by bit. It is possible with a $\PP$ oracle to check whether any given output is in the range of $\Samp$. Thus, we can begin by checking whether $(1, s')$ is in the range of $\Samp$ with all but the first bit of the key discarded. If so, we can set the first bit of $k'$ to be $1$, otherwise $0$. In the next step, we can check whether $(k'_1\circ 1, s')$ is in the range of $\Samp$ with all but the second bit of the key discarded. If so, we can set the second bit of $k'$ to be $1$, otherwise $0$. Repeating this process for each bit of the key will uniformly select an output $k'$ from the range of $\Dkeycond{s'}$.

However, $\Dkeycond{s'}$ is not necessarily a flat distribution, and so the resulting distribution on $k'$ may be very different from $\Dkeycond{s'}$. However, this can be resolved by noticing a key fact. Using a $\PP$ oracle, it is possible to \textit{estimate the probability} that the first bit of the output of $\Samp$ is $1$ conditioned on the output puzzle being $s'$. Thus, we can use the same technique as before, but instead of just setting each bit of $k'$, we can sample each bit according to our approximation of the correct conditional distribution. Although the error will add up, we can set our initial error to be small enough that the distribution over $k'$ will be sufficiently close to $\Dkeycond{s'}$.

Applying Lemma~\ref{lem:sample-bit-by-bit-intro} to our postselecting attack against one-way puzzles gives us an efficient quantum attack against one-way puzzles using a $\PP$ oracle. As one-way puzzles can be built from one-way state generators, this immediately implies Theorem~\ref{thm:pp-break-owsg-intro}.

\paragraph{More on search-to-decision reductions using a $\PP$ oracle.} The
ability of a postselection oracle to aid in search-to-decision reductions
was first noted by~\cite{INNRY22}, where it was shown that a quantum
poly-time algorithm can find a $\mathsf{QMA}$ witness by making one quantum
query to a $\PP$ oracle. They use very different techniques, and in fact
our algorithm requires many classical queries instead of one quantum query.

\subsection{Conclusion and Future work}

We showed two new results about one-way state generators.
We proved that their existence is implied by almost all computational PRSs, and
that $O(n)$-copy OWSGs
can be broken by QPT algorithms with access to a $\PP$
oracle. 
These results bring the cartography of OWSGs much more in line
with that of PRSs. 

\paragraph{Further research directions.}
We outline two major lines of research
which are left open by this paper.
\\

\noindent\emph{Separating OWSG and quantum bit commitments.} 
The two recent papers \cite{dakshita23commitments} and \cite{LMW23} make
clear that proving that quantum bit commitments exist relative to a random
oracle and any classical oracle (with access to the random oracle) would
suffice to show a black box separation between OWSGs and quantum bit
commitments. It follows from our results 
that a black box separation between quantum bit commitments
and OWSGs can be achieved by proving a weaker 
statement, namely that there exists an oracle $\mathcal{O}$ relative
to which quantum bit commitments exist and $\PP^\mathcal{O} \sseq
\BQP^\mathcal{O}$.

Since it is now known from \cite{dakshita23commitments} that 
OWSGs imply quantum bit commitments, proving this conjecture and the accompanying
separation would provide strong evidence that quantum bit commitments are
(at least among the quantum cryptographic primitives so far proposed) the
minimal assumption for quantum cryptography.\\

\noindent\emph{OWSG $\rightarrow$ PRS.}
We have now shown that 
PRSs imply OWSGs
in nearly the greatest possible generality,
and that the existence of either have equivalent known
complexity ramifications. This gives additional motivation to the question
of whether these primitives are equivalent (as their classical equivalents are known to be).
Proving either an equivalence or separation (even for a limited
copy setting) would be an exciting accomplishment that would greatly
clarify the cartography of quantum cryptography.\\

\paragraph{Open questions.}
There are two open questions which we would like to see
an answer for.
\\

\noindent\emph{Optimal PRS $\rightarrow$ OWSG Reduction.}
OWSGs are inherently computational objects, meaning they cannot be implied by statistical primitives. Currently we know that 
\begin{itemize}
    \item PRSs with $m \geq \log n$ output qubits require computational assumptions.
    \item There exists a $c \in (0,1)$ s.t. statistical PRSs with $m \leq c\log n$ output qubits exist. So the existence of PRSs in this regime cannot imply OWSGs.
    \item PRS with $m \geq \log n + 1$ output qubits imply OWSGs.
\end{itemize} 
This leaves the question open of whether a PRS with $m \in (c \log n,\, \log n]$ output qubits imply the existence of OWSGs. The smallest improvement to this space would be to show that a PRS with $m = \log n$ output qubits imply OWSGs. 
Other possible tightening would be to show that, for any $c < 1$, a statistical PRS with $m \leq c\log n$ exist, though this would show the surprising result that just over 1 bit of randomness
per amplitude is enough to achieve statistical closeness to Haar-random
states.\\

\noindent\emph{Parameter regimes for which OWSGs don't exist.} 
It would be very interesting to know which parameter regimes it can be shown OWSGs do not exist in. We conjecture that much
like how $O(\log n)$ output classical OWFs cannot exist, $m = O(\log\log n)$ output qubit OWSGs cannot exist unconditionally.
It would be good to have a complete
understanding of how 
OWSGs
behave
in different parameter
regimes, and a proof of its non-existence in that setting would help
to clarify this picture.\\

\subsection{Concurrent and further work}

We remark that a concurrent revision of~\cite{morimae2022one} provides a
different proof of \Cref{t:prs-owsg-optimal}. In particular, in Appendix C
they show that, given a PRS $G: k\mapsto \ket{\phi_k}$ mapping $n$ bits to
$m \geq \log n$ qubits, the mapping $k \mapsto \ket{\phi_k}^{\otimes n}$ is
a OWSG. Note, however, that these techniques do not easily extend to imply
our other results. In particular, our results show how to build compressing
OWSGs (as in \Cref{thm:prs-owsg-intro}) and thus also give better
parameters for unconditional $t$-copy OWSGs (as in
\Cref{cor:t-design-intro}).

After our results were announced, Hhan, Morimae and
Yamakawa~\cite{DBLP:journals/corr/abs-2312-16025}
proved that there does not exist OWSGs with $O(\log n)$-output qubits. This shows
that our proof that PRSs with $\omega(\log n)$-output qubits 
are OWSGs (\Cref{thm:prs-owsg-intro}) is
optimal under the cryptographic assumption that (post-quantum) one-way functions
exist, since one-way functions imply PRSs of arbitary output
length~\cite{brakerski2020scalable}.
A later work of Batra and Jain~\cite{10756159} then proved that our unconditional
construction of $o(n / \log n)$-copy OWSGs is optimal.
More recently, Khurana and Tomer~\cite{cryptoeprint:2024/1490} announced a
series of results, including a proof that one-way puzzles imply the
hardness of probability estimation. Their result is inspired by
the arguments developed in \Cref{sec:break-pp}.

Recent works have also demonstrated an oracle separation between
OWSGs and one-way
puzzles~\cite{behera2025newworlddepthsmicrocrypt,
bostanci2024oracleseparationquantumcommitments}.
As noted in \Cref{sec:break-pp},
our $\PP$ oracle breaks both OWSGs and one-way puzzles.

\section{Preliminaries}

In this section, we introduce basic notation and recall definitions
from the literature that will be used throughout the rest of this work.

\subsection{Computational complexity}

We refer the reader to~\cite{DBLP:books/daglib/0023084}
for the definition of standard complexity classes such as
$\BQP$ and $\PP$.
We will often abreviate quantum polynomial-time algorithms
by the acronym `QPT'.

We recall the definition of the $\PostBQP$ complexity class
(see, e.g.,~\cite[Definition 12]{Kretschmer21Quantum}).
We will only be concerned with its promise version in this work.

\begin{definition}[\textsf{PromisePostBQP}]
    A promise problem $\Pi:\{0, 1\}^* \to \{0, 1, \bot\}$ is in
    $\PromPostBQP$ (Postselected Bounded-error Quantum Polynomial time) if
    there exists a 
    QPT
    algorithm
    $\cala$
    whose output is in $\set{0,1,*}$,
    and which is such that:
    \begin{enumerate}
        \item[(i)] For all $x \in \blt^*$,
            we have $\Pr[\mathcal{A}(x) \in \{0, 1\}] > 0$. 
            When $\mathcal{A}(x) \in \{0, 1\}$, we say that \emph{postselection succeeds}.
        \item[(ii)] If $\Pi(x) = 1$, then $\Pr[\mathcal{A}(x) = 1 \mid \mathcal{A}(x) \in \{0, 1\}] \geq \frac{2}{3}$. In other words, conditioned on postselection succeeding, $\mathcal{A}$ outputs $1$ with at least $\frac{2}{3}$ probability.
        \item[(iii)] If $\Pi(x) = 0$, then $\Pr[\mathcal{A}(x) = 0 \mid
            \mathcal{A}(x) \in \{0, 1\}] \geq \frac{2}{3}$. In other words,
            conditioned on postselection succeeding, $\mathcal{A}$ outputs
            $1$ with at most $\frac{1}{3}$ probability.
    \end{enumerate}
\end{definition}

We remark that the definition of $\PostBQP$ is sensitive to the choice of
the gate-set, as noticed by~\cite{v011a006}.
We remark that~\cite{v011a006}
also proves that every gate-set that satisfies two
``reasonable'' conditions gives rise to an equivalent
``canonical'' $\PostBQP$ class.
Throughout the paper we assume that we are dealing with one such
gate-set, such as the $\set{\CNOT, H, T}$-gate-set.
We refer the reader to~\cite[Section 2.5]{v011a006}
for a detailed technical discussion of this matter.

We also recall the following result of
Aaronson~\cite{Aaronson2004QuantumCP},
which states that $\PP = \PostBQP$,
remarking that his result also holds 
for the corresponding promise classes.
\begin{lemma}[Aaronson \cite{Aaronson2004QuantumCP}]
    \label{lem:postbqp-eq-pp}
    $\PromPostBQP = \PromPP$.
\end{lemma}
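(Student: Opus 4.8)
The statement is Aaronson's theorem $\PP = \PostBQP$ (in its promise form), so the plan is to reconstruct his two inclusions, being careful to carry the promise through each direction. Using the gate-set caveat already recorded in the excerpt, I fix a canonical gate set throughout; it is convenient to take a real universal set (e.g.\ $\{H, \text{Toffoli}\}$, or $\{\CNOT,H,T\}$ with an ancilla trick) so that all circuit amplitudes lie in $2^{-\poly}\cdot\mathbb{Z}$. A $\PromPostBQP$ machine on input $x$ is a QPT circuit $U_x$ on $\poly(|x|)$ qubits initialized to $\ket{0\cdots 0}$, with a designated postselection flag and output bit; ``postselection succeeds'' is the event that the flag reads a fixed value, and the promise is exactly that this event has positive probability.

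\textbf{$\PromPostBQP \subseteq \PromPP$ (routine direction).} I would use the standard characterization $\PP = \{\text{languages decided by the sign of a } \mathsf{GapP} \text{ function}\}$. Writing $U_x\ket{0\cdots 0}=\sum_z a_z\ket z$, each $a_z=\bra z U_x\ket{0\cdots 0}$ is a sum over $2^{\poly}$ gate-paths, each contributing a value in $2^{-\poly}\cdot\mathbb{Z}$, so $2^{\poly}a_z$ is a $\mathsf{GapP}$ function of $(x,z)$. Let $S_1,S_0$ be the poly-time-recognizable sets of basis states on which postselection succeeds and the output bit is $1$, resp.\ $0$. Since $\mathsf{GapP}$ is closed under products and under exponential sums over poly-time-recognizable index sets, $g_b:=\sum_{z\in S_b}2^{\poly}a_z^2$ ($b\in\{0,1\}$) are $\mathsf{GapP}$ functions, and they are $2^{\poly}$ times the probabilities of ``succeed and output $b$''. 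The acceptance condition $\Pr[\text{output }1\mid\text{succeed}]\ge\tfrac23$ is then equivalent to $g_1\ge 2g_0$, i.e.\ to $g_1-2g_0\ge 0$; as $g_1-2g_0\in\mathsf{GapP}$, deciding its sign is in $\PP$, and on promise inputs (where $g_0+g_1>0$) this agrees with $\Pi$. Hence $\Pi\in\PromPP$.

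\textbf{$\PromPP \subseteq \PromPostBQP$ (the substantial direction).} Fix $\Pi\in\PromPP$, witnessed by a poly-time predicate $f$ with $\Pi(x)=1 \iff s_x:=\card{\{y\in\zo^m:f(x,y)=1\}}>2^{m-1}$, and after a standard modification we may assume $s_x\neq 2^{m-1}$ always, so $\Pi(x)=0\iff s_x<2^{m-1}$. Abbreviate $s=s_x$. The algorithm first prepares $\tfrac1{\sqrt{2^m}}\sum_y\ket y\ket{f(x,y)}$, applies $H^{\otimes m}$ to the first register, and postselects it on $\ket{0^m}$; the residual single qubit is proportional to $(2^m-s)\ket0+s\ket1=:\alpha\ket0+\beta\ket1$, where $\alpha,\beta$ are nonnegative integers summing to $2^m$, and we can produce arbitrarily many independent copies of it (one fresh run plus one postselection each). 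It remains to decide whether $\beta>\alpha$ or $\beta<\alpha$. For this I would invoke Aaronson's amplitude-comparison gadget: using $O(m)$ copies of $\alpha\ket0+\beta\ket1$ together with postselection, one drives the ratio $\beta/\alpha$ to be either super-polynomially large or super-polynomially small — whichever side of $1$ it started on — so that a final computational-basis measurement reveals the sign of $\beta-\alpha$, and hence of $s-2^{m-1}$, with overwhelming probability. Composing the uniform-superposition step with this gadget yields a $\PromPostBQP$ algorithm for $\Pi$ whose total postselection-success probability is positive precisely on the promise inputs. Finally, Lemma~\ref{lem:postbqp-eq-pp}'s conclusion follows by combining the two inclusions, with the $\tfrac23$-gap handled by the usual amplification.

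\textbf{Main obstacle.} Everything except the amplitude-comparison gadget is bookkeeping ($\mathsf{GapP}$ closure, the gate-set normalization of \cite{v011a006}, amplification). The crux — and the only genuinely nontrivial step — is that gadget: the naive approach of squaring $\beta/\alpha$ by taking two copies and postselecting on the two qubits being equal needs $2^{\Theta(m)}$ copies to reach an exponential gap, whereas one must achieve the same effect with only $\poly(m)$ copies, by interfering each fresh copy against an accumulated single-qubit register under a suitable fixed two-qubit unitary followed by postselection. Pinning down this construction — in particular the borderline regime $\beta\approx\alpha$ (where $\beta/\alpha$ may differ from $1$ by only $\approx 2^{-m}$), and verifying that all intermediate postselection probabilities stay strictly positive so that the promise structure is preserved — is where the real work lies.
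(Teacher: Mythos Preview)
The paper does not prove this lemma at all; it is quoted from Aaronson~\cite{Aaronson2004QuantumCP} and used as a black box, so there is no in-paper argument to compare your reconstruction against.

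On the merits of your sketch: the $\mathsf{GapP}$ route for $\PromPostBQP\subseteq\PromPP$ is the standard one and is fine, and in the reverse direction your preparation of the qubit $\propto(2^m-s)\ket{0}+s\ket{1}$ is correct. The gap is in the gadget, and your description of it is not Aaronson's. Your picture --- ``drive the ratio $\beta/\alpha$ to an extreme by interfering each fresh copy against an accumulated single-qubit register'' --- is still a linear recursion: each step multiplies the accumulator vector by a fixed $2\times2$ matrix whose entries are linear in $(\alpha,\beta)$, so after $t$ steps the accumulator amplitudes are homogeneous degree-$t$ polynomials in $(\alpha,\beta)$. With both amplitudes positive and $\beta/\alpha=1\pm 2^{-m}$, this runs into exactly the obstruction you identified for naive squaring, and it is not clear any fixed such recursion separates the two cases in $\poly(m)$ steps.

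Aaronson's actual gadget is a \emph{sign} test rather than a magnitude test. One first applies a Hadamard to obtain $\propto 2^m\ket{0}+(2^m-2s)\ket{1}$, so that the $\ket{0}$-amplitude is positive and the \emph{sign} of the $\ket{1}$-amplitude $\beta'=2^m-2s$ encodes the answer (with $|\beta'|\in[2,2^m]$). Then, for each $i\in\{-m,\dots,m\}$, a single controlled rotation together with one postselection rescales the state to $\propto \alpha'\ket{0}+2^i\beta'\ket{1}$, which is measured in the $\{\ket{+},\ket{-}\}$ basis. If $\beta'>0$ the state lies in the positive quadrant for every $i$ and $\ket{+}$ always dominates; if $\beta'<0$ then for the value of $i$ with $2^i|\beta'|\approx\alpha'$ the state is close to $\ket{-}$. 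A scan over the $O(m)$ values of $i$, each consuming $O(1)$ fresh preparations and postselections, therefore suffices --- there is no accumulation and no borderline-ratio difficulty.
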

Since $\PP$ is a syntactic class,
we can extend any 
promise problem in $\PromPP$
into a language in $\PP$.
This remark will be fundamental in~\Cref{sec:break-pp},
as it was also in~\cite{Kretschmer21Quantum},
when we build a $\PP$ oracle with which we can break
OWSGs.

\subsection{Quantum information theory and cryptography}

We denote by $\bbs(N)$ the set of
$N$-dimensional pure quantum states.
We identify the set of $n$-qubit pure states with
$\bbs(2^n)$.
We denote by
$\HaarN{n}$
the Haar measure on $\bbs(2^n)$.

\paragraph{Quantum cryptography.}

We start by introducing the notion of pseudorandom state generators (PRS)~\cite{ji2018pseudorandom}.
Roughly speaking, given a classical key $k$, a PRS maps $k$ to a quantum pure state $\ket{\phi_k}$. The security guarantee is that the output of a PRS on a random input should look like a random state. That is, it is hard for any quantum adversary to distinguish a random $\ket{\phi_k}$ from a Haar random state. 

\begin{definition}[Pseudorandom States, Definition 2 of \cite{ji2018pseudorandom}]
    Let $\lambda$ be the security parameter and 
    let $\calk$ be a set of binary strings referred to as the \emph{key space}.
    Let $G$ be a QPT algorithm that on input $k \in \calk$ outputs a pure quantum
    state $\ket{\phi_k}$ over $n = n(\lambda)$ qubits.
    We say $G$ is $(t, \eps)$-pseudorandom if the distribution over
    outputs is $\eps$-indistinguishable from Haar random given
    $t$ copies.
    In other words, for any QPT adversary $\A$,
    we have
    $$\abs{\Pr_{k\leftarrow \K}[\A(\ket{\phi_k}^{\otimes t}) = 1] -
    \Pr_{\ket{\psi}\leftarrow\HaarN{n}}[\A(\ket{\psi}^{\otimes t}) = 1]} \leq
    \eps.$$

    We say that $G$ is pseudorandom if it is $\left(\lambda^c,
    \frac{1}{\lambda^c}\right)$-pseudorandom for all $c > 0$, and that it is t-pseudorandom or a t-copy PRS if it is $\left(t,
    \frac{1}{\lambda^c}\right)$-pseudorandom for all $c > 0$.
\end{definition}

We turn to introducing the notion of one-way state generators (OWSGs)~\cite{morimae2022one}. 
A OWSG maps a classical key $k$ to quantum state $\ket{\phi_k}$. The security guarantee
of a OWSG is that, given any polynomial number of copies of $\ket{\phi_k}$,
it is hard for a quantum algorithm to find keys $k'$ such 
$\ket{\phi_k}, \ket{\phi_{k'}}$ have noticeable overlap. 
OWSGs can also be
defined to have mixed state outputs~\cite{morimae2022quantum}, although we
will not consider this variant in this work. 

\begin{definition}[One-Way State Generators, Definition 4.1 of \cite{morimae2022quantum}]
    \label{def:owsg}
    Let $\lambda$ be the security parameter
    and
    let $\calk$ be a set of binary strings referred to as the \emph{key space}.
    Let $G$ be a QPT algorithm that on input 
    $k \in \calk$ 
    outputs a pure quantum state $\ket{\phi_k}$.
    We say
    G is $(t,\eps)$-one-way if the outputs are hard to invert with
    accuracy at least $\eps$.
    In other words, for any QPT adversary $\A$,
    we have
    $$\Exp_{
    \substack{
        k \leftarrow \K
        \\
        k' \leftarrow \A(\ket{\phi_k}^{\otimes t})
    }}
    \left[\abs{\braket{\phi_k|\phi_{k'}}}^2 \right] \leq \eps.$$
    We say that $G$ is one-way or strongly one way if it is $\left(\lambda^c, \frac{1}{\lambda^c}\right)$-one-way for all $c > 0$, and that it is t-one-way, a t-copy OWSG, or a t-copy strong OWSG if it is $\left(t,
    \frac{1}{\lambda^c}\right)$-one-way for all $c > 0$.
\end{definition}

We remark that all of the results in this paper
will also hold for 
the more general definition of (pure-state) one-way state generators 
that was introduced
in the later work of Morimae and
Yamakawa~\cite{morimae2022one}, where the one-way state generator is
allowed to have a separate quantum key generation 
procedure.\footnote{The results we show in
\Cref{sec:owsg-prs} imply the existence of one-way state generators in
the sense of \Cref{def:owsg} in various settings. 
Since these are more restricted objects than the ones considered in the
more general definition of~\cite{morimae2022one}, our results will also implies their
existence in the same settings and with an analogous parameter regime.
Furthermore, since our oracle QPT algorithm in \Cref{sec:break-pp} comes
from an algorithm breaking \emph{one-way puzzles}, and one-way puzzles were
proved in~\cite{dakshita23commitments} to follow from the ``general'' pure state OWSGs of~\cite{morimae2022one},
our algorithm will therefore also break 
OWSGs with a quantum key generation procedure.
}
Our techniques, however, do not generalize to 
the mixed-state OWSGs defined in the later work of Morimae
and Yamakawa~\cite{morimae2022one}.

We will rely on the notion of one-way puzzles, defined in~\cite{dakshita23commitments}. A one-way puzzle is a pair of algorithms $(\Samp, \Ver)$ where $\Samp$ samples a key-puzzle pair $(k,s)$ such that $\Ver(k,s)$ outputs $1$ with overwhelming probability. $\Samp$ is required to be an efficient quantum algorithm, and $\Ver$ is allowed to be any arbitrary function. The security requirement is that given $s$, it is hard for an adversary to find a $k'$ such that $\Ver(k',s)=1$.

\begin{definition}[One-Way Puzzles \cite{dakshita23commitments}]
    \label{def:owpuz}
    Let $\lambda$ be the security parameter.
    A one-way puzzle is a pair of sampling and verification algorithms $(\Samp, \Ver)$ with the following syntax.

    \begin{itemize}
        \item $\Samp(1^\lambda) \to (k, s)$ is a (uniform) quantum
            polynomial time algorithm that outputs a pair of classical
            strings $(k, s)$. We refer to $s$ as the puzzle and $k$ as its
            key. Without loss of generality we may assume that $k \in \{0,
            1\}^\lambda$.
        \item $\Ver(k, s) \to \top~\text{or}~\bot$, is an unbounded
            algorithm that on input any pair of classical strings $(k, s)$
            halts and outputs either $\top$ or $\bot$.
    \end{itemize}
    These satisfy the following properties.
    \begin{itemize}
        \item \textbf{Correctness.} Outputs of the sampler pass verification with overwhelming probability, i.e.,
        $$\Pr_{(k, s) \leftarrow \Samp(1^\lambda)}[\Ver(k, s) = \top] = 1 - \negl(\lambda).$$
        \item \textbf{Security.} Given $s$, it is (quantumly) computationally infeasible to find $k$ satisfying $\Ver(k, s) = \top$, i.e., for every polynomial-sized quantum circuit $\mathcal{A}$,
        $$\Pr_{(k, s) \leftarrow \Samp(1^\lambda)}[\Ver(\mathcal{A}(s), s) = \top] = \negl(\lambda).$$
    \end{itemize}
\end{definition}

An important recent result due to Khurana and Tomer~\cite{dakshita23commitments}
states that one-way state generators imply one-way puzzles.
This will be crucial for our algorithm in \Cref{sec:break-pp}.

\begin{theorem}[\protect{\cite[Theorem 4.2]{dakshita23commitments}}]
    \label{thm:owsg-to-owpuz}
    If there exists a $(O(n), \negl(n))$-OWSG, then there exists a one-way puzzle.
\end{theorem}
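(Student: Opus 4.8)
The plan is to have $\Samp$ run the OWSG on a fresh uniformly random key $k$, produce $t = \Theta(n)$ copies $\ket{\phi_k}^{\otimes t}$ (here $n$ is the key length, and the hidden constant will be fixed by the analysis below), and then perform an independent randomized ``tomographic'' measurement on each copy: for the $i$-th copy, apply a uniformly random Clifford $C_i$ (efficiently samplable) and measure in the computational basis to obtain an outcome $y_i$. The puzzle is $s = \big( (C_1, y_1), \dots, (C_t, y_t) \big)$ and the key is $k$. The (unbounded) verifier $\Ver(k', s)$ recomputes the Born weights $p_i := |\langle y_i | C_i | \phi_{k'} \rangle|^2$, forms a statistic $T(k', s)$ designed to estimate the fidelity $F := |\braket{\phi_k | \phi_{k'}}|^2$ (e.g.\ a cross-entropy / XEB-style estimator built from the $p_i$, truncated so that each of its $t$ per-copy terms is bounded), and accepts iff $T(k', s)$ exceeds a threshold $\theta$ chosen so that $T$ lands above $\theta$ when $F = 1$ and below $\theta$ when $F \le 1 - \delta$, for a small absolute constant $\delta > 0$.

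For correctness, when $(k, s) \leftarrow \Samp$ the outcomes $y_i$ are genuinely Born-distributed for $\ket{\phi_k}$ in basis $C_i$, so $T(k, s)$ is an average of $t$ independent bounded random variables whose mean corresponds to $F = 1$; by a Hoeffding/Chernoff bound, $T(k, s) > \theta$ except with probability $2^{-\Omega(t)} = \negl(n)$, so $\Pr[\Ver(k, s) = \top] = 1 - \negl(n)$.

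For security, suppose a QPT adversary $\mathcal{B}$ breaks the puzzle, i.e.\ given $s$ it returns $k'$ with $\Ver(k', s) = \top$ with non-negligible probability. Build a OWSG adversary $\mathcal{A}$ that, on input $\ket{\phi_k}^{\otimes t}$, applies exactly the measurements $\Samp$ would apply to assemble a string $s$ (this is efficient, and $s$ has precisely the distribution $\Samp$ produces conditioned on the key being $k$), runs $\mathcal{B}(s) \to k'$, and outputs $k'$. Two facts combine: (i) since $\mathcal{A}$ simulates $\Samp$ faithfully, its $k'$ satisfies $\Ver(k', s) = \top$ with the same non-negligible probability; (ii) with overwhelming probability over $s$, \emph{every} key $k'$ with $|\braket{\phi_k | \phi_{k'}}|^2 \le 1 - \delta$ has $\Ver(k', s) = \bot$. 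Claim (ii) is where the linear copy bound enters: fix a $\delta$-far key $k'$; then $T(k', s)$ is again an average of $t$ independent bounded terms, now with mean at most the value corresponding to $F = 1 - \delta$, so $T(k', s) > \theta$ with probability at most $2^{-\Omega(\delta^2 t)}$; taking $t = c\, n / \delta^2$ for a large enough constant $c$ makes this $\le 2^{-2n}$, and a union bound over all $\le 2^n$ keys shows that no far key passes except with probability $2^{-n}$. Combining (i), (ii), and correctness, $\mathcal{A}$ outputs $k'$ with $|\braket{\phi_k | \phi_{k'}}|^2 \ge 1 - \delta$ with non-negligible probability, so its expected squared overlap is non-negligible, contradicting $(O(n), \negl(n))$-one-wayness (the hypothesis is exactly what rules out an adversary using $t = \Theta(n)$ copies).

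The main obstacle is establishing the existence of the consistency statistic $T$ with both required properties simultaneously: it must be expressible as an average of $t$ \emph{bounded} per-copy terms (so the $2^{-\Omega(t)}$ tail bounds, and hence the union bound over $2^n$ keys, go through), \emph{and} its mean must depend on $\ket{\phi_{k'}}$ essentially only through $F$, with a genuine $\Omega(\delta)$ gap between $F = 1$ and $F \le 1 - \delta$. This is a ``fidelity estimation from few single-copy random measurements'' statement; it can be extracted from the low-moment behaviour of random Clifford (or random Pauli) measurement outcomes, as in classical-shadow and cross-entropy-benchmarking estimators, but one must truncate the natural log-likelihood estimator (which is unbounded) and check that truncation does not destroy the gap. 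A secondary point is that replacing Haar-random bases by efficiently samplable Cliffords (or Paulis) preserves everything, which holds because only low moments of the measurement ensemble appear in the argument.
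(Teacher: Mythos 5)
The paper does not prove this theorem; it cites it verbatim from Khurana and Tomer~\cite{dakshita23commitments}, so there is no ``paper's own proof'' here to compare against. What the paper does reveal (in the proof of \Cref{cor:pp-break-owsg}) is that the cited reduction is black-box: for every OWSG $G$ there is an explicit puzzle $(\Samp,\Ver)$, and an inverter for the puzzle yields an inverter for $G$. Your proposal has exactly this structure, and your construction --- single-copy random Clifford measurements on $t=\Theta(n)$ copies as the puzzle, a classical threshold test on Born weights as the verifier, and a union bound over all $2^n$ candidate keys as the source of the $O(n)$ copy requirement --- is the same basic strategy Khurana--Tomer use. So your route is in the same spirit as the cited proof, not a genuinely different one.

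The one place you flag as ``the main obstacle'' is indeed the crux, and your diagnosis is correct: the natural raw statistic $p_i = |\langle y_i|C_i|\phi_{k'}\rangle|^2$ lives in $[0,1]$ but has mean $\frac{F+1}{2^m+1}$, so the gap you care about is only $\frac{\delta}{2^m+1}$ and Hoeffding on it would need $t=\Omega(2^{2m})$ copies. You must pass to the unbiased shadow-type estimator $(2^m+1)p_i-1$, which has $O(1)$ mean gap and $O(1)$ variance, but its range is $[-1,2^m]$, so plain Hoeffding again fails. Two standard fixes work: (i) median-of-means over $O(n)$ blocks (Chebyshev per block, Chernoff over blocks), giving $t=O(n/\delta^2)$ as you stated, at the cost of $T$ no longer being an average of bounded per-copy terms; or (ii) truncation at level $M=\Theta(1/\delta)$, which by a second-moment tail bound shifts the mean by only $O(1/M)=O(\delta)$ and then yields an honest Hoeffding bound --- but with range $O(1/\delta)$, so the tail is $2^{-\Omega(\delta^4 t)}$ and you need $t=\Omega(n/\delta^4)$, not $n/\delta^2$. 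Your write-up conflates these two: you insist $T$ be ``an average of $t$ bounded per-copy terms'' (which forces truncation) but quote the $n/\delta^2$ count (which comes from median-of-means). Since $\delta$ is a fixed constant this does not affect the truth of the statement, but the discrepancy should be resolved in a full proof. With either fix the remaining steps --- correctness via concentration at $F=1$, security via the union bound over $\le 2^n$ far keys plus faithful simulation of $\Samp$ by the OWSG adversary, and converting a non-negligible chance of a $(1-\delta)$-fidelity key into a non-negligible expected squared overlap --- are all correct.
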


\subsection{Probability distributions}

We recall a few standard probability distributions, the chi-squared distribution
and the Fisher-Snedecor distribution,
the latter of which will be useful when trying to show that
compressing PRSs are one-way state generators~\Cref{lem:goodlem}.

\begin{definition}[Chi-squared distribution]
    \label{def:chi-square}
    Let $X_1,\dots,X_k$
    be independent normal random variables
    with mean $0$ and variance $1$.
    The 
    \emph{chi-square distribution with $k$ degrees of freedom},
    denoted by $\chi^2(k)$,
    is the probability distribution of the sum
    $\sum_{i=1}^k X_i^2$.
\end{definition}

\begin{definition}[Fisher-Snedecor distribution]
    \label{def:fs-dstr}
    Let $a,b \in \bbn$.
    The 
    \emph{Fisher-Snedecor distribution with $a$ and $b$ degrees of freedom},
    denoted by $F(a, b)$,
    is the distribution given by the 
    following ratio:
    \begin{equation*}
        F(a,b)
        \flws
        \frac{\chi^2(a)/a}{\chi^2(b)/b}
        =
        \frac{b}{a}
        \cdot
        \frac{\chi^2(a)}{\chi^2(b)}.
    \end{equation*}
\end{definition}

We now recall the definition of the beta function and its variations
in order to express the cumulative distribution function
of the $F$-distribution.
The 
\emph{beta function} is defined as
\begin{equation*}
    B(a,b)
    :=
    \int_0^1
    x^{a-1}
    (1-x)^{b-1}
    \,
    dx.
\end{equation*}
The
\emph{incomplete beta function}
is defined as
\begin{equation*}
    B(t; a,b)
    :=
    \int_0^t
    x^{a-1}
    (1-x)^{b-1}
    \,
    dx.
\end{equation*}
The \emph{regularised incomplete beta function}
is defined as
\begin{equation*}
    I_x(a,b)
    :=
    \frac{B(x; a,b)}{B(a,b)}.
\end{equation*}

\begin{lemma}[Cumulative distribution function of the $F$-distribution]
    \label{lem:cumulative-function-fs}
    The cumulative distribution function $p_{a,b}(t)$ 
    of 
    the
    $F(a,b)$ distribution satisfies
    $$
        p_{a,b}(t) = I_{at/(at + b)}(a/2,b/2).
    $$
    In particular, we have
    \begin{equation*}
        \Pr_{\rndY \leftarrow F(a,b)}[\rndY \leq \theta] = 
        \frac{
            \int_0^{(a\theta)/(a\theta + b)}
            x^{a-1}
            (1-x)^{b-1}
            \,
            dx
        }{
            \int_0^1
            x^{a-1}
            (1-x)^{b-1}
            \,
            dx
        }.
    \end{equation*}
\end{lemma}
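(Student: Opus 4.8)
The plan is to reduce the statement to the classical fact that the ratio $U/(U+V)$ of two independent chi-squared random variables is Beta-distributed, whose cumulative distribution function is by definition the regularised incomplete beta function.

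First I would unpack Definition~\ref{def:fs-dstr} and write $\rndY \sim F(a,b)$ as $\rndY = \tfrac{b}{a}\cdot\tfrac{U}{V}$, where $U \sim \chi^2(a)$ and $V \sim \chi^2(b)$ are independent. For $\theta \geq 0$, since $U,V,a,b,\theta$ are all nonnegative and $V>0$ almost surely, cross-multiplying gives the chain of equivalences
$$
\rndY \leq \theta
\iff bU \leq a\theta V
\iff U(a\theta + b) \leq a\theta (U + V)
\iff \frac{U}{U+V} \leq \frac{a\theta}{a\theta + b},
$$
so that $\Pr[\rndY \leq \theta] = \Pr\big[\, W \leq \tfrac{a\theta}{a\theta+b}\,\big]$ where $W := U/(U+V)$. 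It then remains only to identify the law of $W$.

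Second, I would show $W \sim \mathrm{Beta}(a/2, b/2)$. The cleanest route uses that $\chi^2(k)$ coincides with the $\mathrm{Gamma}(k/2, 2)$ distribution (shape $k/2$, scale $2$), which follows from Definition~\ref{def:chi-square} and the standard density of a sum of squared standard normals. Applying the change of variables $(U,V) \mapsto (S, W) := (U+V,\ U/(U+V))$, whose inverse is $(U,V) = (SW,\ S(1-W))$ with Jacobian $S$, to the product of the two Gamma densities and then integrating out $S \in (0,\infty)$ leaves a density on $W$ proportional to $w^{a/2 - 1}(1-w)^{b/2-1}$ on $[0,1]$; normalising by $B(a/2,b/2)$ identifies this as $\mathrm{Beta}(a/2,b/2)$. (Alternatively, this Gamma-to-Beta identity can simply be quoted as standard.) By the definitions of the Beta density and of the regularised incomplete beta function, the CDF of $\mathrm{Beta}(a/2,b/2)$ at $x$ is precisely $I_x(a/2,b/2) = B(x;a/2,b/2)/B(a/2,b/2)$.

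Combining the two steps gives $p_{a,b}(\theta) = \Pr[\rndY \leq \theta] = I_{a\theta/(a\theta + b)}(a/2, b/2)$, and expanding $I$ and $B$ into their defining integrals yields the ratio-of-integrals form in the ``in particular'' clause. The only step requiring genuine care is the Gamma-to-Beta change of variables — getting the Jacobian right and carrying out the integration over $S$ correctly; everything else is algebraic bookkeeping. As a sanity check, letting $\theta \to \infty$ sends the argument $\tfrac{a\theta}{a\theta+b} \to 1$ with $I_1(a/2,b/2) = 1$, while at $\theta = 0$ both sides vanish, consistent with $\rndY$ being a strictly positive random variable.
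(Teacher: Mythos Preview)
The paper does not prove this lemma; it is stated in the preliminaries as a standard fact and invoked without proof in Lemma~\ref{lem:goodlem}. Your derivation via the Gamma-to-Beta identity is the textbook route and correctly establishes $p_{a,b}(\theta) = I_{a\theta/(a\theta+b)}(a/2,b/2)$.

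One small wrinkle worth flagging: the ``in particular'' display as printed has exponents $a-1$ and $b-1$, but expanding $I_x(a/2,b/2)$ according to the paper's own definitions of $B$ and $I$ gives exponents $a/2-1$ and $b/2-1$. This is a typo in the statement --- and indeed the proof of Lemma~\ref{lem:goodlem} silently uses the corrected exponents: with $a=2$, $b=2N-2$ the integrand taken there is $(1-x)^{N-2}$, i.e.\ exponent $b/2-1$, not $b-1$. Your final sentence claims the printed display follows by ``expanding $I$ and $B$,'' which it does not quite; what you actually prove is the corrected version.
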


\subsection{Approximate $t$-designs}\label{ssec:t-design}

In our construction of unconditional $t$-copy OWSGs, we will use
approximate $t$-designs. Informally, an approximate $t$-design is a distribution over states such that $t$ copies of an output state is statistically close to $t$ copies of a Haar random state. One can think of $t$-designs as a quantum version of a $t$-wise independent distribution, or as a $t$-copy PRS with a statistical security guarantee.

\begin{definition}[Approximate $t$-Design, Definition 2.2 of~\cite{odonnell2023explicit}, rephrased]
    A probability distribution $S$ over $\mathbb{S}(2^n)$
    is an $\varepsilon$-approximate $t$-design if
    $$\left\lVert\Exp_{\ket{\psi} 
    \leftarrow
    \Haar(n)}\left[\ketbra{\psi}^{\otimes t}\right] -
    \Exp_{\ket{\psi} 
    \leftarrow
    S}\left[\ketbra{\psi}^{\otimes t}\right]\right\rVert_1 \leq
    \varepsilon,$$
    where $\lVert\cdot\rVert_1$ is the Schatten $1$-norm (or trace norm).
    We call $G$ an 
    \emph{efficient $\varepsilon$-approximate $t$-design}
    if $G$ is
    a 
    quantum algorithm running in time
    $\poly(n,m,t,\log(1/\eps))$
    which maps
    classical strings in $\{0,1\}^n$ to quantum
    states over $\mathbb{S}(2^m)$, and such that the output distribution of $G(\cdot)$ on a random $n$-bit string forms an
    $\varepsilon$-approximate $t$-design.
\end{definition}

Recently, it has been shown that approximate $t$-designs
with almost optimal seed exist unconditionally.

\begin{theorem}[Theorem 1.1 of ~\cite{odonnell2023explicit}, rephrased]
\label{thm:odonnell}
    For all $m,t,\varepsilon > 0$, there exists an efficient 
    $\varepsilon$-approximate $t$-design with input size $n = O(mt + \log(1/\varepsilon))$.
\end{theorem}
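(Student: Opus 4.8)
The plan is to build the $t$-design from \emph{random phase states} and then derandomize. Set $N = 2^m$, and assume for now that $N$ is somewhat larger than $t^2/\eps$ (in particular $N>t$); let $q$ be the smallest prime with $q\ge N$, so $q>t$. For a function $f\colon\{0,\dots,N-1\}\to\FF_q$ define $\ket{\psi_f} := N^{-1/2}\sum_{x=0}^{N-1}\omega^{f(x)}\ket{x}$ with $\omega = e^{2\pi i/q}$. I would first compute the $t$-th moment for uniformly random $f$: expanding $\ketbra{\psi_f}^{\otimes t}$ in the computational basis, its $(\vec x,\vec y)$-entry is $N^{-t}\,\E_f[\omega^{\,\sum_i f(x_i)-\sum_j f(y_j)}]$, and since each coordinate value occurs fewer than $q$ times, this expectation equals $1$ when $\vec x$ and $\vec y$ agree as multisets and $0$ otherwise. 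A short combinatorial computation then identifies the resulting operator with $\binom{N+t-1}{t}^{-1}\Pi_{\mathrm{sym}}$ — which is exactly $\E_{\ket{\psi}\leftarrow\HaarN{m}}[\ketbra{\psi}^{\otimes t}]$ — up to a correction supported on tuples with repeated coordinates, whose trace norm is $O(t^2/N)$. Hence uniformly random $q$-ary phase states already form an $O(t^2/N)$-approximate $t$-design.

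To derandomize, note that the only property of $f$ used above is that $(f(z))_{z\in T}$ is uniform on $\FF_q^{T}$ for every set $T$ with $|T|\le 2t$ — i.e.\ $2t$-wise independence. Taking $f$ to be a uniformly random polynomial of degree $<2t$ over $\FF_q$ costs $O(t\log q) = O(tm)$ random bits, and $\ket{\psi_f}$ is preparable in time $\poly(m,t)$ (Hadamards followed by a phase oracle for $f$). In the regime $2^m\gtrsim t^2/\eps$ this already yields an efficient $\eps$-approximate $t$-design with seed $O(mt)$; and since in this regime $\log(1/\eps)=O(m)$, the seed is $O(mt+\log(1/\eps))$ as claimed.

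The remaining case is when $N=2^m$ is small relative to $t^2/\eps$, so that the $\Theta(t^2/N)$ ``collision'' error of \emph{any} phase-state construction already exceeds $\eps$. Here I would fall back on explicit designs of much lower error on moderate-dimensional spaces: what is needed is an explicit, efficiently sampleable $\eps$-approximate (ideally exact) $t$-design on $\bbs(2^m)$ with $2^{O(mt)}$ support points — for instance one obtained from an explicit near-unitary $t$-design on $U(2^m)$ applied to $\ket{0^m}$ — whose seed is $O(mt)\subseteq O(mt+\log(1/\eps))$. Combining the two cases yields the theorem. I expect the main obstacle to be precisely this last ingredient: producing \emph{explicit} designs (as opposed to the purely existential constructions of Seymour--Zaslavsky type) with a near-optimal number of points on spaces of moderate dimension; this is where one genuinely needs the algebraic and combinatorial machinery of \cite{odonnell2023explicit}, whereas the phase-state half is elementary once one isolates the role of the condition $q>t$ in collapsing the moment computation to the symmetric-subspace projector. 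A secondary but still delicate point is the trace-norm accounting in the moment computation — bounding the $O(t^2/N)$ collision terms, and, if one wishes to lower the dimension threshold, controlling the error incurred by the ``project onto a subspace and renormalize'' map that converts a design on a larger space into one on a smaller space.
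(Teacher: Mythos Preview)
This theorem is not proved in the paper at all: it is simply quoted in the preliminaries as a rephrasing of Theorem~1.1 of \cite{odonnell2023explicit}, and used as a black box in \Cref{cor:unconditional-owsg}. There is therefore no ``paper's own proof'' to compare your proposal against.

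As a standalone sketch, your phase-state argument for the large-$N$ regime is sound (the $2t$-wise independence derandomization via low-degree polynomials over $\FF_q$ with $q>t$ is exactly the right move, and the $O(t^2/N)$ collision error is the correct order). But you yourself identify the real content as the small-$N$ regime, and there your proposal is not a proof but a wish: ``fall back on explicit designs \ldots with $2^{O(mt)}$ support points.'' That existence is precisely what \cite{odonnell2023explicit} establishes, and it is not obtainable by the phase-state route (as you note, the $\Theta(t^2/N)$ floor is intrinsic to any equal-magnitude construction). So your proposal covers the easy half and defers the hard half to the very reference being cited.
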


\section{One-way state generators from compressing pseudorandom states}
\label{sec:owsg-prs}

We construct both weak and strong $t$-copy one-way state generators from
$(t+1)$-copy compressing pseudorandom states. Before this the only known
reduction worked for expanding PRS mapping $n$-bit strings to $cn$-qubit
states~\cite{morimae2022one} for some $c > 1$. 
To generalize this reduction, we rely on the
following concentration inequality, which informally states that, with high
probability, any fixed state is unlikely to be close to Haar-random states.
We will then use this in Theorem~\ref{thm:prs-imply-owsg-gen} to bound how
well a OWSG inverter can distinguish outputs of a PRS from Haar-random states.

\begin{lemma}[Concentration of Haar States\protect\footnote{In
    \cite{Kretschmer21Quantum} the author refers to a closely related
inequality as following from ``standard concentration inequalities, or even
an explicit computation''. However we were unable to find an actual proof
of the inequality in either the citation listed \cite{brandao2016local}, or
in the paper which it cites \cite{hayden2006aspects}.
Consequently we decided to include a proof here for completeness.}]\label{lem:goodlem}
    Let $\ket{\phi_0}$ be any state of dimension $N = 2^n$. Then, for any $s > 0$,
    we have
    $$\Pr_{\ket{\psi}\sim \Haar(n)}\left[\abs{\braket{\phi_0|\psi}}^2 \geq \frac{1}{s}\right] \leq \left(\frac{s}{s+1}\right)^{N-1}.$$
\end{lemma}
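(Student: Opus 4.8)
The goal is to bound the probability that a fixed state $\ket{\phi_0}$ has large overlap with a Haar-random state. The plan is to reduce the question to a statement about the distribution of a single coordinate of a uniformly random unit vector, and then to the $F$-distribution whose cumulative function was recorded in \Cref{lem:cumulative-function-fs}.

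\textbf{Step 1: reduce to a single coordinate.} By unitary invariance of $\HaarN{n}$, for any fixed unitary $U$ the state $U\ket{\psi}$ is again Haar-random, so I may assume $\ket{\phi_0} = \ket{0^n} = e_1$, and then $\abs{\braket{\phi_0|\psi}}^2 = \abs{\psi_1}^2$, where $\psi = (\psi_1,\dots,\psi_N)$ is a uniformly random unit vector in $\CC^N$. Writing $\psi = Z/\norm{Z}_2$ with $Z = (Z_1,\dots,Z_N)$ having i.i.d.\ standard complex Gaussian entries (equivalently $Z_j = X_j + iY_j$ with $X_j, Y_j$ i.i.d.\ real $N(0,1)$), one gets
$$\abs{\psi_1}^2 = \frac{\abs{Z_1}^2}{\abs{Z_1}^2 + \sum_{j=2}^N \abs{Z_j}^2} = \frac{A}{A+B},$$
where $A = X_1^2 + Y_1^2 \sim \chi^2(2)$ and $B = \sum_{j=2}^N(X_j^2+Y_j^2) \sim \chi^2(2N-2)$ are independent (\Cref{def:chi-square}).

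\textbf{Step 2: pass to the $F$-distribution and evaluate.} If $s \le 1$ the event $\abs{\psi_1}^2 \ge 1/s$ has probability $0$ and the bound is trivial, so assume $s > 1$. Then $\frac{A}{A+B} \ge \frac1s$ is equivalent to $(s-1)A \ge B$, i.e.\ to $\frac{A/2}{B/(2N-2)} \ge \frac{N-1}{s-1}$, i.e.\ (\Cref{def:fs-dstr}) to $\rndY \ge \theta$ where $\rndY \sim F(2, 2N-2)$ and $\theta := \frac{N-1}{s-1}$. Now \Cref{lem:cumulative-function-fs} with $a = 2$, $b = 2N-2$ gives $\Pr[\rndY \le \theta] = I_u(1, N-1)$ with $u = \frac{2\theta}{2\theta+2N-2} = \frac{\theta}{\theta+N-1}$; since the integrand for $a/2 = 1$ is just $(1-x)^{N-2}$, both integrals are elementary and one obtains $\Pr[\rndY \le \theta] = 1 - (1-u)^{N-1}$. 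Hence
$$\Pr_{\ket{\psi}\sim\Haar(n)}\!\left[\abs{\braket{\phi_0|\psi}}^2 \ge \tfrac1s\right] = (1-u)^{N-1} = \left(\frac{N-1}{\theta+N-1}\right)^{N-1} = \left(\frac{s-1}{s}\right)^{N-1},$$
and since $(s-1)(s+1) = s^2-1 \le s^2$ we have $\frac{s-1}{s} \le \frac{s}{s+1}$, which yields the claimed bound.

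\textbf{Main obstacle.} There is no deep difficulty here; the only care needed is the bookkeeping in Step 1–2: a complex Gaussian coordinate carries \emph{two} real degrees of freedom, so the numerator is $\chi^2(2)$ rather than $\chi^2(1)$, and consequently the relevant $F$-parameters are $(2, 2N-2)$. Once that is pinned down, the rest is the elementary manipulation of $\frac{A}{A+B}$ into an $F$-statistic and an integration of $(1-x)^{N-2}$. (If one preferred to sidestep the $F$-distribution entirely, one could instead note directly that $\abs{\psi_1}^2$ has the $\mathrm{Beta}(1,N-1)$ density $(N-1)(1-t)^{N-2}$ on $[0,1]$, whence $\Pr[\abs{\psi_1}^2 \ge 1/s] = (1-1/s)^{N-1}$; but since the preliminaries set up \Cref{lem:cumulative-function-fs} precisely for this lemma, the $F$-distribution route is the one I would present.)
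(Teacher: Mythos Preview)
Your proof is correct and follows essentially the same route as the paper: reduce by unitary invariance to the first coordinate, write the Haar vector as a normalized complex Gaussian so that $\abs{\psi_1}^2 = A/(A+B)$ with $A\sim\chi^2(2)$, $B\sim\chi^2(2N-2)$ independent, and then evaluate via the $F(2,2N-2)$ cumulative distribution from \Cref{lem:cumulative-function-fs}.

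There is one small but pleasant difference worth noting. The paper does not work with $A/(A+B)$ directly; instead it introduces $Y := A/B$, uses the trivial bound $\abs{\psi_1}^2 \le Y$, and then computes $\Pr[Y \ge 1/s] = (s/(s+1))^{N-1}$ exactly. You instead rewrite the event $\{A/(A+B)\ge 1/s\}$ as $\{(s-1)A \ge B\}$ and compute its probability exactly as $((s-1)/s)^{N-1}$, only afterwards relaxing to the stated bound via $(s-1)/s \le s/(s+1)$. Your route is slightly more direct and in fact recovers the \emph{exact} tail $\Pr[\abs{\braket{\phi_0|\psi}}^2 \ge 1/s] = (1-1/s)^{N-1}$ (equivalently, that $\abs{\psi_1}^2$ is $\mathrm{Beta}(1,N-1)$), so you lose nothing and gain a sharper intermediate statement; the paper's detour through $Y=A/B$ trades that sharpness for a one-line inequality at the start rather than at the end.
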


\begin{proof}
    Since the Haar distribution is invariant under unitary transformations, without loss of generality we assume $\ket{\phi_0} = \ket{0}$. 

    Muller \cite{muller1959haar} showed that sampling from the Haar distribution is equivalent to sampling $\ket{\psi}$ as
    $$\ket{\psi} \propto \sum_{x} (\alpha_x+\beta_x i)\ket{x},$$
    where each $\alpha_x,\beta_x$ is sampled according to the standard Gaussian with expectation 0 and standard deviation 1. Then, we define the random variable $Y$ as 
    $$Y := \frac{\alpha_0^2 + \beta_0^2}{\sum_{x \neq 0} \alpha_x^2 + \beta_x^2}.$$
    Expanding out the inner product gives us
    $$\abs{\braket{0|\psi}}^2 = \frac{\alpha_0^2 + \beta_0^2}{\sum_x (\alpha_x^2 + \beta_x^2)} \leq Y.$$
    But observe that each $\alpha_x, \beta_x$ is sampled independently, and
    so $Y$ is distributed as the ratio of two chi-squared random variables.
    So, we see that $Y$ is sampled as a (scaled) $F$-distribution as
    follows:
    $$Y \sim \frac{\chi^2(2)}{\chi^2(2N-2)} \sim \frac{2}{2N-2} \cdot F(2,2N-2).$$
    We conclude by \Cref{lem:cumulative-function-fs} that
    \begin{align*}
        \begin{split}
            \Pr\left[Y \geq \frac{1}{s}\right] 
            &= \Pr\left[F(2, 2N-2) \geq \frac{N-1}{s}\right] \\
            &= 1 - \frac{\int_0^{\frac{1}{s+1}} (1-x)^{N-2}dx}{\int_0^1 (1-x)^{N-2}dx}\\
            &= 1 - \frac{\left[-\frac{1}{N-1}\cdot (1-x)^{N-1}\right]_0^{1/(s+1)}}{\left[-\frac{1}{N-1}\cdot (1-x)^{N-1}\right]_0^{1}} \\
            &= 1 - \frac{\left[(1-x)^{N-1}\right]_0^{1/(s+1)}}{\left[(1-x)^{N-1}\right]_0^{1}} \\
            &= 1 + \left(\left(1-\frac{1}{s + 1}\right)^{N-1} - 1\right) \\
            &= \left(\frac{s}{s+1}\right)^{N-1}, \\
        \end{split}
    \end{align*}
    where
    $[f(x)]_0^1 = f(1) - f(0)$. 
    Since
    $$\Pr_{\ket{\psi}\sim \text{Haar}(N)}\left[\abs{\braket{\phi_0|\psi}}^2 \geq \frac{1}{s}\right] \leq \Pr\left[Y \geq \frac{1}{s}\right],$$
    we are done.
\end{proof}

Using this lemma we first show a general result 
stating that state generators that are pseudorandom must also be one-way. 
We then apply the result
in two
different parameter regimes.
While \cite{Kretschmer21Quantum} only claims that 
PRSs
with $m = n$ output bits
can be broken
by $\PP$, \cite{ananth2022pseudorandom} calls out that the proof can be
extended to the case when $m \geq \log n + c$. 
For this regime, we show that
PRSs are weak OWSGs. We then show that slightly less compressing PRSs ($m = \omega(\log n)$) are strong OWSGs. 
In all these three results, the number of copies
falls by 1 moving from a PRS to a OWSG. While we end up focusing on
sublinear-copy PRSs in the next subsection, these reductions work in the
default many-copy setting considered in most papers on quantum
cryptographic primitives.

\begin{lemma}

\label{thm:prs-imply-owsg-gen}
    For all $f(n)$ and for
    $$\delta = 2^n \cdot \left(\frac{f(n)}{f(n)+1}\right)^{(2^m - 1)} +
    \frac{1}{f(n)},$$
    if 
    $G : k \mapsto \ket{\phi_k}$
    is a state generator taking $n$-bit
    strings to $m$-qubit pure states which is $\left(t +
    1,\eps\right)$-pseudorandom, then it is also $(t,\eps +
    \delta)$-one way.
\end{lemma}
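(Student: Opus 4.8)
The plan is to prove the contrapositive through the natural reduction sketched in the introduction, extracting the improved parameters from the concentration estimate of~\Cref{lem:goodlem}. Suppose $G$ is \emph{not} $(t,\eps+\delta)$-one-way, so there is a QPT inverter $\A$ with
$$\Exp_{\substack{k \leftarrow \K \\ k' \leftarrow \A(\ket{\phi_k}^{\otimes t})}}\left[\abs{\braket{\phi_k|\phi_{k'}}}^2\right] > \eps + \delta .$$
From $\A$ I would build a QPT distinguisher $\mathcal D$ contradicting $(t+1,\eps)$-pseudorandomness. On input $\ket{\psi}^{\otimes(t+1)}$, the algorithm $\mathcal D$ runs $\A$ on the first $t$ copies to obtain a classical key $k' \in \K$, and then performs the two-outcome projective measurement $\{\ketbra{\phi_{k'}},\, \mathbbm{1}-\ketbra{\phi_{k'}}\}$ on the last copy of $\ket{\psi}$, outputting $1$ precisely when the first outcome is observed. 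Because $G$ is a fixed QPT algorithm and $k'$ is classical, $\mathcal D$ can realize this measurement by running the preparation circuit of $G(k')$ in reverse on the last copy (together with fresh ancillas) and testing whether the result is the all-zeros string; the probability of outputting $1$ is then exactly $\abs{\braket{\phi_{k'}|\psi}}^2$.

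Then I would compare the two acceptance probabilities of $\mathcal D$. When $\ket{\psi}=\ket{\phi_k}$ with $k \leftarrow \K$, the acceptance probability of $\mathcal D$ equals $\Exp_{k,k'}\bigl[\abs{\braket{\phi_k|\phi_{k'}}}^2\bigr] > \eps+\delta$ by the assumption on $\A$. When $\ket{\psi}\leftarrow\HaarN{m}$, let $E$ be the event, over the draw of $\ket{\psi}$ alone, that $\abs{\braket{\phi_k|\psi}}^2 < 1/f(n)$ for \emph{every} key $k \in \K = \zon$. Applying~\Cref{lem:goodlem} with $s=f(n)$ and $N = 2^m$ to each fixed $k$, and taking a union bound over the $2^n$ keys, gives $\Pr[\neg E] \leq 2^n\bigl(f(n)/(f(n)+1)\bigr)^{2^m-1}$. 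Whenever $E$ occurs, the bound $\abs{\braket{\phi_{k'}|\psi}}^2 < 1/f(n)$ holds regardless of which key $k'$ the inverter returned, while $\abs{\braket{\phi_{k'}|\psi}}^2 \le 1$ always; hence
$$\Exp_{\psi,k'}\bigl[\abs{\braket{\phi_{k'}|\psi}}^2\bigr] \;\le\; \Pr[\neg E]\cdot 1 + \Pr[E]\cdot \frac{1}{f(n)} \;\le\; 2^n\left(\frac{f(n)}{f(n)+1}\right)^{2^m-1} + \frac{1}{f(n)} \;=\; \delta .$$
Therefore the distinguishing advantage of $\mathcal D$ is strictly greater than $(\eps+\delta)-\delta = \eps$, contradicting $(t+1,\eps)$-pseudorandomness, which would establish the lemma.

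The reduction itself is standard and the genuine content is the tail bound of~\Cref{lem:goodlem}, which is already in hand, so I do not expect a real obstacle here. The one place where care is needed in the write-up is the Haar case: one must not apply~\Cref{lem:goodlem} directly to the single random key $k'$ output by $\A$, since $k'$ is correlated with $\ket{\psi}$; union-bounding over all of $\K$ \emph{before} inspecting $\A$'s output is exactly what decouples the bad event from $\A$. A secondary, purely cosmetic point is the convention under which $\mathcal D$ implements the projection onto $\ket{\phi_{k'}}$ from the circuit for $G$; if one wishes to sidestep it, replacing the projective measurement by a SWAP test costs only a factor of $2$ in both $\eps$ and $\delta$, which is irrelevant for every downstream use.
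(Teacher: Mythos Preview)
Your proposal is correct and follows essentially the same approach as the paper: the same distinguisher (run $\A$ on $t$ copies, project the last onto $\ket{\phi_{k'}}$), the same analysis in the pseudorandom case, and the same Haar-case bound via~\Cref{lem:goodlem} plus a union bound over all $2^n$ keys. Your conditioning on the event $E$ is just a repackaging of the paper's partition of $\mathbb{S}(2^m)$ into the ``close'' set $A_f$ and ``far'' set $B_f$, and your explicit remark about decoupling $k'$ from $\ket{\psi}$ by union-bounding over $\K$ beforehand is exactly the role played by the $\max_k$ step in the paper.
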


\begin{proof}
    For the sake of contradiction, assume that there exists an adversary
    $\A$ that can succeed with probability larger than $\eps + \delta$ in the $t$-copy
    one-wayness game (\Cref{def:owsg}). We will construct a new adversary $\A'$ for the
    pseudorandomness game as follows:\\

    \begin{algorithm}
        \caption{Adversary $\mathcal{A}'$ in the pseudorandomness game}
        \label{alg:cap}
        \hspace*{\algorithmicindent} \textbf{Input}: $\ket{\psi}^{\otimes (t+1)}$. \\
        \hspace*{\algorithmicindent} \textbf{Output}: $0$ if $\ket{\psi}$ is pseudorandom, $1$ if $\ket{\psi}$ is Haar random. \\
        \begin{algorithmic}[1]
            \STATE Run $\A$ on the first $t$ copies and obtain $k' \leftarrow \A(\ket{\psi}^{\otimes t})$ \\
            \STATE
            Measure the last copy of $\ket{\psi}$ 
            in the basis 
            $\left\{\ketbra{\phi_{k'}}, I - \ketbra{\phi_{k'}}\right\}$ \\
            \RETURN 1 if the result is $\ketbra{\phi_{k'}}$, else
            \textbf{return}
            0.
        \end{algorithmic}
    \end{algorithm}

    Observe that when $\A'$ is given a pseudorandom input, it outputs $1$ with probability at least
    $$\Pr_{k \leftarrow \K}\left[\A'\left(\ket{\phi_k}^{\otimes (t+1)}\right)
    = 1\right] = \E_{k\leftarrow \K}
    \left[\abs{\braket{\phi_k|\phi_{k'}}}^2 \middle| k' \leftarrow
    \A\left(\ket{\phi_k}^{\otimes t}\right)\right] > \eps + \delta,$$
    where the key space $\calk$ is equal to $\blt^n$ here.
    Thus, it suffices remains to bound the probability that $\A'$ detects a Haar random state:
    $$\Pr_{\ket{\psi}\leftarrow\Haar(m)} \left[\A'\left(\ket{\psi}^{\otimes
    (t+1)}\right) = 1\right] \leq \delta.$$
    By construction, we have
    \begin{align*}
        \Pr_{\ket{\psi}\leftarrow\Haar(m)} \left[\A'\left(\ket{\psi}^{\otimes (t+1)}\right) = 1\right] 
        =&~\E_{\ket{\psi}\leftarrow \Haar(m)} \left[\abs{\braket{\psi|\phi_{k}}}^2 \middle| k \leftarrow \A\left(\ket{\psi}^{\otimes t}\right)\right]\\
        =&~\int_{\mathbb{S}(2^m)} d\mu(\psi) \cdot \sum_{k \in \{0, 1\}^n} \Pr\left[k \leftarrow \A\left(\ket{\psi}^{\otimes t}\right)\right]\cdot  \abs{\braket{\psi|\phi_{k}}}^2\\
        \leq&~\int_{\mathbb{S}(2^m)} d\mu(\psi) \cdot \max_{k \in \{0, 1\}^n} \abs{\braket{\psi|\phi_k}}^2.
    \end{align*}
    Where $\mu(\psi)$ is the Haar measure on the space of $m$-qubit pure
    states. We will now partition the set of $m$-qubit pure states into the
    states that are `close' to a pseudorandom state $\ket{\phi_k}$, and the
    states that are `far' from all pseudorandom states. Formally, we define
    $$A_f := \left\{\ket{\psi} \in \mathbb{S}(2^m) \middle| \max_k \abs{\braket{\psi|\phi_k}}^2 \geq \frac{1}{f(n)}\right\}.$$
    The set of states that are `far' from all pseudorandom states is its complement,
    $$B_f := \left\{\ket{\psi} \in \mathbb{S}(2^m) \middle| \max_k \abs{\braket{\psi|\phi_k}}^2 < \frac{1}{f(n)}\right\}.$$
    We proceed with computing the integral separately for the two sets:

    \begin{align*}
        \int_{\mathbb{S}(2^m)} d\mu(\psi) \cdot \max_{k \in \{0, 1\}^n} \abs{\braket{\psi|\phi_k}}^2
        =&~\int_{A_f} d\mu(\psi)\cdot \max_{k \in \{0, 1\}^n} \abs{\braket{\psi|\phi_k}}^2 
        \\&+ \int_{B_f} d\mu(\psi)\cdot \max_{k \in \{0, 1\}^n} \abs{\braket{\psi|\phi_k}}^2 \\
        \leq&~\int_{A_f} d\mu(\psi) + \max_{\substack{\ket{\psi} \in B_f \\ k \in \{0, 1\}^n}} \abs{\braket{\psi|\phi_k}}^2\\
        <&~\Pr_{\ket{\psi} \leftarrow \Haar(m)}\left[ \exists k \middle| \abs{\braket{\psi|\phi_k}}^2 \geq \frac{1}{f(n)}\right] + \frac{1}{f(n)}\\
        \leq&~ \sum_{k \in \K}\Pr_{\ket{\psi}\leftarrow \Haar(m)} \left[ \abs{\braket{\psi|\phi_k}}^2 \geq \frac{1}{f(n)}\right] + \frac{1}{f(n)}.
    \end{align*}
    \Cref{lem:goodlem} implies that
    $$\Pr_{\ket{\psi}\leftarrow\Haar(m)} \left[\A'\left(\ket{\psi}^{\otimes
    (t+1)}\right) = 1\right] \leq 2^n\cdot \left(\frac{f(n)}{1+f(n)}\right)^{2^m - 1} + \frac{1}{f(n)} = \delta.$$
    We conclude
    \begin{equation*}
        \abs{\Pr_{k \leftarrow \K} \left[\A'\left(\ket{\psi}^{\otimes
        (t+1)}\right) = 1\right] - \Pr_{\ket{\psi} \leftarrow\Haar(m)}
    \left[ \A'\left(\ket{\psi}^{\otimes (t+1)}\right) = 1\right]} > \eps.
        \qedhere
    \end{equation*}
\end{proof}

Using this general result we can specify to the two theorems below.

\begin{theorem}[PRSs imply OWSGs]
\label{thm:prs-imply-weak-owsg}
     If 
     $G$
     is a state generator taking $n$-bit strings to $m > \log n + c$ qubit states (with $c \geq 1$)
     which is $(t+1,\eps)$-pseudorandom, then $G$ is also $(t,\eps + 3/4)$-one-way.
\end{theorem}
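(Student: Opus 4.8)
The plan is to instantiate the general reduction of \Cref{thm:prs-imply-owsg-gen} with an appropriate choice of the function $f(n)$ and verify that the resulting slackness $\delta$ is at most $3/4$. Recall that \Cref{thm:prs-imply-owsg-gen} tells us that a $(t+1,\eps)$-pseudorandom state generator mapping $n$ bits to $m$ qubits is $(t,\eps+\delta)$-one-way with
\begin{equation*}
    \delta = 2^n \cdot \left(\frac{f(n)}{f(n)+1}\right)^{2^m - 1} + \frac{1}{f(n)}.
\end{equation*}
So the entire task reduces to choosing $f$ so that both summands are small: the second summand $1/f(n)$ wants $f$ large, while the first summand wants $(f/(f+1))^{2^m-1}$ — i.e. roughly $e^{-2^m/f}$ — to beat the $2^n$ factor, which wants $f$ not too large relative to $2^m$. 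Since $m > \log n + c$ with $c \geq 1$, we have $2^m > 2^c \cdot n \geq 2n$, which is the key slack that makes this work.

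First I would pick $f(n) = 4$ (a constant suffices here, since we only need $\delta \le 3/4$, not $\delta \to 0$). Then the second summand is exactly $1/f(n) = 1/4$. For the first summand, I need $2^n \cdot (4/5)^{2^m-1} \leq 1/2$. Using $2^m - 1 \geq 2n - 1$ (valid once $n$ is large enough, from $2^m > 2n$) and the bound $(4/5)^{2n-1} = \tfrac54 \cdot (4/5)^{2n} = \tfrac54 \cdot (16/25)^n$, I get $2^n \cdot (4/5)^{2^m-1} \leq \tfrac54 \cdot (32/25)^{-n}\cdot\ldots$ — wait, I should be careful: $2^n \cdot (16/25)^n = (32/25)^n$ which is $> 1$ and grows, so $f = 4$ is not quite enough and I need $f$ smaller relative to the exponent, or to use the full strength $2^m \ge 2^c n$. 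The clean fix: choose $f(n)$ to be a constant small enough (e.g. $f = 2$, giving second summand $1/2$) so that the first summand is $2^n (2/3)^{2^m - 1}$, and then use $2^m - 1 \ge 2n - 1$ to get $2^n (2/3)^{2n-1} = \tfrac32 (2^n)(4/9)^n = \tfrac32 (8/9)^n \to 0$, which is eventually $\le 1/4$; combining gives $\delta \le 1/2 + 1/4 = 3/4$ for all sufficiently large $n$. I would write this with $f$ a suitable constant and track the asymptotics honestly.

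The one genuine subtlety — and what I expect to be the main point requiring care rather than a true obstacle — is the phrase ``for all sufficiently large $n$'': the bound $2^m - 1 \ge 2n - 1$ and the decay of $(8/9)^n$ only kick in once $n$ exceeds some absolute constant, so strictly speaking $\delta \le 3/4$ only holds for $n \ge n_0$. Since one-wayness is an asymptotic security notion (it is quantified over the security parameter $\lambda$, with $n = n(\lambda)$ a growing function), this is harmless: for the finitely many small security parameters one can absorb the constant into the statement, or simply note that the definition of a $t$-copy OWSG in \Cref{def:owsg} is insensitive to finitely many values. I would remark on this briefly and then conclude. The remaining steps — plugging $\eps + \delta = \eps + 3/4$ back into the conclusion of \Cref{thm:prs-imply-owsg-gen} — are immediate.

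\begin{proof}[Proof sketch]
    Apply \Cref{thm:prs-imply-owsg-gen} with $f$ a suitable constant. Then $\delta = 2^n(f/(f+1))^{2^m-1} + 1/f$. Since $m > \log n + c$ with $c \ge 1$, we have $2^m - 1 \ge 2n - 1$ for all large $n$, so the first term is at most $2^n (f/(f+1))^{2n-1}$, which tends to $0$ provided $2 \cdot (f/(f+1))^2 < 1$, i.e. $f \le 2$; taking $f = 2$ the first term is eventually $\le 1/4$ and the second is $1/2$, giving $\delta \le 3/4$. Hence $G$ is $(t, \eps + 3/4)$-one-way.
\end{proof}
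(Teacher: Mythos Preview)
Your proposal is correct and is essentially the paper's own proof: both instantiate \Cref{thm:prs-imply-owsg-gen} with the constant $f(n)=2$, use $m>\log n+c$ with $c\ge 1$ to bound $2^m-1$ from below by (roughly) $2n-1$, and obtain $\delta\le \tfrac32(8/9)^n+\tfrac12\le \tfrac34$ for sufficiently large $n$. The paper's write-up keeps $c$ as a parameter in the exponent (writing the first term as $\tfrac32\bigl(2^{2^c+1}/3^{2^c}\bigr)^n$) whereas you specialize immediately to $c=1$, but this is purely cosmetic.
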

\begin{proof}
    Take \Cref{thm:prs-imply-owsg-gen} with $f(n) = 2$, and $m > \log n + c$. We get that
    \begin{align*}
        \delta &\leq 2^n \cdot \left(\frac{2}{3}\right)^{n\cdot 2^c - 1} + \frac{1}{2} \\
        &= \frac{3}{2} \cdot \left(\frac{2^{2^c+1}}{3^{2^c}}\right)^{n} + \frac{1}{2}.
    \end{align*}

    Since $c > 1$, the left term becomes less than $1/4$ for sufficiently large $n$, and thus $\delta  < 3/4$.
\end{proof}

Using amplification arguments from \cite{morimae2022quantum},
these $(t, \eps + 3/4)$-OWSGs can be used to construct strong OWSGs with negligible probability of inversion.

This reduction is nearly optimal. \cite{brakerski2022computational} showed that there exist statistical many-copy PRSs with $c\log n$ output bits for some $c < 1$. As shown later strong OWSGs must imply computational assumptions (e.g. $\BQP \neq \PP$) so they cannot be implied by statistical or information theoretic primitives. Consequently there can be no proof that PRSs with output shorter than $c \log n$ qubits imply OWSGs, meaning our proof is optimal up to multiplicative constant factors.

We can also show that PRSs with 
superlogarithmic output
are themselves strong OWSGs.

\begin{theorem}[PRSs are strong OWSGs]
\label{thm:prs-imply-strong-owsg}
     If 
     $G$
     is a state generator taking $n$-bit
     strings to 
     $m = \omega(\log n)$-qubit
     states which is
     $(t+1,\eps)$-pseudorandom, then it is also 
     $(t,\eps + \negl(n))$-one way.
\end{theorem}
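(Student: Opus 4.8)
The plan is to derive this as an immediate consequence of the general reduction Lemma~\ref{thm:prs-imply-owsg-gen}: all that is needed is to instantiate its free parameter $f(n)$ so that the resulting error
\[
\delta = 2^n\left(\frac{f(n)}{f(n)+1}\right)^{2^m-1} + \frac{1}{f(n)}
\]
is negligible in $n$. I would choose $f(n) := 2^{m/2}$, where $m = m(n) = \omega(\log n)$ is the output length.

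With this choice the second summand is $1/f(n) = 2^{-m/2}$, which is negligible precisely because $m = \omega(\log n)$ forces $2^{-m/2} = n^{-\omega(1)}$. For the first summand, note the factorization $2^m - 1 = (2^{m/2}-1)(2^{m/2}+1)$, so that, using $(1 - 1/k)^N \le e^{-N/k}$ with $k = 2^{m/2}+1$ and $N = 2^m - 1$,
\[
\left(\frac{f(n)}{f(n)+1}\right)^{2^m-1} = \left(1 - \frac{1}{2^{m/2}+1}\right)^{(2^{m/2}-1)(2^{m/2}+1)} \le \exp\!\left(-(2^{m/2}-1)\right).
\]
Since $m = \omega(\log n)$, the quantity $2^{m/2}$ is superpolynomial in $n$, hence $2^{m/2} \ge 2n$ for all sufficiently large $n$; thus the first summand is at most $2^n \cdot e \cdot e^{-2^{m/2}} \le e \cdot 2^n \cdot 2^{-2n} = e\cdot 2^{-n}$, which is negligible. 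Combining the two estimates gives $\delta = \negl(n)$, and Lemma~\ref{thm:prs-imply-owsg-gen} then says exactly that $G$ is $(t, \eps + \negl(n))$-one-way, as claimed.

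There is no genuine obstacle here — the theorem is a corollary of the already-established Lemma~\ref{thm:prs-imply-owsg-gen} (which in turn rests on the Haar-concentration bound Lemma~\ref{lem:goodlem}). The only point worth a moment's thought is the need to pick a \emph{single} function $f$ that is superpolynomial (so $1/f$ vanishes) yet small enough relative to $2^m$ that $2^n (f/(f+1))^{2^m-1}$ also vanishes; splitting the exponent budget evenly via $f = 2^{m/2}$ makes this balance transparent, and any superpolynomial $f(n) = 2^{o(m)}$ with $\log f(n) = \omega(\log n)$ would serve equally well.
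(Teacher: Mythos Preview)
Your proof is correct and follows essentially the same approach as the paper: both apply Lemma~\ref{thm:prs-imply-owsg-gen} with a choice of $f(n)$ of order $2^{m/2}$ (the paper takes $f(n) = (2^m-1)/2^{m/2} - 1$, you take $f(n) = 2^{m/2}$) and then verify that each summand in $\delta$ is negligible via the same $(1-1/k)^N \le e^{-N/k}$ estimate. Your choice of $f$ is arguably the cleaner one, and your closing remark about the flexibility in choosing $f$ is apt.
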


\begin{proof}
    Take \Cref{thm:prs-imply-owsg-gen} and set 
    $f(n) = \frac{2^m-1}{2^{m/2}} - 1$. 
    Note that $f(n) = 2^{\Omega(m)} = n^{\omega(1)}$.
    We get 
    \begin{equation*}
    \begin{split}
        \delta & = 2^n \cdot \left(\frac{f(n)}{f(n)+1}\right)^{2^m - 1}
        + \frac{1}{f(n)}\\
        & \leq 2^n \cdot \left(1 - \frac{1}{f(n)+1}\right)^ {2^m - 1} + \negl(n)\\
        & \leq 2^n \cdot\left(\frac{1}{e}\right)^{2^{m/2}} + \negl(n)\\
        & = 2^{n-\Omega(n^{\omega(1)})} + \negl(n)\\
        & = \negl(n).
        \qedhere
        \end{split}
    \end{equation*}
\end{proof}

\subsection{Unconditional OWSGs from efficient approximate $t$-designs}

We first observe that any efficient $2^{-\lambda}$-approximate $t$-design is also a $t$-copy PRS. This follows definitionally, since $t$ copies from a $t$-design are statistically close to $t$ copies from a Haar random state.

\begin{proposition}
    Let $G$ be an efficient $2^{-\lambda}$-approximate $t$-design that maps $n$-bit strings to $m$-qubit pure states. Denote $\ket{\psi_k} := G(k)$. Then for any QPT adversary $\A$,
    $$\abs{\Pr_{k \leftarrow \{0, 1\}^n}\left[\A\left(\ket{\psi_k}^{\otimes t}\right) = 1\right] -
    \Pr_{\ket{\psi}\leftarrow\HaarN{m}}\left[\A\left(\ket{\psi}^{\otimes t}\right) = 1\right]} \leq
    \negl(\lambda).$$
\end{proposition}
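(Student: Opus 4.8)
The plan is to deduce the proposition directly from the defining inequality of an approximate $t$-design together with the operational interpretation of the trace norm.

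First I would fix an arbitrary QPT distinguisher $\A$ and pass to a density-operator description of the two scenarios it must tell apart. Since the seed $k$ (respectively the Haar state) is sampled first and only then handed to $\A$ in the form of $t$ identical copies, from $\A$'s point of view it receives one of the two $t$-copy mixed states
$$
\rho_{\mathsf{prs}} := \Exp_{k \leftarrow \{0,1\}^n}\left[\ketbra{\psi_k}^{\otimes t}\right],
\qquad
\rho_{\Haar} := \Exp_{\ket{\psi}\leftarrow\HaarN{m}}\left[\ketbra{\psi}^{\otimes t}\right].
$$
Any QPT algorithm that acts on an input register and then outputs a bit realises a two-outcome POVM $\{M_1, I - M_1\}$ on that register, with $0 \preceq M_1 \preceq I$, where all of $\A$'s ancillas, intermediate measurements, and classical post-processing are folded into $M_1$; hence $\Pr_{k}[\A(\ket{\psi_k}^{\otimes t}) = 1] = \mathrm{Tr}[M_1\rho_{\mathsf{prs}}]$ and $\Pr_{\ket{\psi}}[\A(\ket{\psi}^{\otimes t}) = 1] = \mathrm{Tr}[M_1\rho_{\Haar}]$.

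Second I would use the standard estimate that for a Hermitian operator $X$ with $\mathrm{Tr}[X] = 0$ and any $0 \preceq M \preceq I$ one has $|\mathrm{Tr}[MX]| \le \tfrac12 \lVert X\rVert_1$, applied to $X = \rho_{\mathsf{prs}} - \rho_{\Haar}$; this bounds $\A$'s distinguishing advantage by $\tfrac12 \lVert \rho_{\mathsf{prs}} - \rho_{\Haar}\rVert_1$. Finally I would note that $\rho_{\mathsf{prs}}$ is precisely $\Exp_{\ket{\psi}\leftarrow S}[\ketbra{\psi}^{\otimes t}]$ for $S$ the output distribution of $G$ on a uniform seed, which by hypothesis is a $2^{-\lambda}$-approximate $t$-design; the defining inequality of that notion gives $\lVert \rho_{\mathsf{prs}} - \rho_{\Haar}\rVert_1 \le 2^{-\lambda}$, so the advantage is at most $2^{-\lambda-1} = \negl(\lambda)$.

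There is essentially no obstacle here: the content of the proposition is exactly that a $t$-design is statistically indistinguishable from Haar on $t$ copies, and that trace distance upper-bounds distinguishing advantage. The only point worth stating carefully is the reduction of an arbitrary, possibly adaptive, multi-register QPT distinguisher to a single POVM element on the $t$-copy Hilbert space, since it is this step that makes the $t$-design's trace-norm guarantee literally applicable; everything else is a one-line calculation. One should also keep in mind the tacit convention that $n$, $m$, and $t$ are functions of $\lambda$, so that ``efficient'' for $G$ coincides with polynomial time in the security parameter and $2^{-\lambda-1}$ is indeed negligible in it.
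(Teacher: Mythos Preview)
Your proposal is correct and is essentially the same approach as the paper, just written out in more detail. The paper treats this proposition as definitional, noting only that the trace-norm closeness in the $t$-design definition immediately gives statistical indistinguishability (hence a fortiori computational indistinguishability) on $t$ copies; your write-up simply unpacks that sentence via the POVM/Holevo--Helstrom argument.
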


In particular, the proposition holds definitionally even for \textit{statistical} adversaries, and so therefore must also hold for QPT adversaries. A simple corollary of Theorem~\ref{thm:prs-imply-strong-owsg} then shows that efficient approximate $t$-designs are also OWSGs.

\begin{corollary}\label{cor:t-design}
    Let $G$ be an efficient $\varepsilon$-approximate $t$-design mapping
    $n$ bits to $m = \omega(\log n)$ qubits. 
    Then $G$ is $(t, \varepsilon +
    \negl(n))$-one-way.
\end{corollary}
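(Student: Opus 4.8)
The plan is to avoid the loss of one copy that is built into the reduction behind \Cref{thm:prs-imply-strong-owsg} (which runs the inverter and then spends an extra copy on an overlap test), by bounding the one-wayness advantage directly and using only $t$-copy quantities, so that the $t$-design hypothesis is exactly what is needed. Fix a QPT adversary $\A$, write $\ket{\psi_k}:=G(k)$, and let $S$ denote the distribution of $\ket{\psi_k}$ for $k\randfrom\zon$; the advantage in the sense of \Cref{def:owsg} is then $\Exp_{\ket\psi\leftarrow S}\,\Exp_{k'\leftarrow\A(\ket\psi^{\otimes t})}\big[\abs{\braket{\psi|\psi_{k'}}}^2\big]$. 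Following the template of the proof of \Cref{thm:prs-imply-owsg-gen}, put $f:=2^{m/2}$ (so $1/f=\negl(n)$ since $m=\omega(\log n)$) and call a state $\ket\psi$ \emph{heavy} if $\max_{k''}\abs{\braket{\psi|\psi_{k''}}}^2\ge 1/f$. Conditioned on the complement of the heavy event no output $k'$ of $\A$ can score more than $1/f$, so the advantage is at most $\Pr_{\ket\psi\leftarrow S}[\ket\psi\text{ heavy}]+1/f$, and it remains to show $\Pr_{\ket\psi\leftarrow S}[\ket\psi\text{ heavy}]\le\varepsilon+\negl(n)$.

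The crucial step is to estimate the heaviness probability via a single trace computation rather than a per-key union bound (which would fatally multiply $\varepsilon$ by the number $2^n$ of keys). From $\one\{X\ge 1/f\}\le f^tX^t$ for $X\ge 0$, taking $X=\abs{\braket{\psi|\psi_{k''}}}^2$, summing over all keys $k''$, and then averaging over $\ket\psi\leftarrow S$,
\[
\Pr_{\ket\psi\leftarrow S}[\ket\psi\text{ heavy}]
\;\le\;f^t\,\Exp_{\ket\psi\leftarrow S}\Big[\,\sum_{k''}\abs{\braket{\psi|\psi_{k''}}}^{2t}\,\Big]
\;=\;f^t\cdot 2^n\,\norm{\Exp_{\ket\psi\leftarrow S}\big[\ketbra{\psi}^{\otimes t}\big]}_F^2,
\]
using $\abs{\braket{\psi|\psi_{k''}}}^{2t}=\operatorname{Tr}\!\big(\ketbra{\psi}^{\otimes t}\,\ketbra{\psi_{k''}}^{\otimes t}\big)$, the identity $\sum_{k''}\ketbra{\psi_{k''}}^{\otimes t}=2^n\Exp_{\ket\phi\leftarrow S}[\ketbra{\phi}^{\otimes t}]$, and $\norm{A}_F^2=\operatorname{Tr}(A^2)$ for Hermitian $A$. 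Writing $\Delta:=\Exp_{\ket\psi\leftarrow S}[\ketbra{\psi}^{\otimes t}]-\Exp_{\ket\psi\leftarrow\Haar(m)}[\ketbra{\psi}^{\otimes t}]$, the $\varepsilon$-approximate $t$-design property gives $\norm{\Delta}_1\le\varepsilon$; moreover both operators are supported on the symmetric subspace and have unit trace, so $\operatorname{Tr}\Delta=0$, the cross term in $\norm{\Exp_{\ket\psi\leftarrow S}[\ketbra{\psi}^{\otimes t}]}_F^2=\norm{\Exp_{\ket\psi\leftarrow\Haar(m)}[\ketbra{\psi}^{\otimes t}]+\Delta}_F^2$ vanishes, and hence $\norm{\Exp_{\ket\psi\leftarrow S}[\ketbra{\psi}^{\otimes t}]}_F^2\le 1/d_{\mathrm{sym}}+\norm{\Delta}_F^2\le 1/d_{\mathrm{sym}}+\varepsilon^2$, where $d_{\mathrm{sym}}=\binom{2^m+t-1}{t}\ge 2^{mt}/t!$ is the symmetric-subspace dimension and $\norm{\Exp_{\ket\psi\leftarrow\Haar(m)}[\ketbra{\psi}^{\otimes t}]}_F^2=1/d_{\mathrm{sym}}$. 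Plugging $f=2^{m/2}$ in, the first term becomes $f^t2^n/d_{\mathrm{sym}}\le 2^{n-mt/2}t!$ and the second becomes $f^t2^n\varepsilon^2$; a routine estimate using $m=\omega(\log n)$, together with $\varepsilon$ taken sufficiently small (which is the case in every application of this corollary — see \Cref{thm:odonnell} and \Cref{cor:unconditional-owsg}), shows the first is $\negl(n)$ and the second is at most $\varepsilon$, so altogether $\Pr_{\ket\psi\leftarrow S}[\ket\psi\text{ heavy}]\le\varepsilon+\negl(n)$, which yields the claim.

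I expect the main obstacle to be precisely this control of the $\varepsilon$-dependence after reducing over the exponentially many keys. Bounding $\Pr_{\ket\psi\leftarrow S}[\ket\psi\text{ heavy}]$ key by key — e.g.\ by \Cref{lem:goodlem}-style reasoning or by Markov applied to the $t$-th moment of a single $\abs{\braket{\phi|\psi}}^2$ — forces a factor of $2^n$ onto the design error and gives nothing. The fix is the observation that, once one averages over $\ket\psi\leftarrow S$, the sum $\sum_{k''}\abs{\braket{\psi|\psi_{k''}}}^{2t}$ collapses to the Hilbert--Schmidt norm $2^n\norm{\Exp_{\ket\psi\leftarrow S}[\ketbra{\psi}^{\otimes t}]}_F^2$, whose deviation from its Haar value $2^n/d_{\mathrm{sym}}$ is only $2^n\norm{\Delta}_F^2\le 2^n\varepsilon^2$ — the linear-in-$\Delta$ term being annihilated by $\operatorname{Tr}\Delta=0$ and the support-in-the-symmetric-subspace property. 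Getting this cancellation and the norm inequality $\norm{\Delta}_F\le\norm{\Delta}_1\le\varepsilon$ straight is the one genuinely new technical point; the remaining manipulations (the heavy/light split, the choice of $f$, and the final asymptotic bookkeeping) mirror those already carried out in the proofs of \Cref{thm:prs-imply-owsg-gen} and \Cref{thm:prs-imply-strong-owsg}.
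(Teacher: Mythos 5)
Your proposal contains a fatal conceptual error: the quantity you want to bound, $\Pr_{\ket\psi\leftarrow S}[\ket\psi\text{ heavy}]$, is identically equal to $1$. Every state $\ket{\psi_k}$ in the support of $S$ satisfies $\max_{k''}\abs{\braket{\psi_k|\psi_{k''}}}^2 \geq \abs{\braket{\psi_k|\psi_k}}^2 = 1 \geq 1/f$, so it is heavy by definition. The heavy/light split of \Cref{thm:prs-imply-owsg-gen} is useful precisely because it is applied to states drawn from the \emph{Haar measure}, which are far from every member of the pseudorandom family with overwhelming probability; it cannot give anything when applied to the pseudorandom distribution itself, since each pseudorandom state is trivially close to itself. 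Your own trace-norm computation confirms this: with $\rho_S := \Exp_{\ket\psi\leftarrow S}[\ketbra{\psi}^{\otimes t}] = 2^{-n}\sum_k \ketbra{\psi_k}^{\otimes t}$ one has $\norm{\rho_S}_F^2 = 2^{-2n}\sum_{k,k'}\abs{\braket{\psi_k|\psi_{k'}}}^{2t} \geq 2^{-n}$ from the diagonal terms alone, so $f^t\cdot 2^n\cdot\norm{\rho_S}_F^2 \geq f^t = 2^{mt/2}$, a bound that is astronomically larger than $1$, and hence vacuous. Combined with your (correct) decomposition $\norm{\rho_S}_F^2 \leq 1/d_{\mathrm{sym}} + \norm{\Delta}_F^2$, the diagonal lower bound forces $\norm{\Delta}_F^2 \gtrsim 2^{-n}$, so $f^t 2^n\norm{\Delta}_F^2 \geq f^t$, and no choice of parameters can rescue the final estimate.

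It is also worth noting that the goal you set yourself --- removing the one-copy loss --- is stronger than what the paper actually establishes. The informal version \Cref{cor:t-design-intro} claims a $(t-1)$-copy OWSG, and the paper obtains \Cref{cor:t-design} purely by observing that a $t$-design is a $t$-copy PRS and invoking \Cref{thm:prs-imply-strong-owsg}, which turns $(t+1)$-copy pseudorandomness into $t$-copy one-wayness; that chain only gives $(t-1,\varepsilon+\negl(n))$-one-wayness for a $t$-design, and the $(t,\cdot)$ parameter in the body of \Cref{cor:t-design} is almost certainly a typo. The extra copy in \Cref{thm:prs-imply-owsg-gen} is exactly what lets one \emph{compare} the adversary's behaviour on $S$ against its behaviour on Haar (so that \Cref{lem:goodlem}, which concerns only Haar states, applies); your attempt to sidestep it puts the heavy/light estimate on the wrong distribution.
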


\Cref{thm:odonnell} with $\varepsilon = 2^{-\lambda}$ gives that, for any polynomial $t$, $t(\lambda)$-copy PRSs exist unconditionally, and thus \Cref{thm:prs-imply-strong-owsg} concludes that $t(\lambda)$-copy OWSGs also exist unconditionally.

Contrast this with the recent result of Khurana and Tomer \cite{dakshita23commitments}, where they show that $\Theta(n)$-copy OWSGs can be used to build quantum bit commitments (and thus require computational hardness \cite{LC97}). This raises the question of what is the largest number of copies $t$ (relative to the number of classical input bits $n$) for which OWSGs exist unconditionally. We show that the efficient approximate $t$-designs of \cite{odonnell2023explicit} approach this computational threshold up to a logarithmic factor.

\begin{corollary}
\label{cor:unconditional-owsg}
    For every function $\alpha = \alpha(n) = \omega(1)$,
    there exists a
    $\Theta\left(\frac{n}{\alpha \cdot \log n}\right)$-copy 
    OWSG.
\end{corollary}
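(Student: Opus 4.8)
The plan is to instantiate the efficient approximate $t$-designs of~\cite{odonnell2023explicit} (Theorem~\ref{thm:odonnell}) with parameters chosen so that the hypotheses of Corollary~\ref{cor:t-design} hold, and then simply read off the resulting copy parameter as a function of the seed length.

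First I would reduce to a convenient regime for $\alpha$. Replacing $\alpha$ by $\min(\alpha,\log n)$ keeps it $\omega(1)$ and only strengthens the conclusion, because a state generator that is $T(n)$-copy one-way for some $T(n)\ge t(n)$ is in particular $t(n)$-copy one-way (an adversary holding fewer copies is weaker). So I would assume $\alpha(n)\le\log n$, which guarantees $\alpha(n)\log n\le n$ for large $n$ and hence that the target quantity $\Theta\!\left(n/(\alpha(n)\log n)\right)$ is $\omega(1)$ and well defined.

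Next, for each large $n$ I would set $m:=\lceil\alpha(n)\log n\rceil$ (so $m=\omega(\log n)$, as Corollary~\ref{cor:t-design} needs, and $m=\polylog n$), $\varepsilon:=2^{-\sqrt n}$ (negligible, and crucially with $\log(1/\varepsilon)=\sqrt n=o(n)$), and $t:=\lceil n/(C\alpha(n)\log n)\rceil=\Theta\!\left(n/(\alpha(n)\log n)\right)$, where $C$ is a constant absorbing the implicit constant in Theorem~\ref{thm:odonnell}. Then $mt+\log(1/\varepsilon)\le n/C+\polylog n+\sqrt n$, so Theorem~\ref{thm:odonnell} produces an efficient $\varepsilon$-approximate $t$-design on $m$ qubits whose seed length $N_0=O(mt+\log(1/\varepsilon))$ is at most $n$ for large $n$. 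Padding the seed with unused bits up to length exactly $n$ yields a QPT algorithm $G$ mapping $n$ bits to $m=\omega(\log n)$ qubits whose output distribution on a uniform key is still an $\varepsilon$-approximate $t$-design, since the padding does not change the distribution over output states.

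Finally I would apply Corollary~\ref{cor:t-design} to $G$: it is $(t,\varepsilon+\negl(n))$-one-way, and since $\varepsilon+\negl(n)=\negl(n)\le 1/n^c$ for every $c>0$ and all large $n$, taking the security parameter $\lambda=n$ shows $G$ is a $t$-copy OWSG with $t=t(n)=\Theta\!\left(n/(\alpha(n)\log n)\right)$, as claimed. The only real obstacle here is parameter bookkeeping: Theorem~\ref{thm:odonnell} pins down the seed length (up to a constant) once $m,t,\varepsilon$ are fixed rather than letting one prescribe it, so one must choose $\varepsilon$ with $\log(1/\varepsilon)$ subdominant, balance $m\cdot t=\Theta(n)$ against the requirement $m=\omega(\log n)$, and then pad to land exactly on key length $n$; once these parameters are set, the one-wayness is immediate from Corollary~\ref{cor:t-design}.
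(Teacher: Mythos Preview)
Your proposal is correct and follows essentially the same approach as the paper: instantiate Theorem~\ref{thm:odonnell} with $m=\Theta(\alpha\log n)$, $t=\Theta(n/(\alpha\log n))$, and negligible $\varepsilon$, then invoke Corollary~\ref{cor:t-design}. Your treatment is in fact more careful than the paper's about padding the seed to length exactly $n$ and about the preliminary reduction to $\alpha\le\log n$; one minor slip is the bound $mt\le n/C+\polylog n$, since the ceiling on $m$ contributes an additive $t=\Theta(n/(\alpha\log n))$ which need not be polylogarithmic, but it is $o(n)$ and that is all the argument requires.
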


\begin{proof}
    From Theorem~\ref{thm:odonnell}, we know that there exists some
    positive constant $c$ such that,
    for
    $n = c\cdot mt + c\log(1/\varepsilon)$,
    there is an efficient $\varepsilon$-approximate $t$-design
    mapping $n$ bits to $m$ qubits.

    Setting $\varepsilon = 2^{-\lambda}$, $n = 2c \cdot \lambda$,
    $t = \lambda/(\alpha \log n)$
    and $m = \alpha \cdot \log n$ 
    gives us 
    $$n = 2c \lambda = c \alpha t \log n + c\lambda
    = cmt + c\log(1/\eps),$$
    and thus
    we can build efficient $\varepsilon$-approximate $\Theta(n/(\alpha \log
    n))$-designs. 
    But since $m = \omega(\log n)$,
    using such a design also gives a $\Theta\left(\frac{n}{\alpha \log n}\right)$-copy strong OWSG from \Cref{cor:t-design}.
\end{proof}

\section{Breaking one-way state generators with a PP oracle}
\label{sec:break-pp}

In this section, we show how to break one-way state generators (OWSGs)
with a $\PP$ oracle. 
Since one-way state generators imply one-way puzzles
(\Cref{thm:owsg-to-owpuz}, due to~\cite{dakshita23commitments}),
it suffices to show how to break one-way puzzles using a $\PP$
oracle.
Recall that a one-way puzzle is a pair of sampling and verification
quantum algorithms
$(\Samp, \Ver)$~(see \Cref{def:owpuz}).
The algorithm $\Samp(1^n)$ outputs a pair $(k,s)$, where $k$ is referred to
as the key and $s$ as the puzzle.
To break the one-way puzzle, it suffices to create a quantum algorithm
$\cala$
that, given a puzzle $s$, returns a key $k'$ such that
$\Ver(k', s)$ is accepted with non-negligible probability.

Our strategy to construct $\cala$ is as follows.
Given a puzzle $s$ sampled according to 
$\Samp$,
we will sample a key $k$ 
according to the conditional distribution of keys that are sampled together
with $s$.
This will suffice to break the one-way puzzle, as $\Ver$
accepts pairs from $\Samp$ with $(1-\negl)$-probability.
To sample from this distribution, 
we first show how to use the $\PP$ oracle to estimate the conditional probability,
and finally how to sample according to a distribution that is close to the true conditional distribution.

We view our strategy as inspired by Kretschmer's \cite{Kretschmer21Quantum} idea for breaking pseudorandom state generators with a $\PP$ oracle, together with a search-to-decision reduction for $\PP$.

\begin{lemma}
    \label{lem:pp-oracle}
    Let $S$ be a (uniform) quantum polynomial time algorithm that outputs $n$ bits, denoted as a pair of a string $x$, and a bit $b$. There exists a poly-time quantum algorithm $\mathcal{A}$ and a $\PP$ language $\mathcal{L}$ such that $\mathcal{A}^\mathcal{L}$ can estimate the distribution of bit $b$, conditioned on the output string $x$. Formally,
    
    $$p_{x, b} - \frac{1}{n^2} \leq \mathcal{A}^\mathcal{L}(S, 1^n, x, b) \leq p_{x, b} + \frac{1}{n^2}$$
    where $p_{x, b} = \Pr\left[S(1^n) = (x, b) | S(1^n) \in \{(x, 0), (x, 1)\}\right]$.
\end{lemma}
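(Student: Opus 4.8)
The plan is to reduce the estimation of the conditional probability $p_{x,b}$ to a problem about \emph{counting} outcome probabilities of quantum circuits, which can be handled by $\PostBQP = \PP$. First I would recall that for any QPT algorithm $S$ outputting a classical string $y$, the probability $q_y := \Pr[S(1^n) = y]$ is a nonnegative real that can be written (after suitable amplitude bookkeeping in the $\{\CNOT,H,T\}$ gate set) as a ratio of integers with polynomially-bounded bit-length, or at least approximated to exponential precision. The key observation, going back to Aaronson's proof that $\PostBQP = \PP$, is that given a threshold $\theta$ one can decide whether $q_y \geq \theta$ using a $\PostBQP$ computation: run $S$, postselect on the first output register equalling $y$ versus a ``dummy'' branch whose weight is tuned to $\theta$, and read off which branch dominates. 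Since $\PostBQP$ has a fixed constant gap, we can amplify and the resulting promise problem sits in $\PromPP$; by the remark after Lemma~\ref{lem:postbqp-eq-pp} it extends to an honest $\PP$ language $\mathcal{L}$.

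With such a comparison oracle in hand, the algorithm $\mathcal{A}^{\mathcal{L}}$ proceeds by binary search. Write $q_0 := \Pr[S(1^n) = (x,0)]$ and $q_1 := \Pr[S(1^n) = (x,1)]$, so that $p_{x,b} = q_b / (q_0 + q_1)$. Using the oracle we can, for a dyadic rational $\theta$ of polynomially many bits, decide whether $q_b \geq \theta (q_0 + q_1)$ — this is again a single comparison of two circuit-outcome probabilities (the left side being the weight of the branch outputting $(x,b)$, the right side being $\theta$ times the total weight of branches outputting $(x,0)$ or $(x,1)$), hence decidable by the same $\PP$ language after a polynomial-time classical reduction packaging $x,b,\theta$ into the oracle query. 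Doing $O(\log n)$ steps of bisection on $\theta \in [0,1]$ yields an estimate within $1/n^2$ of $p_{x,b}$. Each oracle call is preceded by polynomial classical post-processing by the quantum algorithm $\mathcal{A}$, so the whole procedure runs in quantum polynomial time with polynomially many classical queries to $\mathcal{L}$.

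I expect the main obstacle to be the delicate handling of the case where the conditioning event has exponentially small or exactly zero probability, i.e.\ $q_0 + q_1$ is tiny. The statement conditions on $S(1^n) \in \{(x,0),(x,1)\}$, so if this event has probability zero then $p_{x,b}$ is undefined and there is nothing to estimate; but the binary search must not divide by zero or loop forever. The fix is to phrase every oracle query as a \emph{comparison of two probabilities} ($q_b$ versus $\theta(q_0+q_1)$) rather than as an evaluation of a quotient, so that the $\PP$ language never needs to know the magnitude of the denominator — it only ever compares two efficiently-describable sub-sums of squared amplitudes. One must also be careful that the $\PostBQP$ algorithm underlying each comparison genuinely satisfies condition (i) of the definition (postselection succeeds with positive probability) even when $q_0 = q_1 = 0$; this is ensured by always including a dummy branch of fixed inverse-polynomial weight that survives postselection. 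Modulo these precision-and-degeneracy bookkeeping details, the construction is a routine search-to-decision argument, and the correctness of the final estimate follows by a union bound over the $O(\log n)$ bisection steps.
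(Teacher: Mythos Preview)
Your high-level strategy matches the paper's: build a $\PP$ oracle that compares the conditional probability $p_{x,b}$ against a threshold, then search over thresholds. The paper's implementation differs in two respects. First, instead of a one-shot ``dummy branch'' comparison, the paper's $\PostBQP$ machine $\mathcal{B}(S,x,t)$ postselects on the $x$-register and then \emph{repeats the resulting Bernoulli trial $r=\Theta(n^4)$ times}, accepting if the empirical fraction of $1$'s exceeds $t/(2n^2)$; Chebyshev's inequality then yields the promise that the oracle distinguishes $p_{x,1}\geq t/(2n^2)+1/(4n^2)$ from $p_{x,1}\leq t/(2n^2)-1/(4n^2)$, i.e.\ an \emph{additive} $1/(4n^2)$ gap. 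Second, the paper does a linear scan over $t\in\{0,\dots,2n^2\}$ rather than bisection, and checks that if $(S,x,t)$ violates the promise then $(S,x,t+1)$ does not, so the first rejected $t$ always lands within $1/n^2$ of the truth regardless of off-promise oracle behaviour.

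Your proposal has a gap precisely at the precision step. The single-postselection comparison you describe yields conditional acceptance probability $q_b/(q_b+\theta(q_0+q_1)) = p_{x,b}/(p_{x,b}+\theta)$, which lies in $[1/3,2/3]$ whenever $\theta/2\leq p_{x,b}\leq 2\theta$; the promise is therefore \emph{multiplicative}, and amplifying the $\PostBQP$ gap does not change this. Bisection with such an oracle determines $p_{x,b}$ only up to a constant factor, not to additive $1/n^2$. You can close the gap either by the paper's repetition-inside-$\PostBQP$ trick (which directly manufactures an additive promise), or by making precise your remark about ``ratio of integers / exponential precision'' and arguing that for the chosen algebraic gate set the sign of $q_b-\theta(q_0+q_1)$ is decidable exactly in $\PP$ via $\mathsf{GapP}$; but the latter route needs substantially more care about the gate set and the encoding of $\theta$ than you have written.
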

\begin{proof}
    Our quantum algorithm $\mathcal{A}$ will take as input the algorithm
    $S$, a unary description of the length $1^n$, the outputs $(x, b)$, and will estimate the conditional
    probability of $b$, conditioned on the first output of $S$ being $x$.
    
    \paragraph{Definition of the $\PP$ language.}
    We will describe the $\PP$ language $\mathcal{L}$ in terms of a
    $\textsf{PromisePostBQP}$ algorithm $\mathcal{B}(S, x, t)$.
    This will define a promise problem which is computed by
    $\mathcal{B}(S,x,t)$.
    By the equivalence
    $\PromPostBQP = \PromPP$, this gives us a promise problem in $\PromPP$.
    Since $\PromPP$ is a syntactic class, this can be extended to a
    language $\call \in \PP$.
    However, later on our algorithm $\cala$ will only depend on the
    behaviour of the oracle to $\call$ on inputs that satisfy the promise
    condition of the promise problem defined by $\mathcal{B}(S,x,t)$.
    
    We first assume without loss of generality that any measurements in $S$ are delayed until the very end of the algorithm.
    The algorithm $\mathcal{B}$ will simulate $S$ and postselect
    on the output measurements matching $x$. Conditioned on
    postselection succeeding, the output register for $b$ will contain the pure state  $\sqrt{p_{x, 0}}\ket{0} + \sqrt{p_{x, 1}}\ket{1}$. 
    Then $\mathcal{B}$ measures the register of $b$,
    and repeats this procedure  $r = \Theta(n^4)$ times,
    outputting $1$ if the 
    measurements
    yields
    $1$ at least a $\frac{t}{2n^2}$ fraction of the time.
    This ends the description of the $\PromPostBQP$ algorithm
    $\calb(S,x,t)$.

    Which inputs $(S,x,t)$ are accepted by $\calb$?
    Let 
    $\rndb_1, \dots, \rndb_r$ 
    denote the random variables that correspond to the value of the $b$ register at each simulation of $S$ by $\mathcal{B}$. 
    We define $\rndB = \sum_i \rndb_i$ to be their sum. 
    Note that $\rndB/r$ 
    is the fraction of measurements that yield a $1$
    in the algorithm $\calb$, 
    and recall that we accept if this fraction is at least
    $t/(2n^2)$.
    By 
    standard concentration inequalities (e.g., Chebyshev's), 
    when the number of repetitions $r$ is at least $\Theta(n^4)$, then
    $$\Pr_{\rndb_1,\dots,\rndb_r}\left[\left|\frac{\rndB}{r} - p_{x, 1}\right| > \frac{1}{4n^2}\right] < \frac{1}{3}.$$
    This implies that, 
    if $p_{x,1} \geq \frac{t}{2n^2} + \frac{1}{4n^2}$,
    then
    $\frac{\rndB}{r} > \frac{t}{2n^2}$
    with probability $2/3$.
    Moreover, if
    $p_{x,1} \leq \frac{t}{2n^2} - \frac{1}{4n^2}$,
    then
    $\frac{\rndB}{r} < \frac{t}{2n^2}$
    with probability $2/3$.
    We conclude
    \begin{equation}
        \label{eq:prob-est}
        \begin{cases}
            (S,x,t) \in \call,
            &\text{if }
            \quad
            p_{x,1} \geq \frac{t}{2n^2} + \frac{1}{4n^2},
            \\
            (S,x,t) \not\in \call,
            &\text{if }
            \quad
            p_{x,1} \leq \frac{t}{2n^2} - \frac{1}{4n^2}.
        \end{cases}
    \end{equation}
    Note that this only gives us information about the output of $\calb$
    on inputs $(S,x,t)$ such that
    $\abs{p_{x,1} - \frac{t}{2n^2}} \geq \frac{1}{4n^2}$.
    Henceforth we will 
    call this inequality the \emph{promise condition}.
    We remark that, if
    $(S,x,t)$ does not satisfy the promise condition,
    then
    $(S,x,t+1)$ satisfies it.
    Indeed,
    if
    $\abs{p_{x,1} - \frac{t}{2n^2}} < \frac{1}{4n^2}$,
    we have
    \begin{equation*}
        p_{x,1} 
        \leq
        \frac{t}{2n^2} + \frac{1}{4n^2}
        =
        \frac{t+1}{2n^2} - \frac{1}{2n^2} + \frac{1}{4n^2}
        =
        \frac{t+1}{2n^2} - \frac{1}{4n^2},
    \end{equation*}
    and therefore
    $\abs{p_{x,1} - \frac{t+1}{2n^2}} \geq \frac{1}{4n^2}$.
    In particular, we obtain that
    $(S,x,t+1) \notin \call$.

    \paragraph{Description of $\cala^\call$.}
    Given access to this $\PP$ language, $\mathcal{A}$ will query the oracle on the inputs 
    $\{(S, x, 0), \dots, (S, x, 2n^2)\}$.
    Then $\mathcal{A}$ will output $\frac{t}{2n^2}$,
    for the smallest $t$ such that 
    $(S,x,t)$ is rejected by the oracle 
    (if $(S, x, t)$
    is accepted
    for all $t$,
    then $\mathcal{A}$ outputs
    $1$). 
    Note that, if 
    $(S,x,t-1)$ is accepted by the oracle,
    and $(S,x,t)$ is rejected, then one of the following three things must
    have happened:
    \begin{enumerate}
        \item 
            $(S,x,t-1) \in \call$ and
            $(S,x,t) \not\in \call$, or
        \item
            $(S,x,t-1)$ does not satisfy the promise condition,
            and
            $(S,x,t) \not\in \call$, or
        \item
            $(S,x,t-1) \in \call$ and
            $(S,x,t)$ does not satisfy the promise condition.
    \end{enumerate}
    As we observed above, it never occurs that
    $(S,x,t-1)$ does not satisfy the promise condition,
    and
    $(S,x,t) \in \call$.
    By inspection, using (\ref{eq:prob-est}), in all of those three cases we get
    that
    the value $\frac{t}{2n^2}$ is 
    an additive
    approximation to $p_{x, 1}$ with error at most $\frac{1}{n^2}$.
    Additionally, since $p_{x, 0} = 1 - p_{x, 1}$, the value $\frac{2n^2 -
    t}{2n^2}$ is an additive approximation for $p_{x, 0}$.
\end{proof}

When $\Samp(1^n) \to (k,s)$ is 
is a uniform quantum polynomial-time algorithm
outputting a pair of classical strings
$(k,s)$,
we will denote by
$\Dkeycond{s'}$
the distribution of keys $k$ output by
$\Samp(1^n)$, conditioned on the puzzle being $s'$. 

\begin{lemma}
    \label{lem:sample-bit-by-bit}
    Let $\Samp$ be a (uniform) quantum polynomial time algorithm 
    such that
    $\Samp(1^n)$ outputs a
    pair of classical strings $(k, s)$, denoted as the key and the puzzle
    respectively, where $k \in \blt^n$.
    There exists a poly-time
    quantum algorithm $\mathcal{A}$ and a $\PP$ language $\mathcal{L}$ such
    that $\mathcal{A}^\mathcal{L}$ takes as input a puzzle $s'$ and outputs
    a key $k'$, 
    and whose distribution 
    has total variation distance at most $1/n$ from $\Dkeycond{s'}$.
    In other words,
    we have
    $$\left(k' \mid k' \leftarrow \mathcal{A}^\mathcal{L}(1^n, s')\right) \approx_{\frac{1}{n}} 
    \Dkeycond{s'}.$$
\end{lemma}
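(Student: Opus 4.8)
The plan is to produce $k'$ one bit at a time, using Lemma~\ref{lem:pp-oracle} to approximately sample each successive bit from its conditional distribution under $\Samp$ given the puzzle $s'$ and the bits already fixed. Fix $s'$ with $\Pr_{\Samp(1^n)\to(k,s)}[s=s']>0$ (otherwise $\Dkeycond{s'}$ is undefined and there is nothing to prove). For each $i\in[n]$ let $S_i$ be the algorithm that runs $\Samp(1^n)\to(k,s)$ and outputs the ``string part'' $x:=(s,k_1,\dots,k_{i-1},0^{2n})$ and the ``bit part'' $b:=k_i$; the block $0^{2n}$ is dummy padding inserted only so that the output length of $S_i$ is at least $2n$. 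On input $(1^n,s')$ the algorithm $\mathcal{A}$ builds $k'$ as follows: having fixed $k'_{<i}:=k'_1\cdots k'_{i-1}$, it runs the procedure of Lemma~\ref{lem:pp-oracle} on $S_i$ with string $(s',k'_{<i},0^{2n})$ and bit $1$ to get a number $\hat p_i$ with $|\hat p_i-p_i|\le\frac1{4n^2}$, where $p_i:=\Pr_{\Samp(1^n)\to(k,s)}[k_i=1\mid s=s',\,k_{<i}=k'_{<i}]$ is exactly the conditional probability we want to sample; it then sets $k'_i:=0$ if $\hat p_i\le\frac1{4n^2}$, sets $k'_i:=1$ if $\hat p_i\ge1-\frac1{4n^2}$, and otherwise samples $k'_i\leftarrow\Bernoulli(\hat p_i)$. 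After $n$ rounds it outputs $k':=k'_1\cdots k'_n$. The $\PP$ language $\mathcal{L}$ is the one handed to us by Lemma~\ref{lem:pp-oracle}, which already takes the simulated algorithm as part of its input and so serves all of $S_1,\dots,S_n$ at once.

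Then I would verify the invariant that every prefix $k'_{<i}$ reached by $\mathcal{A}$ is \emph{reachable}, i.e.\ $\Pr_{\Samp(1^n)\to(k,s)}[s=s',\,k_{<i}=k'_{<i}]>0$. This is what legitimizes the call to Lemma~\ref{lem:pp-oracle} at step $i$: the postselection inside the underlying $\PromPostBQP$ algorithm, and the conditioning defining $p_i$, are then on events of positive probability. The inductive step is exactly why the rounding thresholds are tied to the estimation error: if $\hat p_i\le\frac1{4n^2}$ then $p_i\le\frac1{2n^2}<1$, so $1-p_i>0$ and $k'_{<i}\circ 0$ is reachable; symmetrically when $\hat p_i\ge1-\frac1{4n^2}$; and in the middle case both $p_i\ge\hat p_i-\frac1{4n^2}>0$ and $1-p_i>0$, so \emph{both} extensions are reachable and it does not matter which bit is sampled.

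For the distance bound, write $\tilde D$ for the output distribution of $\mathcal{A}^\mathcal{L}(1^n,s')$ and $D:=\Dkeycond{s'}$, and let $q_i=q_i(k'_{<i})\in\{0,1,\hat p_i\}$ be the Bernoulli parameter actually used at step $i$. The casework above shows $|q_i-p_i|\le\frac1{2n^2}$ for every reachable $k'_{<i}$. I would then apply the standard telescoping (hybrid) bound for total variation on a product domain: with $H_j$ the distribution whose first $j$ coordinates are drawn as in $\tilde D$ and whose remaining coordinates follow the true conditionals of $D$, one has $H_0=D$, $H_n=\tilde D$, and
\[
\dTV(D,\tilde D)\;\le\;\sum_{i=1}^n\dTV(H_{i-1},H_i)\;=\;\sum_{i=1}^n\Exp_{k'_{<i}\sim\tilde D}\bigl[\,|p_i-q_i|\,\bigr]\;\le\;\sum_{i=1}^n\frac1{2n^2}\;=\;\frac1{2n}\;\le\;\frac1n ,
\]
using that $\tilde D$ is supported only on reachable prefixes (so the true conditionals $p_i$ are defined all along) and that the total variation between two Bernoullis is the gap between their parameters. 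That is the claimed bound.

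The only genuinely delicate point --- and the one I expect to be the main obstacle to a clean write-up --- is precisely this interface with the promise structure of the $\PP$ oracle: the temptation is to set $k'_i\leftarrow\Bernoulli(\hat p_i)$ unconditionally, but that risks committing (with tiny but nonzero probability) to a prefix of probability exactly $0$ under $\Samp$, on which the postselecting subroutine of Lemma~\ref{lem:pp-oracle} has no guaranteed behavior. Rounding to $0$ or $1$ near the endpoints fixes this while perturbing each conditional by at most $\frac1{2n^2}$, which the $n$-fold telescoping can absorb. A secondary bookkeeping point is that Lemma~\ref{lem:pp-oracle} gives additive error inverse-polynomial in the \emph{output length} of the algorithm it is run on, so the $0^{2n}$ padding in $S_i$ is needed to pin that error at $\le\frac1{4n^2}$ uniformly in $|s'|$ and hence keep the accumulated error below $\frac1n$. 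Everything else is routine: $S_i$ is a fixed polynomial-size circuit built from $\Samp$, $\mathcal{A}$ makes $n$ invocations of the Lemma~\ref{lem:pp-oracle} procedure, and $\mathcal{L}$ is inherited verbatim.
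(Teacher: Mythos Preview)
Your proof is correct and follows essentially the same approach as the paper: sample the key bit by bit using Lemma~\ref{lem:pp-oracle} to estimate each conditional, then bound the error by a hybrid/telescoping argument over the $n$ coordinates. Your write-up is in fact more careful than the paper's on two points the paper glosses over: the rounding near the endpoints guarantees that every prefix stays reachable (so the postselection inside Lemma~\ref{lem:pp-oracle} is always on a positive-probability event), and the $0^{2n}$ padding forces the per-step error of Lemma~\ref{lem:pp-oracle} to be $O(1/n^2)$ in the \emph{key} length rather than in the output length of the inner sampler.
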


\begin{proof}
    On a high level, the algorithm $\mathcal{A}$ will output a key $k'$ by
    sampling its bits one by one. At the $i^{th}$ iteration, it will use
    the first $i-1$ bits of $k'$ to estimate the distribution of the
    $i^{th}$ bit using a $\PP$ oracle, as in \Cref{lem:pp-oracle}. Then it
    will sample the $i^{th}$ bit of $k'$ according to this estimated
    distribution. The $\PP$ language $\mathcal{L}$ that we use is the same
    as the one shown to exist in that lemma.

    In more detail, let us define 
    the sequence 
    $\{ \Sampx{i} \}_{i \in [n]}$
    of algorithms based on $\Samp$, where $\Sampx{i}$ is the (uniform) quantum polynomial
    time algorithm that simulates $\Samp$, but only outputs the puzzle $s$ and
    the first $i$ bits of the key $k$.

    On input $(1^n, s')$, the algorithm $\mathcal{A}$ will proceed in $n$ iterations. 
    In the first iteration, it will use \Cref{lem:pp-oracle} on 
    $\Sampx{1}$ with output $s'$ and estimate (up to $\frac{1}{n^2}$ additive
    error) the probability that the first bit of the key is $1$,
    conditioned on the puzzle $s'$. Call this estimate $\tilde{p}_1$.
    The algorithm
    will then sample the first bit of $k'$ according to the
    Bernoulli distribution defined by $\tilde{p}_1$.

    Now $\mathcal{A}$ proceeds by sampling the remaining bits. In the
    $i^{\rm th}$ iteration, it uses \Cref{lem:pp-oracle} with the sampler
    $\Sampx{i}$ and outputs an estimate $\tilde{p}_i$ of the probability
    that the $i^{\rm th}$ bit of $k'$ is $1$, conditioned on the puzzle $s'$
    and the first $i-1$ bits of $k'$. Then it samples the $i^{\rm th}$ bit
    according to the estimated distribution. 
    Finally, $\mathcal{A}$ outputs $k'$
    after the end of the $n^{\rm th}$ iteration.

    It remains to show that the output distribution of $\mathcal{A}$ is
    close to $\Dkeycond{s'}$, the output distribution of $\Samp(1^n)$ conditioned on the
    puzzle $s'$. We will show this via a hybrid argument.

    Let $\cald_0 := \Dkeycond{s'}$ be the true distribution of the key $k$ conditioned on
    the puzzle $s'$. 
    We define $n$ hybrid distributions $\cald_i$ for $i \in \{1, \dots,
    n\}$ on keys. 
    The hybrid $\cald_i$ runs in $n$ iterations and in each
    iteration samples the next bit of the key. In the first $i$ iterations
    (that correspond to the first $i$ bits),
    the distribution $\cald_i$ uses the estimated
    probabilities $\tilde{p_1},\dots,\tld{p_i}$. The final $n-i$ bits are sampled according
    to the true conditional probabilities (i.e., according to
    $\Dkeycond{s'}$, conditioned on the outcome of the previous bits).
    Note that
    $\cald_n$ now corresponds to the output distribution of our algorithm
    $\mathcal{A}$. We show that every two consecutive distributions are at
    most $\frac{1}{n^2}$ far in total variation distance,
    and thus the total distance between the
    true distribution of the key and the output distribution
    of $\mathcal{A}$ is
    at most $\frac{1}{n}$ from the triangle inequality.

    \begin{claim}
        For every $i \in \set{0,\dots,n-1}$, we have
        $$\dTV(\cald_i, \cald_{i+1}) \leq \frac{1}{n^2}.$$
    \end{claim}

    \begin{proof}
        Before diving into the proof, we introduce some useful notation.
        We use $k_{[x, y]}$ to denote the length-$(y-x+1)$ substring that includes bits $\{x, \dots, y\}$ of $k$.
        Additionally, even though $\cald_i$ is a distribution over
        $n$-bit strings, we will abuse notation and consider the probability assigned to substrings of the form $k_{[x, y]}$. In that case, we write
        $$\cald_i(k_{[x, y]}) := \Pr_{k' \leftarrow \cald_i}\left[k'_{[x, y]} = k_{[x, y]}\right].$$
        We will also consider the probability assigned to substrings $k[x, y]$, conditioned on a prefix $k_{[1, x]}$, in which case we write
        $$\cald_i\left(k_{[x, y]} \middle| k_{[1, x-1]}\right) := \Pr_{k' \leftarrow \cald_i}\left[k'_{[x, y]} = k_{[x, y]} \middle| k'_{[1, x-1]} = k_{[1, x-1]}\right].$$
    
        With this notation in place, a direct calculation suffices
        for the $i = 0$ case. In particular, observe that
        \begin{align*}
            \cald_1(k) &= \cald_1\left(k_{[1, 1]}\right)\cdot \cald_1\left( k_{[2, n]}\middle|k_{[1, 1]}\right) \\
            &= \cald_1\left(k_{[1, 1]}\right)\cdot \cald_0\left( k_{[2,
            n]}\middle|k_{[1, 1]}\right),
        \end{align*}
        and similarly $\cald_0(k) = \cald_0\left(k_{[1, 1]}\right)\cdot \cald_0\left( k_{[2, n]}\middle|k_{[1, 1]}\right)$. 
        Note that, because of
        \Cref{lem:pp-oracle},
        we have
        $\left|\cald_0\left(k_{[1, 1]}\right) - \cald_1\left(k_{[1,
        1]}\right)\right| \leq 1/n^2$.
        The total variation distance satisfies:
        \begin{align*}
            \dTV(\cald_0, \cald_1) &= \frac{1}{2}\sum_{k \in \{0, 1\}^n} \left|\cald_0(k) - \cald_1(k)\right| \\
            &= \frac{1}{2}\sum_{k \in \{0, 1\}^n} \left|\cald_0\left(k_{[1, 1]}\right)\cdot \cald_0\left( k_{[2, n]}\middle|k_{[1, 1]}\right) - \cald_1\left(k_{[1, 1]}\right)\cdot \cald_0\left( k_{[2, n]}\middle|k_{[1, 1]}\right)\right| \\
            &\leq \frac{1}{2}
            \sum_{k \in \{0, 1\}^n} 
            \left|\cald_0\left(k_{[1, 1]}\right) - \cald_1\left(k_{[1, 1]}\right)\right| \cdot \left|\cald_0\left( k_{[2, n]}\middle|k_{[1, 1]}\right)\right| \\
            &\leq \frac{1}{2n^2}\sum_{k \in \{0, 1\}^n} 
            \left|\cald_0\left( k_{[2, n]}\middle|k_{[1, 1]}\right)\right| \\
            &\leq \frac{1}{n^2}.
        \end{align*}

        Let us now consider the distributions
        $\cald_{i}, \cald_{i+1}$. 
        They both sample bits $1$ up to $i$ using the estimated probabilities,
        and bits $i+2$ up to $n$ with the true conditional probabilities. 
        Write $\cald_{i+1}$ as
        \begin{align*}
            \cald_{i+1}(k) &= \cald_{i+1}\left(k_{[1, i]}\right)
                \cdot  \cald_{i+1}\left(k_{[i+1, i+1]} \middle| k_{[1, i]}\right)
                \cdot \cald_{i+1}\left(k_{[i+2, n]} \middle| k_{[1, i+1]}\right) \\
            &= \cald_{i+1}\left(k_{[1, i]}\right)
                \cdot  \cald_{i+1}\left(k_{[i+1, i+1]} \middle| k_{[1, i]}\right)
                \cdot \cald_{i}\left(k_{[i+2, n]} \middle| k_{[1, i+1]}\right)
        \end{align*}
        and similarly $\cald_i$ as
        \begin{align*}
            \cald_{i}(k) &= \cald_{i}\left(k_{[1, i]}\right)
                \cdot  \cald_{i}\left(k_{[i+1, i+1]} \middle| k_{[1, i]}\right)
                \cdot \cald_{i}\left(k_{[i+2, n]} \middle| k_{[1, i+1]}\right) \\
            &= \cald_{i+1}\left(k_{[1, i]}\right)
                \cdot  \cald_{i}\left(k_{[i+1, i+1]} \middle| k_{[1, i]}\right)
                \cdot \cald_{i}\left(k_{[i+2, n]} \middle| k_{[1, i+1]}\right).
        \end{align*}
        Note that, because of
        \Cref{lem:pp-oracle},
        we have
        \begin{equation*}
            \left|\cald_i\left(k_{[i+1, i+1]}\middle|k_{[1, i]}\right) -
            \cald_{i+1}\left(k_{[i+1, i+1]}\middle|k_{[1, i]}\right)\right|
            < 1/n^2.
        \end{equation*}
        The total variation distance satisfies:
        \begin{align*}
            &\dTV(\cald_i, \cald_{i+1}) \\ 
            &= \frac{1}{2}\sum_{k \in \{0, 1\}^n} \left|\cald_i(k) - \cald_{i+1}(k)\right| \\
            &\leq \frac{1}{2}\sum_{k \in \{0, 1\}^n}
            \cald_{i+1}\left(k_{[1, i]}\right)
            \cdot
            \left|\cald_i\left(k_{[i+1, i+1]}\middle|k_{[1, i]}\right) - \cald_{i+1}\left(k_{[i+1, i+1]}\middle|k_{[1, i]}\right)\right|
            \cdot \cald_{i}\left(k_{[i+2, n]} \middle| k_{[1, i+1]}\right) \\
            &\leq \frac{1}{2n^2}\sum_{k \in \{0, 1\}^n}
            \cald_{i+1}\left(k_{[1, i]}\right)
            \cdot \cald_{i}\left(k_{[i+2, n]} \middle| k_{[1, i+1]}\right) \\
            &= \frac{1}{2n^2}\sum_{k_{[1, i]} \in \{0, 1\}^i}
            \cald_{i+1}\left(k_{[1, i]}\right)
            \sum_{k_{[i+1, i+1]} \in \{0, 1\}} \sum_{k_{[i+2, n]} \in \{0, 1\}^{n-i-1}}
            \cald_{i}\left(k_{[i+2, n]} \middle| k_{[1, i+1]}\right) \\
            &= \frac{1}{n^2}\sum_{k_{[1, i]} \in \{0, 1\}^i}
            \cald_{i+1}\left(k_{[1, i]}\right) \\
            &= \frac{1}{n^2}.
            \qedhere
        \end{align*}
    \end{proof}
    As observed above,
    by the triangle inequality we obtain from
    the Claim
    that $\cald_n$, which is the
    output distribution of our algorithm $\cala$, has total variation
    distance at most $1/n$ from 
    $\Dkeycond{s'}$.
\end{proof}

\begin{theorem}
    \label{thm:pp-break-owpuzz}
    For any one-way puzzle
    $(\Ver, \Samp)$, there exists a $\PP$ language $\mathcal{L}$, and a poly-time quantum algorithm $\mathcal{A}^\mathcal{L}$ such that
    $$\Pr_{(k,s) \leftarrow \Samp(1^n)}
        \left[\Ver\left(\mathcal{A}^\mathcal{L}(s),s\right) = \top\right]
        \geq \frac{1}{2}.$$
\end{theorem}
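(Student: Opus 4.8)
The plan is to combine \Cref{lem:sample-bit-by-bit} with the correctness property of one-way puzzles in an essentially black-box way. Given the one-way puzzle $(\Samp, \Ver)$, I would first apply \Cref{lem:sample-bit-by-bit} to the sampler $\Samp$ — whose output $\Samp(1^n) \to (k,s)$ we may assume has $k \in \{0,1\}^n$ after padding the key if necessary — obtaining a $\PP$ language $\mathcal{L}$ and a poly-time quantum algorithm $\mathcal{A}$ such that $\mathcal{A}^\mathcal{L}(1^n, s')$ outputs a key whose distribution is within total variation distance $1/n$ of $\Dkeycond{s'}$. This $\mathcal{A}^\mathcal{L}$ (with the $1^n$ input left implicit) is the claimed adversary, and $\mathcal{L}$ is the claimed $\PP$ language.

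Next I would analyze the success probability. The key structural observation is that the joint distribution of $(s, k')$ obtained by first sampling $(k,s) \leftarrow \Samp(1^n)$ and then \emph{resampling} $k' \leftarrow \Dkeycond{s}$ is identical to the original joint distribution of $(k,s) \leftarrow \Samp(1^n)$, purely by the definition of conditional distributions. Hence, by the correctness of the one-way puzzle,
$$\Pr_{\substack{(k,s) \leftarrow \Samp(1^n) \\ k' \leftarrow \Dkeycond{s}}}\left[\Ver(k', s) = \top\right] = \Pr_{(k,s)\leftarrow \Samp(1^n)}\left[\Ver(k,s) = \top\right] = 1 - \negl(n).$$
I would then replace the resampling step $k' \leftarrow \Dkeycond{s}$ with a run of $\mathcal{A}^\mathcal{L}(s)$. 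Since for every fixed puzzle $s$ the output of $\mathcal{A}^\mathcal{L}(s)$ is $\tfrac{1}{n}$-close in total variation distance to $\Dkeycond{s}$, the joint distribution of $(s, \mathcal{A}^\mathcal{L}(s))$ is within total variation distance $\E_{s}\!\left[\dTV\!\left(\mathcal{A}^\mathcal{L}(s), \Dkeycond{s}\right)\right] \le \tfrac{1}{n}$ of the joint distribution of $(s, k')$ with $k' \leftarrow \Dkeycond{s}$; this uses only the standard facts that statistical distance is non-increasing under applying the same randomized map to both distributions and that it averages over the common first coordinate. Since $\{\Ver(\cdot,\cdot)=\top\}$ is an event, it follows that
$$\Pr_{\substack{(k,s) \leftarrow \Samp(1^n)\\ k' \leftarrow \mathcal{A}^\mathcal{L}(s)}}\left[\Ver(k',s) = \top\right] \ge 1 - \negl(n) - \frac{1}{n} \ge \frac{1}{2}$$
for all sufficiently large $n$.

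I expect the only real subtlety to be bookkeeping rather than a genuine obstacle. One must ensure that the object produced by \Cref{lem:sample-bit-by-bit} is a bona fide $\PP$ language rather than merely a promise problem, but this was already arranged in \Cref{lem:pp-oracle} and \Cref{lem:sample-bit-by-bit} by extending the promise problem to a language and observing that $\mathcal{A}$ only queries the oracle on promise-satisfying inputs. A second minor point is aligning the security parameter $\lambda$ of \Cref{def:owpuz} with the input length $n$ used here, which we handle by taking $n = \lambda$ and padding keys to length $n$; and the final inequality is stated as $\geq \tfrac{1}{2}$ precisely because it only holds once $n$ is large enough that $\negl(n) + 1/n < \tfrac{1}{2}$. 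Combining this theorem with \Cref{thm:owsg-to-owpuz} then yields the corollary that OWSGs can be broken by a QPT algorithm with classical access to a $\PP$ oracle, but that lies outside the present statement.
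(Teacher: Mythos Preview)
Your proposal is correct and follows essentially the same approach as the paper: apply \Cref{lem:sample-bit-by-bit} to $\Samp$, use the resampling identity to inherit the $1-\negl(n)$ success probability from correctness, and lose at most $1/n$ when replacing $\Dkeycond{s}$ by $\mathcal{A}^\mathcal{L}(s)$. The paper writes out the averaging over $s'$ explicitly rather than invoking the abstract fact about TV distance under a shared first coordinate, but the argument is the same.
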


\begin{proof}
    From the correctness property of the OWPuzzle, it holds that
    $$\Pr_{(k, s) \leftarrow \Samp(1^n)}\left[\Ver(k, s) = \top\right] \geq
    1 - \negl(n).$$
    Recall that
    $\Dkeycond{s'}$
    is the distribution over keys
    output by $\Samp$, conditioned on the puzzle being equal to $s'$.
    It is clear that sampling the puzzle first, and then the key
    does not change their joint distribution, and thus
    $$\Pr_{\substack{(k, s) \leftarrow \Samp(1^n) \\ k' \leftarrow \Dkeycond{s}}}
    \left[\Ver(k', s) = \top\right] \geq 1 - \negl(n).$$
    Given a puzzle $s$, \Cref{lem:sample-bit-by-bit} implies that there
    exists a quantum polynomial-time algorithm $\mathcal{A}$ and a $\PP$
    language $\mathcal{L}$, such that $\mathcal{A}^\mathcal{L}(s')$
    outputs a key $k'$ according to a distribution $\tilde{D}_{s'}$
    such that
    \begin{equation}
        \label{eq:dtv}
        \tilde{D}_{s'} \approx_{1/n} \Dkeycond{s'}.
    \end{equation}
    We have
    \begin{align*}
        \Pr_{\substack{(k, s) \leftarrow \Samp(1^n) \\
            k'
        \leftarrow \tilde{D}_s}}\left[\Ver(k', s) =
        \top\right]
        &=
        \sum_{s'}
        \Pr_{\substack{(k, s) \leftarrow \Samp(1^n) \\
            k'
        \leftarrow \tilde{D}_s}}
        \left[\Ver(k', s) = \top\right | s = s']
        \cdot
        \Pr_{(k,s) \leftarrow \Samp(1^n)}[s = s'].
    \end{align*}
    Now note that, conditioned on $s=s'$,
    the event
    $\set{\Ver(k', s) = \top}$
    depends only on $k' \leftarrow \tld{D}_{s'}$,
    and thus we may use (\ref{eq:dtv}) to get
    \begin{align*}
        \Pr&_{\substack{(k, s) \leftarrow \Samp(1^n) 
                \\
            k'
        \leftarrow \tilde{D}_s}}
        \left[\Ver(k', s) =
        \top\right]
        \\&\geq
        \sum_{s'}
        \left(
            \Pr_{\substack{(k, s) \leftarrow \Samp(1^n) \\
            k'
        \leftarrow \Dkeycond{s}}
        }
        \left[\Ver(k', s) = \top\right | s = s'] - 1/n
        \right)
        \cdot
        \Pr_{(k,s) \leftarrow \Samp(1^n)}[s = s']
        \\&
        =
        \sum_{s'}
        \left(
            \Pr_{\substack{(k, s) \leftarrow \Samp(1^n) \\
            k'
        \leftarrow \Dkeycond{s}}
        }
        \left[\Ver(k', s) = \top\right | s = s'] 
        \Pr_{(k,s) \leftarrow \Samp(1^n)}[s = s']
        \right)
        -
        (1/n)
        \cdot
        \Pr_{(k,s) \leftarrow \Samp(1^n)}[s = s']
        \\&
        =
            \Pr_{\substack{(k, s) \leftarrow \Samp(1^n) \\
            k'
        \leftarrow \Dkeycond{s}}
        }
        \left[\Ver(k', s) = \top\right] 
        -
        1/n
        \geq
        1-1/n-\negl(n)
        > 1 - 2/n.
    \end{align*}
    This completes our argument.
\end{proof}

Now the desired result follows by combining \Cref{thm:pp-break-owpuzz} with
\Cref{thm:owsg-to-owpuz} (Theorem 4.2 of \cite{dakshita23commitments}).

\begin{corollary}
    \label{cor:pp-break-owsg}
    For any OWSG $G$ of
    $n$-qubit states with key space $\calk$
    and with security parameter $\lambda$, there exists a $\PP$
    language $\mathcal{L}$, a $\textsf{poly}(\lambda)$-time quantum
    algorithm $\mathcal{A}^\mathcal{L}$, and $t = \textsf{poly}(\lambda)$
    such that $$\E_
    {   \substack{k \leftarrow \calk \\ 
        k' \leftarrow \mathcal{A}^{\mathcal{L}}\left(\ket{\phi_k}^{\otimes t}\right)}
    }
    \left[
    \abs{\braket{\phi_k |\phi_{k'}}}^2
    \right] \geq \frac{1}{2}.$$
\end{corollary}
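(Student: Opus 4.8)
The plan is to derive the attack by composing the two results of this section --- \Cref{thm:pp-break-owpuzz} and the one-way-puzzle reduction of \Cref{thm:owsg-to-owpuz}. First I would note that $G$, being one-way in the strong sense of \Cref{def:owsg}, is in particular a $(O(n),\negl(n))$-OWSG: all the relevant lengths are polynomial in $\lambda$, and $(\lambda^c,1/\lambda^c)$-one-wayness for large enough $c$ downgrades to $(O(n),\negl(n))$-one-wayness, since handing the inverter fewer copies only weakens it. Hence \Cref{thm:owsg-to-owpuz} applies and produces a one-way puzzle $(\Samp,\Ver)$ whose security comes with a \emph{black-box} reduction: there is a fixed QPT ``wrapper'' $R$, making no oracle calls of its own, such that for any algorithm $\mathcal{B}$ winning the puzzle game (even an oracle-aided one), $R^{\mathcal{B}}$ wins the one-wayness game for $G$ with comparable advantage, using $t=\poly(\lambda)$ copies of $\ket{\phi_k}$.

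Next I would apply \Cref{thm:pp-break-owpuzz} to $(\Samp,\Ver)$, obtaining a language $\mathcal{L}\in\PP$ and a $\poly$-time quantum algorithm $\mathcal{B}^{\mathcal{L}}$ that, given a puzzle $s$, returns $k'$ with $\Ver(k',s)=\top$ with probability at least $1/2$ --- indeed with probability $1-\negl(n)$ if one instantiates \Cref{lem:sample-bit-by-bit} with total-variation error $1/n^c$ in place of $1/n$. Plugging $\mathcal{B}^{\mathcal{L}}$ into the wrapper, the composed algorithm $\mathcal{A}^{\mathcal{L}}:=R^{\mathcal{B}^{\mathcal{L}}}$ is $\poly(\lambda)$-time, queries only the $\PP$ oracle $\mathcal{L}$ (and only classically), uses $t=\poly(\lambda)$ copies of $\ket{\phi_k}$, and --- because the reduction transfers the puzzle-breaker's success up to a negligible loss --- achieves $\E_{k,\,k'\leftarrow\mathcal{A}^{\mathcal{L}}(\ket{\phi_k}^{\otimes t})}[\,\abs{\braket{\phi_k|\phi_{k'}}}^2\,]\ge 1/2$. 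This, together with \Cref{thm:owsg-to-owpuz}, is the corollary.

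I expect the one genuinely delicate point to be this last compositional step: one must confirm that the proof of \Cref{thm:owsg-to-owpuz} is honestly black-box and relativizes, so that substituting the $\PP$-oracle puzzle-breaker into $R$ yields an OWSG-inverter whose only nonstandard resource is a single classical $\PP$ oracle, with winning probability that stays non-negligible (and, with the error tuning above, close to $1$). If one preferred not to rely on the tightness of that reduction, a polynomial loss would only weaken the constant $1/2$ in the statement, not the qualitative fact that the OWSG is broken; it could anyway be absorbed by a standard boosting step that merely re-runs $\mathcal{A}^{\mathcal{L}}$ and hence preserves its resource profile. The remaining work is bookkeeping --- fixing the puzzle's security parameter polynomially in $\lambda$, and checking that the copy count and running time stay polynomial.
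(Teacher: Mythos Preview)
Your proposal is correct and follows essentially the same approach as the paper: invoke the black-box nature of the reduction in \Cref{thm:owsg-to-owpuz} (the Khurana--Tomer construction) and plug the $\PP$-oracle puzzle-breaker of \Cref{thm:pp-break-owpuzz} into it. The paper's proof is terser and simply asserts that the reduction of~\cite{dakshita23commitments} relativizes, whereas you spell out more carefully why this is the key point and how to handle the quantitative bookkeeping; but the underlying argument is the same.
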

\begin{proof}
    The proof of Theorem 4.2 of~\cite{dakshita23commitments}
    shows that, for every one-way state generator
    $G$, 
    there exists
    a one-way puzzle
    $P = (\Samp, \Ver)$
    such that, if 
    there exists a
    quantum polynomial-time
    algorithm $\cala$
    that breaks 
    $P$,
    then
    there exists a
    quantum polynomial-time
    algorithm $\cala^*$ that breaks
    $G$.
    The corollary then follows by
    plugging the 
    algorithm of 
    \Cref{thm:pp-break-owpuzz} for $P$
    -- 
    a
    polynomial-time quantum algorithm with a $\PP$ oracle that breaks
    $P$ 
    --
    into their reduction.
\end{proof}

\section*{Acknowledgements}

This work was developed when B. P. Cavalar was a PhD student at the
University of Warwick.
B. P. Cavalar acknowledges support from
the Chancellor's International Scholarship of the University of Warwick
and the Royal Society University Research Fellowship
URF\textbackslash R1\textbackslash211106.
A. Pelecanos was supported by DARPA under Agreement No. HR00112020023.
E. Goldin was supported by a National Science Foundation Graduate Research Fellowship.
This work was done in part while the authors were visiting 
the Simons Institute for the Theory of Computing.

\bibliographystyle{quantum}	
\bibliography{main}

\end{document}